\newcommand{\textQMCdProb}{\textsc{Quantum Max-\(d\)-Cut}\xspace}
\newcommand{\QMCdProb}{\textsc{QMax-\(d\)-Cut}}
\newcommand{\textQMCProb}{\textsc{Quantum Max-Cut}\xspace}
\newcommand{\textMCdProb}{\textsc{Max-\(d\)-Cut}\xspace}
\newcommand{\textMCProb}{\textsc{Max-Cut}\xspace}
\title{Approximation Algorithms for \texorpdfstring{\textQMCdProb}{Quantum Max-d-Cut}}
\let\truehypersetup\hypersetup
\renewcommand\hypersetup[1]{}
\let\hypersetup\truehypersetup
\patchcmd\maketitle{\def\@makefnmark{\rlap{\@textsuperscript{\normalfont\@thefnmark}}}}{}{}{}
\def\thanksAAffil#1{%
  \footnotemarkAAffil\protected@xdef\@thanks{\@thanks%
        \protect\footnotetextAAffil[\the \c@footnoteAAffil]{#1}}%
}
\def\thanksANote#1{%
  \footnotemarkANote%
  \protected@xdef\@thanks{\@thanks%
        \protect\footnotetextANote[\the \c@footnoteANote]{#1}}%
}
\author{%
    Charlie Carlson%
    \thanksAAffil{University of California, Santa Barbara}$^{,}$\thanksANote{charlieannecarlson@ucsb.edu}%
    \and%
    Zackary Jorquera%
    \thanksAAffil{University of California, Santa Cruz}$^{,}$\thanksANote{\{zjorquer,akolla,skordono,swayland\}@ucsc.edu}%
    \and%
    Alexandra Kolla%
    \footnotemarkAAffil[2]$^{,}$\footnotemarkANote[2]%
    \and%
    Steven Kordonowy\footnotemarkAAffil[2]$^{,}$\footnotemarkANote[2]%
    \and%
    Stuart Wayland\footnotemarkAAffil[2]$^{,}$\footnotemarkANote[2]%
}
\date{}
\begin{document}

\maketitle

\begin{abstract}
	We initiate the algorithmic study of the \textQMCdProb problem, a quantum generalization of the well-known \textMCdProb problem. The \textQMCdProb problem involves finding a quantum state that maximizes the expected energy associated with the projector onto the antisymmetric subspace of two, \(d\)-dimensional qudits over all local interactions. Equivalently, this problem is physically motivated by the \(SU(d)\)-Heisenberg model, a spin glass model that generalized the well-known Heisenberg model over qudits. We develop a polynomial-time randomized approximation algorithm that finds product-state solutions of mixed states with bounded purity that achieve non-trivial performance guarantees. Moreover, we prove the tightness of our analysis by presenting an algorithmic gap instance for \textQMCdProb with \(d \geq 3\).
\end{abstract}

\section{Introduction}

The quantum Heisenberg model is a family of spin glass Hamiltonians defined by nearest-neighbor interactions \cite{heisenberg_zur_1928}. This model, especially the antiferromagnetic variant, is well-studied in condensed matter physics \cite{heisenberg_zur_1928,bloch_zur_1930,bethe_zur_1931,lieb_ordering_1962,arovas_functional_1988,fields_renormalization_2008} and has recently gained attention in computer science since it can be seen as a quantum generalization of the \textMCProb problem. The quantum Heisenberg model is also commonly referred to as \textQMCProb in the literature \cite{gharibian_almost_2019,anshu_beyond_2020,parekh_application_2021,parekh_beating_2021,hwang_unique_2022,king_improved_2022,lee_optimizing_2022,parekh_optimal_2022,watts_relaxations_2023,takahashi_su2-symmetric_2023}. In this work, we consider a generalization of the quantum Heisenberg model, known as the $SU(d)$-Heisenberg model \cite{kawashima_ground_2007,beach_su_2009,lou_antiferromagnetic_2009}, or \textQMCdProb, that deals with interactions of spins with local Hilbert space of dimension $d$. Similarly to \textQMCProb, \textQMCdProb can be seen as the quantum generalization of \textMCdProb. Additionally, this model is known to be universal and \cc{QMA}-hard to optimize \cite{piddock_universal_2021}.

There has been a large body of literature recently that focuses on finding classical approximation algorithms for \textQMCProb \cite{gharibian_almost_2019,anshu_beyond_2020,parekh_application_2021,king_improved_2022,lee_optimizing_2022,parekh_optimal_2022,takahashi_su2-symmetric_2023}, while not much is known for \textQMCdProb. In this paper, we initiate a systematic algorithmic study of \textQMCdProb on general graphs and provide algorithmic results summarized in \crefrange{thm_1}{thm_2} in the ``Our Results" section below. 

Our work on the \textQMCdProb problem can be seen as a quantum extension to a long line of research on Generalized Constraint Satisfaction Problems (GCSPs), which are central to classical computer science. In a GCSP, there is a set of input variables $\mathcal{V} = \{x_1, \dots, x_n\}$ taking values over a finite domain $[d] = \{1,\dots,d\}$ and a distribution, \(\mathcal{P}\), over a set of payoff functions $\{P_1, \dots, P_m\}$ where each payoff function $P_j: [d]^k \ra [-1,1]$ specifies how a subset of the variables interact with each another. The objective is to maximize the expected payoff. It is common to fix the arity, the maximum number of inputs to each payoff function, to some value \(k\) (often denoted with \(k\)-GCSP). Finding the optimal assignment to a GCSP is usually computationally intractable, so one relaxes to the optimization version: find an assignment that achieves the largest expected payoff possible. An algorithm is an $\alpha$-approximation for the GCSP if it can output a solution that is guaranteed to achieve a solution with value $\geq \alpha \text{OPT}$, where OPT is the maximum expected payoff. 
Every GCSP admits the canonical $\alpha$-approximation algorithm using a relaxation to a semidefinite program (SDP) followed by a ``rounding" to a solution in the original space \cite{raghavendra_optimal_2008,raghavendra_approximating_2009}. The goal is to use the SDP solution's value as a way to bound the performance of the rounded solution. Assuming the unique games conjecture (UGC), it is NP-hard to improve upon this algorithm in general for GCSPs \cite{raghavendra_optimal_2008}.

A popular class of GCSPs can be phrased as coloring problems on graphs. Given $G = (V,E)$, the goal is to find an assignment $\tau : V \rightarrow [d]$ that maximizes the size of the set $\{(u,v) \in E\ |\ \tau(u) \neq \tau(v)\}$. The simplest non-trivial case of $d=2$ is called \textMCProb, a standard NP-complete problem \cite{karp_reducibility_1972}. The gold standard algorithm for \textMCProb was provided by Goemans and Williamson and is a $0.878$-approximation algorithm that relies on SDPs and random hyperplane rounding \cite{goemans_improved_1995}. This algorithm is the basis for the aforementioned canonical GCSP algorithm, so improving on this bound is NP-hard (assuming UGC) \cite{khot_optimal_2007,mossel_noise_2010}. Even without UGC, it is known that it is NP-hard to do better than a $16/17$-approximation \cite{hastad_optimal_2001,trevisan_gadgets_2000}. An extension to \textMCProb is \textMCdProb, in which the domain is increased to some $d \geq 2$. A simple randomized algorithm for \textMCdProb results in a $(1-1/d)$-approximation. Frieze and Jerrum extend \cite{goemans_improved_1995} to an algorithm that additively improves on the randomized bound by $\Theta\left(\frac{\ln{d}}{d^2}\right)$ \cite{frieze_improved_1997}. Other attempts at generalizing Goemans and Williamson to $d > 2$ also produce similar approximation guarantees \cite{raghavendra_optimal_2008}.

The quantum analog of a $k$-GCSP is the $k$-local Hamiltonian problem ($k$-LHP): the variables are qudits, and payoffs are represented by local observables, $\{h_1, \dots, h_m\}$, each acting non-trivially over $\leq k$ qudits. The $\{h_\alpha\}$ terms are known as the local Hamiltonians and it is common to assume they are all PSD \cite{gharibian_approximation_2012}. The problem Hamiltonian is given by $H = \E_\alpha h_\alpha$ and the task is finding the maximum eigenvalue $\lambda_{\text{max}}(H)$ \cite{gharibian_approximation_2012}.\footnote{Often, $k$-LHPs concern themselves with finding the ground state or $\lambda_{\text{min}}(H)$. However, as was done in \cite{gharibian_approximation_2012}, we take the $k$-LHP to be a maximization problem.} This framework of $k$-LHPs encapsulates many different physically motivated models, such as estimating the energy of quantum many-body systems, and thus has become an important problem to research. The $k$-LHP is QMA-hard to solve in general \cite{kempe_complexity_2006} and even in the case in which the local Hamiltonians interact on pairs of qudits (referred to as 2-local) \cite{cubitt_complexity_2014}. We focus our attention on solving these problems classically, which begs the question of how to efficiently represent a generic quantum solution to a $k$-LHP since quantum states can require $O(d^n)$ bits of information to write down in general. This problem is typically circumvented by only considering an efficiently representable subset of quantum states called an \emph{ansatz}. A very common ansatz is that of \emph{product-states} in which no entanglement is present and can be written using $O(poly(n,d))$ bits. Even with this vast simplification, algorithms outputting non-trivial solutions are still possible \cite{gharibian_approximation_2012,brandao_product-state_2016,parekh_beating_2021}. Unsurprisingly, allowing some entanglement allows one to improve on these bounds \cite{anshu_beyond_2020,parekh_application_2021,king_improved_2022,lee_optimizing_2022}. Nonetheless, it has been shown that product-state solutions are sufficiently good approximations of the optimal solution for dense graphs \cite{brandao_product-state_2016}.

The natural ``quantum" generalization of \textMCProb, known as \textQMCProb, can be seen as replacing assignments of colors to assignments of possibly entangled qubits. Then the notion of two assignments being different, the payoff function, is replaced by the energy of 2-local subsystems of qubits on the projector onto the antisymmetric subspace, which for two qubits is given by the projector onto the singlet state, \(\ket{\psi^-} = \frac{1}{\sqrt{2}}\ket{01} - \frac{1}{\sqrt{2}} \ket{10}\). Moreover, the natural ``quantum" generalization to \textMCdProb, known as \textQMCdProb, can be seen by replacing the \(d\) colors, \([d]\), with \(d\)-dimensional qudits and replacing the payoff with a projector onto the antisymmetric subspace of 2, \(d\)-dimensional qudits, which is the rank \(\binom{d}{2}\) projector onto the states \(\{\frac{1}{\sqrt{2}}\ket{ab} - \frac{1}{\sqrt{2}} \ket{ba}\ |\ a < b \in [d]\}\). We hope that our work on \textQMCdProb can help in the research of LHPs over qudits as \textMCProb and \textMCdProb did in the research of GCSPs. Moreover, with the increasing interest in high-dimensional quantum computing, i.e., quantum computers that use qudits \cite{luo_universal_2014,wang_qudits_2020}, optimization over qudits has become more of interest. 

\subsubsection*{Previous work}

Approximating ground states to \(k\)-LHPs over qubits with classical algorithms has a rich body of research, especially for the \textQMCProb problem. However, approximating ground states to \(k\)-LHPs over qudits is less rich. Gharibian and Kempe gave a polynomial-time approximation scheme (PTAS) for computing product-state solution to dense \(k\)-LHPs \cite{gharibian_approximation_2012}; their algorithm admits a \(d^{1-k}\)-approximation to the energy of the optimal product-state solution. After that, Brand\~ao and Harrow gave multiple PTASes for \(k\)-LHPs over qudits for planar graphs, dense graphs, and low threshold rank graphs \cite{brandao_product-state_2016}. 

Of particular interest to this paper is the work in finding product-state solutions to the \textQMCProb problem. In particular, the SDP-based algorithm by Bri\"et, de Olivera Filho, and Vallentin \cite{briet_positive_2010} gives a PTAS with an approximation ratio of 0.956 to the optimal product-state solution \cite{hwang_unique_2022}. Then, subsequent work by Gharibian and Parekh \cite{gharibian_almost_2019} give a similar PTAS based on an SDP derived from the noncommutative sum-of-squares (ncSoS) hierarchy, with an approximation ratio of 0.498 to the best (possibly entangled) state. Their rounding algorithm used the theory of the Bloch sphere, a unit sphere, \(S^2\), that corresponds bijectively to valid pure state density matrices. In this paper, we extend these results to the \textQMCdProb problem.

\subsubsection*{Our Results}

We present a noncommutative sum-of-squares SDP-based randomized approximation algorithm for the \textQMCdProb problem that finds a product-state solution of mixed states with bounded purity. We give the following theorem to describe its approximation ratio.

\begin{theorem}[Approximation Ratios For Approximating \textQMCdProb]\label{thm_1} There exists an efficient approximation algorithm for \textQMCdProb that admits an \(\alpha_d\)-approximation, where the constants \(\alpha_d\) (for \(d \geq 2\)) satisfy, 
\begin{enumerate}[label=(\roman*)]
    \item \label{thm_1_pt_1} \(\alpha_d > \frac{1}{2}\left(1-1/d\right)\)
    \item \label{thm_1_pt_2} \(\alpha_d - \frac{1}{2}\left(1-1/d\right) \sim \frac{1}{2 d^3}\)
    \item \label{thm_1_pt_3} \(\alpha_2 \geq 0.498767,\alpha_3 \geq 0.372995,\alpha_4 \geq 0.388478,\alpha_5 \geq 0.406128,\alpha_{10} \geq 0.450614,\alpha_{100} \geq 0.4950005\)
\end{enumerate}
\end{theorem}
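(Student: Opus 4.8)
The plan is to follow the SDP-rounding paradigm used by Gharibian--Parekh for \textQMCProb \cite{gharibian_almost_2019}, but adapted to qudits. First I would set up a noncommutative sum-of-squares relaxation: for each qudit $v$ introduce a collection of Hermitian ``Bloch-vector'' variables encoding the $d^2-1$ generators of $SU(d)$ (a generalized Gell-Mann basis), and write the energy of the antisymmetric projector on an edge $(u,v)$ as an affine function of the inner products of the Bloch vectors of $u$ and $v$. The ncSoS (degree-$2$) hierarchy gives a PSD moment matrix over these variables; solving it in polynomial time produces, for every edge, a $2\times 2$ (or constant-size) Gram matrix whose entries are consistent with genuine single-qudit reduced density matrices. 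The SDP value is an upper bound on $\mathrm{OPT}$ (indeed on the best entangled state), so it suffices to round it to a product state whose energy is at least $\alpha_d$ times the SDP value.

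Next I would describe the rounding. The key structural fact is the bijection between pure qudit states and points on a sphere fails for $d \geq 3$, so instead I would round each qudit to a \emph{mixed} state of bounded purity: take the SDP vectors, project them onto a random low-dimensional subspace (Gaussian projection, as in Brit\"et--de Oliveira Filho--Vallentin), rescale, and reinterpret the resulting vector as the Bloch vector of a mixed single-qudit density matrix $\rho_v = \frac{1}{d}I + c\sum_i x_{v,i} \lambda_i$ for a constant $c$ chosen so that $\rho_v \succeq 0$ always holds. Then the expected energy on an edge becomes $\frac{1}{2}(1-1/d)$ (the ``classical randomized'' baseline, coming from the $\frac{1}{d}I$ part and the trace of the projector) plus a correction term proportional to $-\langle \rho_u, \rho_v^{T}\rangle$ or the analogous inner product, which after taking expectations over the random projection is a concave-then-convex function of the SDP inner product — exactly the shape handled by the Frieze--Jerrum / Goemans--Williamson style ratio analysis.

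To extract the three conclusions: (i) would follow because the correction term is strictly positive whenever the SDP gains anything over the trivial bound, and one checks the worst-case ratio over all feasible edge configurations is $> \frac{1}{2}(1-1/d)$; this is a finite one-dimensional optimization (over the SDP inner-product value) once the rounding function is fixed, so I would reduce $\alpha_d$ to $\min$ of an explicit function and argue the minimum is strictly above the baseline. For (ii), the asymptotic $\alpha_d - \frac{1}{2}(1-1/d) \sim \frac{1}{2d^3}$, I would Taylor-expand the rounding gain in $1/d$: the leading improvement comes from the variance of the Gaussian-projected inner product, which scales like $1/d^2$ relative to a prefactor $1/d$, giving the $1/d^3$ order; pinning down the constant $\frac12$ requires carefully tracking the normalization constant $c$ and the second moment of the projection. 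For (iii), the explicit numbers, I would solve the one-dimensional optimization numerically for each listed $d$ (optimizing also over the dimension of the random projection and the purity parameter $c$), and the value $\alpha_2 \geq 0.498767$ should reproduce the Gharibian--Parekh constant as a sanity check.

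\medskip

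\noindent\textbf{Main obstacle.} I expect the hard part to be the rounding-function analysis for $d \geq 3$: unlike the qubit case, where the Bloch sphere is the full pure-state space, here a random vector on the sphere need not correspond to a valid density matrix, so the projection-and-rescale step must be engineered so that positivity holds with probability $1$ while still preserving enough of the SDP correlation to beat the baseline — and then proving the resulting worst-case ratio is exactly the claimed $\alpha_d$ (rather than something smaller) requires showing the minimizing configuration is the ``antipodal'' one and controlling the $\arccos$-type integral that appears after Gaussian rounding. Establishing the clean $1/(2d^3)$ asymptotic, with the correct constant, is the second delicate point.
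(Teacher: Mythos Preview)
Your plan is essentially the paper's approach: degree-two ncSoS over the Gell-Mann basis, Gaussian projection rounding \`a la Bri\"et--de Oliveira Filho--Vallentin onto Bloch vectors of mixed states on the insphere, and a one-dimensional worst-case analysis over the SDP inner product. A few points where your outline is loose compared to what the proof actually needs: (1) the bare level-two ncSoS is \emph{not} enough for $d\ge 3$ --- it overshoots the trivial bound of $1$ on small cliques --- so you must additionally enforce that the two-body marginals $\rho_{uv}$ are genuine density matrices, which is what produces the Frieze--Jerrum-type constraint $\langle u|v\rangle \ge -\tfrac{1}{d-1}$ on the stacked SDP vectors; (2) the projection dimension and purity constant are not free parameters to optimize over but are forced: you project to $\mathbb{R}^{d^2-1}$ and rescale to the insphere radius $\tfrac{1}{\sqrt{2d(d-1)}}$, equivalently to purity $\tr(\rho_v^2)=\tfrac{1}{d-1}$, since that is the largest ball sitting entirely inside the valid-Bloch-vector region; (3) the ``antipodal'' minimizer you anticipate is exactly the endpoint $\gamma=-\tfrac{1}{d-1}$ for all $d\ge 3$, proved by showing the ratio is monotone in $\gamma$ via the derivative formula for ${}_2F_1$ and Gautschi's inequality on the Gamma ratio --- there is no $\arccos$ here, the Bri\"et et al.\ lemma gives the expected rounded inner product directly as $F^*(d^2-1,\gamma)$ in terms of ${}_2F_1$; (4) for the asymptotic in (ii), your heuristic is not quite the mechanism: plugging $\gamma=-\tfrac{1}{d-1}$ into the closed form yields $\alpha_d-\tfrac12(1-\tfrac1d)=\tfrac{1}{(d-1)^3d(d+1)}\bigl(\Gamma(d^2/2)/\Gamma((d^2-1)/2)\bigr)^2\cdot{}_2F_1(\cdots)$, the hypergeometric factor is $1+O(d^{-4})$, and Stirling gives $\bigl(\Gamma(d^2/2)/\Gamma((d^2-1)/2)\bigr)^2\sim d^2/2$, so the product is $\sim \tfrac{1}{2d^3}$.
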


\begin{remark}
    As a heuristic to judge our algorithm, we can observe that the energy achieved by the maximally mixed state over all vertices is no worse than \(\frac{1}{2}(1-1/d)\) times the optimal. \cref{thm_1}, \cref{thm_1_pt_1,thm_1_pt_2} show that our algorithm performs better than this trivial assignment for all \(d \geq 2\). Similar metrics are used classically, where a random assignment gives a \(1-1/d\) approximation in expectation for \textMCdProb. In this, we note the Frieze-Jerrum algorithm achieves a better additive bound of \(\Theta\left(\frac{\ln d}{d^2}\right)\) above \(1-1/d\) \cite{frieze_improved_1997}.
\end{remark}

We note that the when \(d=2\) case (i.e., \textQMCProb) our algorithm is identical to the Gharibian-Parekh algorithm. For all other cases, \(d \geq 3\), we show that our analysis is tight by providing an algorithmic gap that matches these ratios. While not stated here, we note that the exact values of \(\alpha_d\) for \(d \geq 3\) are given by an analytical formula.

\begin{theorem}[Algorithmic gap of \textQMCdProb]\label{thm_alg_gap} The approximation algorithm for \textQMCdProb that rounds to mixed product-states using the basic SDP has algorithmic gap \(\alpha_d\) for \(d \geq 3\).
\end{theorem}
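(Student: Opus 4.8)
The plan is to exhibit, for each $d\ge 3$, a concrete instance of \textQMCdProb on which the basic ncSoS SDP used by our algorithm has optimum value bounded away from the true optimum by exactly the factor $\alpha_d$ appearing in \cref{thm_1}; since $\alpha_d$ is already a lower bound on the approximation ratio, matching it from above with a single instance establishes tightness. The natural candidate is the complete graph $K_n$ (or a vertex-transitive expander-like family), because symmetry collapses the SDP to a low-dimensional problem whose value we can compute in closed form. First I would write down the basic SDP relaxation explicitly: variables are the $d^2-1$-dimensional Bloch-type vectors $\{v_i\}$ associated to the (mixed, bounded-purity) single-qudit reduced states, together with the Gram constraints coming from the ncSoS level, and I would record the exact expression for the SDP objective on an edge in terms of the inner product $\langle v_i, v_j\rangle$. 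By symmetrizing an optimal SDP solution on $K_n$ (averaging over the automorphism group, which preserves feasibility and the objective by convexity) I would reduce to the case where all pairwise inner products equal a single scalar $\rho$, so the SDP value becomes an explicit univariate function of $\rho$ that we optimize.

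Second, I would compute the integrality/algorithmic gap ratio on this instance: the numerator is the value our rounding algorithm is \emph{guaranteed} to extract from the symmetrized SDP solution (which, by the analysis behind \cref{thm_1}, is exactly $\alpha_d$ times the SDP value in the worst case, with the worst case attained at a specific inner-product configuration), and the denominator is the true optimal energy $\lambda_{\max}(H)$ of the $SU(d)$-Heisenberg Hamiltonian on that graph. For $K_n$ the true optimum is controlled by known facts about the antiferromagnetic $SU(d)$-Heisenberg model / the spectrum of the projector onto the antisymmetric subspace summed over all pairs; I would either cite the exact ground-state energy for the complete graph or bound it tightly enough (e.g., via the maximally mixed / fully symmetric state, or via a Lieb-Mattis type argument) that the ratio pins down to $\alpha_d$. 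The point is that the single-edge worst-case configuration that makes the algorithm's analysis tight can be realized \emph{globally} on a vertex-transitive graph, so no slack is lost between the per-edge bound and the instance-wide bound.

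Third, I would verify that the extremal configuration is genuinely achievable: the Bloch vectors realizing the worst-case inner product $\rho$ must correspond to legitimate bounded-purity mixed single-qudit states, and the edge-wise tight rounding behavior must compose consistently across all $\binom{n}{2}$ edges of $K_n$ — this is where vertex-transitivity is doing the real work, since it forces every edge to sit at the same point of the analysis. I would also confirm the $d\ge 3$ restriction is necessary here exactly as in the theorem statement: for $d=2$ the algorithm coincides with Gharibian–Parekh and the relevant worst-case angle does not produce a matching global instance (the Bloch sphere geometry is too rigid), consistent with the theorem excluding $d=2$.

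The main obstacle I anticipate is the denominator: nailing down the \emph{exact} optimal energy of the $SU(d)$-Heisenberg Hamiltonian on the chosen instance, rather than merely a good bound, is what makes the gap tight rather than approximate. If the complete graph's ground-state energy is not available in closed form to the needed precision, the fallback is to design a weighted or bipartite-like gadget graph (still vertex- or edge-transitive) on which the optimum is computable — e.g., a single weighted edge or a complete bipartite graph $K_{n,n}$ where a product of two maximally-entangled-within-parts states or a classical-like proper coloring achieves a clean value — and then argue the SDP cannot distinguish it, inheriting the $\alpha_d$ gap from the per-edge analysis. Matching the analytical formula for $\alpha_d$ mentioned after \cref{thm_1} on the nose will require carefully tracking which inner-product value the rounding analysis identifies as worst-case and checking it is the same value that the symmetrized SDP selects on the gap instance.
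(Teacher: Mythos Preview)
Your outline matches the paper's approach in spirit: exhibit a vertex-transitive instance on which every edge of the optimal SDP solution sits at the worst-case inner product, so the per-edge tightness becomes global tightness. But the proposal has a genuine gap at exactly the point you flag as ``the main obstacle,'' and the paper's resolution of that obstacle is the key idea you are missing.

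The paper does not use $K_n$ for general $n$; it uses $K_d$, the complete graph on exactly $d$ vertices. On $K_d$ the \emph{completely antisymmetric state} $\ket{\Psi} = \frac{1}{\sqrt{d!}}\sum_{\sigma\in S_d}\sgn(\sigma)\ket{\sigma(1)\cdots\sigma(d)}$ exists (it requires $n\le d$ and is one-dimensional precisely when $n=d$), and it achieves energy $1$ on every edge simultaneously, so $\QMCdProb(K_d)=1$ exactly. This kills your denominator problem with no Lieb--Mattis machinery and no bounds: the optimum is $1$ on the nose. Then, because the basic SDP (with the two-body marginal constraints) is a relaxation and is also bounded above by $1$ (\cref{prop_sdp_trivial_up_bound}), the SDP value on $K_d$ is forced to equal $1$, which in turn forces $\braket{u|v}=-\frac{1}{d-1}$ on every edge --- precisely the bad angle from \cref{lemma_bad_angle}. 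Plugging into the $F^*$ formula gives expected rounded value $\alpha_d$, and the gap is $\alpha_d/1=\alpha_d$.

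Without the specific choice $n=d$, your symmetrization-and-optimize-over-$\rho$ program does not close: for $n\ne d$ the true optimum is not $1$ in general, the SDP need not saturate the bad angle uniformly, and your fallback gadgets would have to re-derive both the SDP value and the true optimum from scratch. The paper's proof is essentially three lines once you have $K_d$ and the antisymmetric state; everything you propose about symmetrizing the SDP, verifying feasibility of the extremal Bloch configuration, and worrying about compositionality across edges becomes unnecessary.
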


This is rather interesting as for \textQMCProb, there is no known hard instance that gives an algorithmic gap of \(\alpha_2\) \cite{hwang_unique_2022}.

Additionally, we present an SDP-based algorithm in approximating the optimal product-state solution of \textQMCdProb. We give the following theorem to describe its approximation ratio.

\begin{theorem}[Approximation Ratios For Approximating The Optimal product-state Solution of \textQMCdProb]\label{thm_2} \textQMCdProb admits an \(\beta_d\)-approximation to the optimal product-state solution with respect to the basic SDP, where the constants \(\beta_d\) (for \(d \geq 2\)) satisfy, 
\begin{enumerate}[label=(\roman*)]
    \item\label{thm_2_pt_1} \(\beta_d = 2 \alpha_d\) for \(d \geq 3\)
    \item\label{thm_2_pt_2} \(\beta_2 \geq 0.956337\) \cite{briet_positive_2010,hwang_unique_2022}
\end{enumerate}
\end{theorem}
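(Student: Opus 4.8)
The plan is to derive both parts from the analysis already set up for \cref{thm_1} together with one new integrality-gap-type inequality. Part \ref{thm_2_pt_2} ($d=2$) needs no new work: at $d=2$ the basic SDP coincides with (or tightens) the relaxation of the product-state \QMCProb problem studied by Bri\"et, de Oliveira Filho, and Vallentin, and its rounding was shown to be a $0.956337$-approximation to the optimal product state by \cite{hwang_unique_2022}; we simply cite this.

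For part \ref{thm_2_pt_1} ($d\ge3$) the algorithm is the one from \cref{thm_1}: solve the basic SDP, obtaining value $\mathrm{SDP}$, then round. The proof of \cref{thm_1} already shows the rounded state is a bounded-purity product state of expected energy at least $\alpha_d\,\mathrm{SDP}$, so it suffices to prove the new claim that, for $d\ge3$,
\[
\mathrm{OPT}_{\mathrm{prod}}\;\le\;\tfrac12\,\mathrm{SDP},
\]
where $\mathrm{OPT}_{\mathrm{prod}}$ is the optimal product-state energy: then the rounded energy is at least $\alpha_d\,\mathrm{SDP}\ge 2\alpha_d\,\mathrm{OPT}_{\mathrm{prod}}=\beta_d\,\mathrm{OPT}_{\mathrm{prod}}$. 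To prove the claim I would first reduce to \emph{pure} product states --- the energy is affine in each single-site density matrix with the rest held fixed, so an optimal product assignment can be taken with every $\rho_v=|\psi_v\rangle\langle\psi_v|$ pure, in which case the energy of edge $(u,v)$ is $\tfrac12\bigl(1-|\langle\psi_u|\psi_v\rangle|^2\bigr)\le\tfrac12$. The point is that a product configuration can only use half of the per-edge range available to the relaxation, whose feasible set contains the $2$-local data of the totally antisymmetric state (energy $1$ per edge). The real work is promoting this per-edge factor of two to a global one: from the pure product assignment I would build an explicit feasible point of the basic SDP of objective at least $2\,\mathrm{OPT}_{\mathrm{prod}}$ --- essentially, push the edge correlations outward and check that the resulting moment matrix still satisfies the PSD and consistency constraints of the basic SDP. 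This feasibility check is the main obstacle, and it is precisely where $d\ge3$ is used; no global bound of this form can hold at $d=2$, where the product-state integrality gap of the basic SDP is already $\ge 0.956>\tfrac12$ \cite{hwang_unique_2022}, which is why the qubit case must be routed through part \ref{thm_2_pt_2}. (Should a clean global $\tfrac12$-bound resist proof, the fallback is an edge-wise version: write both $\mathrm{OPT}_{\mathrm{prod}}$ and the rounding guarantee as sums over edges of functions of the optimal SDP correlations, and verify the resulting ratio is $2\alpha_d$.)

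Finally, for the equality $\beta_d=2\alpha_d$ (not just $\ge$) I would supply a matching instance: on the algorithmic gap instance of \cref{thm_alg_gap} (or a small variant) the rounding attains only $\alpha_d\,\mathrm{SDP}$ while $\mathrm{OPT}_{\mathrm{prod}}=\tfrac12\,\mathrm{SDP}$, so the ratio is exactly $2\alpha_d$. Assembling these three steps gives \cref{thm_2}.
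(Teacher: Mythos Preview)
Your route diverges from the paper's, and the step you flag as ``the main obstacle'' is indeed a genuine gap that the paper never attempts to cross.

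The paper does \emph{not} compare $\mathrm{OPT}_{\mathrm{prod}}$ to the basic SDP of \cref{basicvecprog_def}. Instead it introduces a separate relaxation, the Product-State SDP of \cref{prodsdpprog_def} (which is literally the Frieze--Jerrum SDP scaled by $\tfrac12$), relaxing $\textsc{PureProd}_{\textsc{QMC}_d}$ directly. The quantity $\beta_d$ is then \emph{defined} edge-wise, exactly as $\alpha_d$ was, by
\[
\beta_d \;=\; \min_{-\frac{1}{d-1}\le \gamma < 1}\ \frac{1-\frac{F^*(d^2-1,\gamma)}{(d-1)^2}}{1-\gamma},
\]
the only difference from \eqref{approx_ratio2} being the denominator $1-\gamma$ in place of $1-(d+1)\gamma$. \Cref{lemma_bad_angle,lemma_bad_angle2} show that for $d\ge 3$ both minima are attained at the same bad angle $\gamma=-\frac{1}{d-1}$, and at that point $1-(d+1)\gamma=\frac{2d}{d-1}=2(1-\gamma)$. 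Since the numerators are identical, $\beta_d=2\alpha_d$ drops out as an equality between two explicit formulas; no instance is needed to certify the ``$\le$'' direction.

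Your proposed inequality $\mathrm{OPT}_{\mathrm{prod}}\le \tfrac12\,\mathrm{SDP}_{\mathrm{basic}}$ is a global integrality statement about the basic SDP that is neither proved nor needed. The construction you sketch (``push the edge correlations outward'') must simultaneously respect the PSD constraint on the stacked moment matrix and the per-pair density-matrix constraints \eqref{basicvecprog_loc_moment_and_M}--\eqref{basicvecprog_loc_moment_degree_1_terms2}; it is not clear this can be done for an arbitrary graph, and you give no argument. Your fallback (``write $\mathrm{OPT}_{\mathrm{prod}}$ as a sum over edges of functions of the optimal basic-SDP correlations'') is also off: $\mathrm{OPT}_{\mathrm{prod}}$ is a separate optimization and is not determined by the basic-SDP solution. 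The fix is simply to relax $\mathrm{OPT}_{\mathrm{prod}}$ by its own SDP and run the same edge-wise analysis; the factor of $2$ then appears purely algebraically at the shared bad angle.
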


Furthermore, because we can relate the ratio of approximating the optimal product-state solution with that of approximating the maximal energy of \textQMCdProb by a factor of two, it follows that we also beat the metric of a random product-state solution for approximating the optimal product-state solution.

\begin{corollary}[Beating Random Assignment for Approximating The Optimal Product-State Solution of \textQMCdProb]\label{thm_2_b} \textQMCdProb admits an \(\beta_d\)-approximation to the optimal product-state solution with respect to the basic SDP, where the constants \(\beta_d\) (for \(d \geq 2\)) satisfy, 
    \begin{enumerate}[label=(\roman*)]
        \item \label{thm_2_b_pt_1} \(\beta_d > 1-1/d\)
        \item \label{thm_2_b_pt_2} \(\beta_d - \left(1-1/d\right) \sim \frac{1}{d^3}\)
\end{enumerate}
\end{corollary}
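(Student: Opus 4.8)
The plan is to obtain \cref{thm_2_b} directly from \cref{thm_1,thm_2} by pure arithmetic with the already-established constants; no new rounding analysis is required. For every \(d \geq 3\), \cref{thm_2} supplies the exact identity \(\beta_d = 2\alpha_d\), so both claimed properties follow by scaling the corresponding statements about \(\alpha_d\) by a factor of two. For the first property I would write \(\beta_d = 2\alpha_d > 2 \cdot \frac{1}{2}\left(1-1/d\right) = 1-1/d\), using the lower bound \(\alpha_d > \frac{1}{2}\left(1-1/d\right)\) from \cref{thm_1}. For the second, \(\beta_d - \left(1-1/d\right) = 2\alpha_d - \left(1-1/d\right) = 2\left(\alpha_d - \frac{1}{2}\left(1-1/d\right)\right) \sim 2 \cdot \frac{1}{2 d^3} = \frac{1}{d^3}\), invoking the asymptotic \(\alpha_d - \frac{1}{2}\left(1-1/d\right) \sim \frac{1}{2 d^3}\) from \cref{thm_1} together with the fact that multiplication by the nonzero constant \(2\) preserves the relation \(\sim\).

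It remains to treat \(d = 2\), which must be handled separately since the factor-of-two identity is only claimed for \(d \geq 3\). This case is harmless: \cref{thm_2} gives \(\beta_2 \geq 0.956337 > \frac{1}{2} = 1 - 1/2\), which establishes \cref{thm_2_b_pt_1} for \(d = 2\); and \cref{thm_2_b_pt_2}, being a statement about the limit \(d \to \infty\), says nothing about \(d = 2\). Hence a one-line case split on whether \(d = 2\) or \(d \geq 3\) closes the argument.

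I do not anticipate any real difficulty in writing up \cref{thm_2_b} itself — the substantive work has been pushed into the upstream results. The only point deserving an explicit sentence is the transfer of the asymptotic estimate: since the relevant clause of \cref{thm_1} asserts an equivalence (not merely an \(O(\cdot)\) bound), doubling both sides is immediate and yields exactly the constant \(1\) in the numerator of \(1/d^3\). If instead one wanted a derivation independent of \cref{thm_2}, the genuine obstacle would be re-establishing the factor-two comparison between the basic-SDP product-state optimum and the quantity the algorithm actually rounds against; but with \cref{thm_2} already available, this is unnecessary.
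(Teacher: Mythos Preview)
Your proposal is correct and matches the paper's approach: the paper presents this as an immediate corollary of the factor-of-two identity \(\beta_d = 2\alpha_d\) from \cref{thm_2} combined with \cref{thm_1}, and handles \(d=2\) separately via the numerical bound \(\beta_2 \geq 0.956337\), exactly as you do.
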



In achieving these results, we extend the methods of \cite{briet_positive_2010,gharibian_almost_2019}, which only work for \textQMCProb, to \textQMCdProb for arbitrary \(d \geq 2\) in two major ways. We first present an SDP based on the noncommutative sum-of-squares (ncSoS) Hierarchy for many-body systems over qudits using a generalization of the Pauli matrices called the generalized Gell-Mann matrices. While useful for defining observables, they lose many other useful properties of the Pauli matrices. Namely, they are not unitary (for \(d \geq 3\)), which makes defining the SDP difficult. In particular, arguing that the SDP vectors are unit vectors is not as trivial as observing that \(P^2 = I\), for a Pauli matrix \(P\), which requires \(P\) to be unitary. Our SDP shares many similarities to SDPs that have been used before \cite{barthel_solving_2012,brandao_product-state_2016,parekh_application_2021,parekh_beating_2021}, however, to work with our rounding algorithm we must make additional guarantees about our SDP, which allow us to simplify the SDP further. Second, we observe that the notion of a Bloch sphere does not exist for qudits of dimension \(d \geq 3\) \cite{bertlmann_bloch_2008} and so one can not round to the unit sphere in which the Bloch vectors for pure states lie. To get around this, we round to mixed states with bounded purity.

\subsubsection*{Paper Organization}

To show these results, we use a matrix basis that can be seen as a generalization of the Pauli matrices called the generalized Gell-Mann matrices, which we introduce in \cref{section_mat_basis} along with the Preliminaries in \cref{section_prelims}. This matrix basis will allow us to write the Hamiltonian in a convenient way, allowing for both an SDP relaxation and a Bloch vector representation of quantum states. With this, in \cref{section_qmc_ham}, we outline the main focus of our work, namely the \textQMCdProb problem and many special cases that aid in the analysis of our algorithm. Then, in \cref{prod_state_section}, we discuss the notion of Bloch vectors for qudits and discuss the critical geometric challenges of rounding to product-states that did not exist for qubits. In understanding these challenges, we discuss a resolution that will become the primary inspiration for our rounding algorithm. Following that, in \cref{section_sos}, we derive an SDP relaxation through the second level of the ncSoS hierarchy that also enforces all two body moments to be valid density matrices. Additionally, we make several new observations crucial for our rounding algorithm about the ncSoS hierarchy for higher dimensional qudits. Then, the main technical contribution of our work is in proving \cref{thm_1,thm_2}, which we do in \cref{section_rounding} by presenting our full algorithm and analyzing its approximation ratio for all \(d \geq 2\) using the Gaussian hypergeometric function. Then, we show our analysis is tight by providing an algorithmic gap instance in \cref{section_alg_gap} and proving \cref{thm_alg_gap}. Lastly, we give open problems and future directions in \cref{section_conclusion}.

\section{Preliminaries}\label{section_prelims}

We use the notation \([n] := \{1, \dotsc, n\}\). Let \(S_n\) denote the \emph{symmetric group} over the set \([n]\), which has order \(n!\) and \(A_n := \{\sigma \in S_n\ |\ \sgn(\sigma) = 1\}\) denote the \emph{alternating group}, which has order \(n!/2\). Let $\mathcal{H}_d := \mathbb{C}^d$ be a $d$-dimensional Hilbert Space, where a \emph{\(d\)-dimensional qudit} or a \emph{state} refers to a unit vector in $\mathcal{H}_d$. When \(d=2\), we call these states \emph{qubits}. In the case of qubits, it is common to represent the \textit{standard basis} as $\{\ket{0}, \ket{1}\}$, which represent the quantum analogs of classical bits. For qudits, we denote the \textit{standard basis} as \(\{\ket{i}\ |\ i \in [d]\}\). Let \(\mathcal{H}_d^{\otimes n}\) be the space of \(n\), \(d\)-dimensional qudits. We use \(\bra{\psi}\) to denote the \emph{conjugate transpose} of \(\ket{\psi} \in \mathcal{H}_d^{\otimes n}\) and \(\braket{\psi, \phi}\) to denote the inner product of two states. The \emph{standard basis} for \(\mathcal{H}_d^{\otimes n}\) is denoted by \(\{\ket{i_1 i_2 \dotsc i_n} := \ket{i_1} \otimes \ket{i_2} \otimes \cdots \otimes \ket{i_n}\ |\ i_1, i_2, \dotsc, i_n \in [d]\}\). 

We use the notation \(\mathcal{L}(\H_d^{\otimes n}) \cong \C^{d^n \times d^n}\) to denote the Hilbert-Schmidt space of linear operators from \(\H_d^{\otimes n}\) to itself, \(\H_d^{\otimes n} \ra \H_d^{\otimes n}\). For an operator/matrix $A \in \mathcal{L}(\H_d^{\otimes n})$, $A$ is \emph{positive semidefinite} or \emph{PSD} (denoted $A \succcurlyeq 0$) if $\forall \ket{\psi} \in \H_d^{\otimes n}, \bra{\psi} A \ket{\psi} \geq 0$. We use \(A^* := \overline{A^\top}\) to denote the \emph{adjoint/conjugate transpose}. A matrix, \(A\), is \emph{Hermitian} if \(A^* = A\). We use the notation \(\mathcal{D}(\H_d^{\otimes n}) := \{\rho \in \mathcal{L}(\H_d^{\otimes n})\ |\ \rho^* = \rho,\ \rho \succcurlyeq 0,\ \tr(\rho)=1\}\) to denote the subset of \emph{density matrices} on \(n\) qudits. A density matrix $\rho$ is \emph{pure} if it is a projector, namely, \(\rho^2 = \rho = \ket{\psi}\bra{\psi}\) (equivalently, \(\tr(\rho^2) = 1\)). More generally, for an arbitrary density matrix \(\rho \in \mathcal{D}(\H_d^{\otimes n})\), we refer to quantity \(\tr(\rho^2)\) as its \emph{purity}. Special cases of operators are the \emph{unitary group}, denoted \(U(d) := \{U \in \mathcal{L}(\H_d)\ |\ U^* U = I\}\). A subgroup of this group is the \emph{special unitary group}, denoted \(SU(d):= \{U \in U(d)\ |\ \det(U) = 1\}\). 

The \emph{symmetric subspace} of \(H_d^{\otimes n}\), the space of \(n\), \(d\)-dimensional qudits, is given by
\[\vee^n \H_d := \{\ket{\psi} \in H_d^{\otimes n}\ |\ \ket{\psi} = P(\sigma)\ket{\psi}\ \forall \sigma \in S_n\} = \text{span}\{\ket{\psi}^{\otimes n}\ |\ \ket{\psi} \in H_d\}\]
where \(P: S_n \ra GL(\H_d^{\otimes n})\) is the representation of \(S_n\) given by \(P(\sigma) := \sum_{i_1,\dotsc,i_n} \ket{i_{\sigma^{-1}(1)} \dotsc i_{\sigma^{-1}(n)}} \bra{i_1 \dotsc i_n}\). Similarly, the \emph{antisymmetric subspace} of \(H_d^{\otimes n}\), is given by
\[\wedge^n \H_d := \{\ket{\psi} \in H_d^{\otimes n}\ |\ \ket{\psi} = \sgn(\sigma) P(\sigma)\ket{\psi}\ \forall \sigma \in S_n\}\]
For \(n=2\), we note that these subspaces are orthogonal complements of each other. 

We generally use subscripts to indicate quantum subsystems. For instance, for a single qudit operator \(M \in \mathcal{L}(\H_d)\), we use \(M_i \in \mathcal{L}(\H_d^{\otimes n})\) to denoted the \(n\) qudit operator that acts as \(M\) on the \(i\)th qudit and identity on all other qudits else. Moreover, for a 2-qudit operator, \(H\), we use \(H_{ij}\) to denote that it acts on the \(i\) and \(j\) qudits (order matters). 

Let $G=(V,E)$ be an unweighted, undirected graph and \(G=(V,E,w)\) be a weighted, undirected graph. When viewing the vertices of a graph as a system of qudit, we use the notation $\mathcal{H}_d^{\otimes V}$. We denote by \(K_n\) a special graph called the \emph{complete graph on \(n\) vertices} (or the \emph{\(n\)-clique}) to be the graph with an edge between every pair of vertices. 
 
\begin{definition}[Algorithmic Gap]\label{def_alg_gap}
    Let \(\mathcal{P}\) be a maximization problem and \(A\) an approximation algorithm. For instance \(\mathcal{I}\) of \(\mathcal{P}\), let \(A(\mathcal{I})\) be the expected value of the solution outputted by the approximation algorithm and \(OPT(\mathcal{I})\) be the true optimal value. The algorithmic gap of the instance \(\mathcal{I}\) is the quantity
    \begin{equation}\label{alg_gap_inst}
        \text{Gap}_A(\mathcal{I}) = \frac{A(\mathcal{I})}{\text{OPT}(\mathcal{I})}
    \end{equation}
    The algorithmic gap of \(A\) for the problem \(\mathcal{P}\) is the quantity
    \begin{equation}\label{alg_gap_A}
    \inf_{\mathcal{I}} \left\{\text{Gap}_A(\mathcal{I})\right\}
\end{equation}
\end{definition}
The \emph{approximation ratio} is then a lower bound on this quantity. Namely, it's a constant, \(\alpha\), such that for all instances \(\mathcal{I}\), \(A(\mathcal{I}) \geq \alpha \text{OPT}(\mathcal{I})\).

Lastly, we use \(\sim\) to denote asymptotic equivalence. That is, two function, \(f,g: \R \ra \R\) are asymptotic equivalent, denoted \(f \sim g\), if \(\lim_{x\ra\infty} \frac{f(x)}{g(x)} = 1\) or equivalently if \(f(x) = g(x)(1 + o_x(1))\).

\subsection{Matrix Basis}\label{section_mat_basis}

When working in higher level qudit systems, it becomes less obvious how to describe quantum states and Hamiltonians. In the qubit case, we have the Pauli matrices, which have become immensely useful to the study of Hamiltonian optimization. For qudits, there are multiple options for matrix bases that generalize the Pauli matrices, maintaining different useful properties \cite{bertlmann_bloch_2008}. In this section, we look at one such basis called the \emph{generalized Gell-Mann matrix basis}. We also discuss another commonly used basis of unitary matrices basis called the clock and shift matrices in \cref{appendix_odd_deg_terms}, but we note that they ultimately fall short when using them as a basis for Bloch vectors.

\begin{definition}[Generalized Gell-Mann Matrices \cite{bertlmann_bloch_2008,kimura_bloch_2003,bossion_general_2021}]\label{gen_gell-mann_def} The \emph{generalized Gell-Mann matrices} (GGM matrices) are a higher dimensional extension of the Pauli Matrices and the Gell-Mann matrices \cite{gell-mann_symmetries_1962}. They form a real, orthogonal basis for \(d \times d\), hermitian, traceless matrices. They are given by the following \(d^2-1\) matrices, which can be divided into three categories.

\begin{itemize}
    \item Symmetric 
        \[\Lambda^+_{ab} := \ket{a}\bra{b} + \ket{b}\bra{a},\ \ \ \ 1 \leq a < b \leq d\]
    \item Antisymmetric 
        \[\Lambda^-_{ab} := -i\ket{a}\bra{b}  +i\ket{b}\bra{a},\ \ \ \ 1 \leq a < b \leq d\]
    \item Diagonal
        \[\Lambda^d_{a} :=  \sqrt{\frac{2}{a(a+1)}}   \left(\sum_{b=1}^{a} \ket{b}\bra{b}-a\ket{a + 1}\bra{a + 1}\right),\ \ \ \ 1 \leq a \leq d - 1\]
\end{itemize}

Often, it is more convenient to use a single index, which we will denote by \(\Lambda^a\). The relationship between these two definitions is given by the following indices, with \(1 \leq a < b \leq d\).

\begin{align}
    S_{ab}&=b^2+2(a-b)-1\\
    A_{ab}&=b^2+2(a-b)\\
    D_{a}&=a^2+2a
\end{align}
Then
\begin{equation}
    \Lambda^+_{ab} = \Lambda^{S_{ab}},\ \ \ \ \Lambda^-_{ab} = \Lambda^{A_{ab}},\ \ \ \ \Lambda^d_{a} = \Lambda^{D_{a}}
\end{equation}
Note that we will often switch between these two definitions depending on which model suits the situation best. However, most of the time, we will use the single index notation.
\end{definition}

\begin{remark}
    The Lie algebra \(\mathfrak{su}(d)\) is the real vector space of \(d \times d\), skew-hermitian, trace-less matrices with the Lie bracket. Moreover, \(\{i \Lambda^a\}_{a\in [d^2-1]}\) generates \(\mathfrak{su}(d)\). We use this fact to give the GGM matrices some algebraic structure. In fact, the choice of matrix basis is arbitrary, and it suffices to use any orthogonal basis that, under scalar multiple of \(i\), gives a generating set for \(\mathfrak{su}(d)\).
\end{remark}

We state some properties of the GGM matrices that we will use in this paper. First, they satisfy the following property: \(\tr(\Lambda^a \Lambda^b) = 2 \delta_{ab}\) (where \(\delta_{ab}\) is the Kronecker delta). Additionally, combined with the identity, or rather \(\Lambda^0 := \sqrt{\frac{2}{d}}I\), these \(d^2\) matrices give a real basis for all hermitian matrices, i.e., the space of single qudit observables, \(\mathcal{HM}_d\). It is for this reason they are well suited to be used to express Hamiltonian optimization problems. Lastly, they give a complex basis for \(\mathcal{L}(\C^d)\).

They have the following commutator/anti-commutator relations when viewed over \(\mathcal{L}(\C^d)\).

\begin{align}
    [\Lambda^a, \Lambda^b] &= \Lambda^a \Lambda^b - \Lambda^b \Lambda^a = 2 i \sum_{c = 1}^{d^2-1} f_{abc} \Lambda_c &a,b \in [d^2-1] \label{gell-mann_comm_rel}\\
    \{\Lambda^a, \Lambda^b\} &= \Lambda^a \Lambda^b + \Lambda^b \Lambda^a = \frac{4}{d}\delta_{ab}I + 2 \sum_{c = 1}^{d^2-1} d_{abc} \Lambda_c &a,b \in [d^2-1] \label{gell-mann_anticomm_rel}
\end{align}

Where the totally antisymmetric structure constants of \(\mathfrak{su}(d)\) are given by \(f_{abc}\) and the totally symmetric constants are given by \(d_{abc}\). We note that these constants have known closed formulas \cite{bossion_general_2021} that can be taken advantage of to define SDP constraints. Additionally, putting these together, we get the following product property. 
\begin{equation}\label{gell-mann_prod_property}
    \Lambda^a \Lambda^b = \frac{2}{d}\delta_{ab}I + \sum_{c = 1}^{d^2-1} d_{abc} \Lambda_c + i \sum_{c = 1}^{d^2-1} f_{abc} \Lambda_c
\end{equation}



\section{The \texorpdfstring{\textQMCdProb}{Quantum Max-d-Cut} Hamiltonian and the \texorpdfstring{\(SU(d)\)}{SU(d)} Heisenberg model}\label{section_qmc_ham}

In this paper, we focus on the \(d\)-dimensional qudit generalization of the well-studied Heisenberg model, called the \(SU(d)\) Heisenberg model with no local terms \cite{kawashima_ground_2007,beach_su_2009,lou_antiferromagnetic_2009}. This model is defined as having the edge interaction
\begin{equation}\label{general_local_SU(d)_heis_ham}
    h^{\text{Heis}_d} = \frac{1}{4}\sum_{a=1}^{d^2-1}\Gamma^a \otimes \Gamma^a
\end{equation}
where \(\{\Gamma^a\}_{a\in[d^2-1]}\) is some set of traceless, hermitian matrices satisfying \(\tr(\Gamma_a \Gamma_b) = 2 \delta_{ab}\), e.g., \(\{\Gamma^a\}_{a\in[d^2-1]} = \{\Lambda^a\}_{a\in[d^2-1]}\) are the generalized Gell-Mann matrices, defined in \cref{gen_gell-mann_def}. We note that the \(SU(d)\) Heisenberg model is known to be universal in the sense that any other Hamiltonian can be simulated by a Hamiltonian made from these \(SU(d)\) Heisenberg edge interactions \cite{piddock_universal_2021}.

Then, for a graph \(G = (V, E, w)\), we can define the full problem Hamiltonian to be \(H_G := \sum_{e \in E} w_e h_e\). Following the convention for Hamiltonian optimization problems, we define the ground state energy to the minimal eigenvalue (rather than the maximal energy as is the case for \textQMCProb).

We note that up to adding an identity term, \eqref{general_local_SU(d)_heis_ham} is nothing but the projector onto the symmetric subspace of \(\H_d^{\otimes 2} := (\C^d)^{\otimes 2}\). We can formalize this with the following proposition.

\begin{proposition}\label{prop_Psym_is_SUd_heis_plus_indent_term}
    For \(P_{\text{sym}}\) being the orthogonal projector onto the symmetric subspace of \(\H_d^{\otimes 2}\), we have that \(\frac{1}{2}\left(\frac{d+1}{d}\right)I + h^{\text{Heis}_d} = P_{\text{sym}}\).
\end{proposition}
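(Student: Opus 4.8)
The plan is to reduce everything to the SWAP operator $S := \sum_{i,j\in[d]}\ket{ij}\bra{ji}$ on $\H_d^{\otimes 2}$, which is precisely $P((1\,2))$ in the notation of \cref{section_prelims}. Since $S^2 = I$, the space $\H_d^{\otimes 2}$ decomposes as the orthogonal sum of the $(+1)$- and $(-1)$-eigenspaces of $S$, which by definition are exactly $\vee^2\H_d$ and $\wedge^2\H_d$; hence $P_{\text{sym}} = \tfrac12(I + S)$. Writing $\tfrac12\big(\tfrac{d+1}{d}\big)I = \tfrac12 I + \tfrac1{2d}I$, the claimed identity $\tfrac12\big(\tfrac{d+1}{d}\big)I + h^{\text{Heis}_d} = P_{\text{sym}}$ is therefore equivalent to the operator identity $h^{\text{Heis}_d} = \tfrac12 S - \tfrac1{2d}I$, i.e. (after multiplying by $4$) to $\sum_{a=1}^{d^2-1}\Gamma^a\otimes\Gamma^a = 2S - \tfrac2d I$.

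The one real ingredient I would isolate as a lemma is: for any family $\{E_a\}_{a=1}^{d^2}$ of Hermitian matrices that is orthonormal for the Hilbert--Schmidt inner product $\langle A,B\rangle = \tr(A^*B)$, one has $\sum_{a=1}^{d^2} E_a\otimes E_a = S$. To prove it I would compare matrix entries: on one hand $\bra{ij}S\ket{kl} = \delta_{il}\delta_{jk}$, and on the other $\bra{ij}\sum_a E_a\otimes E_a\ket{kl} = \sum_a (E_a)_{ik}(E_a)_{jl} = \sum_a (E_a)_{ik}\overline{(E_a)_{lj}}$ since each $E_a$ is Hermitian. Viewing each $E_a$ as a vector in $\C^{d^2}$ indexed by matrix positions, orthonormality says these $d^2$ vectors are orthonormal in $\C^{d^2}$, hence form a basis, so the completeness relation gives $\sum_a (E_a)_{pq}\overline{(E_a)_{rs}} = \delta_{pr}\delta_{qs}$; specializing $(p,q) = (i,k)$ and $(r,s) = (l,j)$ yields $\delta_{il}\delta_{jk}$, matching $S$.

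To finish, I would apply the lemma to the orthonormal Hermitian basis $\{\Lambda^0/\sqrt2\}\cup\{\Gamma^a/\sqrt2\}_{a=1}^{d^2-1}$ of $\mathcal{L}(\C^d)$, where $\Lambda^0 := \sqrt{2/d}\,I$: this is a basis because the $\Gamma^a$ are traceless (hence HS-orthogonal to $I$) and satisfy $\tr(\Gamma^a\Gamma^b) = 2\delta_{ab}$, so the $d^2$ rescaled matrices are orthonormal and Hermitian. The lemma gives $\tfrac12\big(\tfrac2d\, I\otimes I\big) + \tfrac12\sum_{a=1}^{d^2-1}\Gamma^a\otimes\Gamma^a = S$, i.e. $\sum_{a=1}^{d^2-1}\Gamma^a\otimes\Gamma^a = 2S - \tfrac2d I$; dividing by $4$ and adding $\tfrac12\big(\tfrac{d+1}{d}\big)I$ then recovers $\tfrac12(I+S) = P_{\text{sym}}$. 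I expect the only nontrivial step to be the orthonormal-basis identity $\sum_a E_a\otimes E_a = S$; everything else is bookkeeping. The mild subtlety to keep in mind is that the proposition is stated for an \emph{arbitrary} admissible family $\{\Gamma^a\}$ rather than the GGM matrices specifically, so the argument must run through the abstract completeness/orthonormality property (as above) and not through any special algebraic structure of the $\Lambda^a$.
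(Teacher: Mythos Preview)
Your proof is correct and takes a genuinely different route from the paper. The paper first establishes \cref{QMCk_edge_interaction_gell-mann_prop} (the antisymmetric projector identity) by a lengthy explicit computation with the generalized Gell--Mann matrices---splitting into the symmetric, antisymmetric, and diagonal families and matching matrix entries---and then deduces \cref{prop_Psym_is_SUd_heis_plus_indent_term} via $P_{\text{sym}} = I - h$. You instead go through the SWAP operator: the key step is the basis-independent identity $\sum_a E_a\otimes E_a = S$ for any Hilbert--Schmidt orthonormal Hermitian basis, which you prove cleanly from completeness. This buys you a much shorter argument that works uniformly for \emph{any} admissible family $\{\Gamma^a\}$ satisfying $\tr(\Gamma^a\Gamma^b)=2\delta_{ab}$, exactly as the statement of $h^{\text{Heis}_d}$ in \eqref{general_local_SU(d)_heis_ham} allows, whereas the paper's computation is tied to the specific GGM structure. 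The paper's approach has the side benefit of yielding the \textMCdProb edge-interaction identity (using only the diagonal $\Lambda^d_a$) along the way, but for the proposition at hand your SWAP-based argument is both simpler and more general.
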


With this, we can now define the \textQMCdProb edge interaction and Hamiltonian.

\begin{definition}[The \textQMCdProb edge interaction]\label{QMCd_edge_interaction_def} For 2-qudits, we define the \emph{\textQMCdProb edge interaction} to be the projector onto the antisymmetric subspace of two, \(d\)-dimensional qudits. 
Using an orthonormal basis for the antisymmetric subspace, we get that
\begin{equation}\label{QMCk_edge_interaction_eq}
    h := \sum_{1 \leq a< b \leq d} \left(\frac{1}{\sqrt{2}}\ket{ab} - \frac{1}{\sqrt{2}}\ket{ba}\right)\left(\frac{1}{\sqrt{2}}\bra{ab} - \frac{1}{\sqrt{2}}\bra{ba}\right)
\end{equation}
\end{definition}

Because this is a 2-qudit orthogonal projector onto the antisymmetric subspace, it is precisely the projector onto the complement of the symmetric subspace, i.e., \(h = I - P_{\text{sym}}\). 
Using \cref{prop_Psym_is_SUd_heis_plus_indent_term}, we get that the \textQMCdProb edge interaction can be decomposed in the following way.

\begin{proposition}[Alternate definition for the \textQMCdProb edge interaction]\label{QMCk_edge_interaction_gell-mann_prop}
The \textQMCdProb edge interaction can be written in terms of the generalized Gell-Mann matrices in the following way.
\begin{equation}\label{QMCk_edge_interaction_gell-mann_eq}
    h = \frac{1}{2}\left(\frac{d-1}{d}\right) I - \frac{1}{4}\sum_{a=1}^{d^2-1}\Lambda^a \otimes \Lambda^a
\end{equation}
\end{proposition}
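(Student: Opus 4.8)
The plan is to chain together the two facts already established: that $h = I - P_{\text{sym}}$ (stated immediately before the proposition), and that $P_{\text{sym}} = \tfrac{1}{2}\left(\tfrac{d+1}{d}\right)I + h^{\text{Heis}_d}$ (Proposition \ref{prop_Psym_is_SUd_heis_plus_indent_term}), together with the definition $h^{\text{Heis}_d} = \tfrac{1}{4}\sum_{a=1}^{d^2-1}\Lambda^a \otimes \Lambda^a$ from \eqref{general_local_SU(d)_heis_ham} with the choice $\Gamma^a = \Lambda^a$. Substituting gives
\begin{equation*}
h = I - P_{\text{sym}} = I - \frac{1}{2}\left(\frac{d+1}{d}\right)I - \frac{1}{4}\sum_{a=1}^{d^2-1}\Lambda^a \otimes \Lambda^a,
\end{equation*}
and then I would just simplify the scalar coefficient on $I$: $1 - \tfrac{1}{2}\cdot\tfrac{d+1}{d} = \tfrac{2d - (d+1)}{2d} = \tfrac{d-1}{2d} = \tfrac{1}{2}\left(\tfrac{d-1}{d}\right)$, which yields exactly \eqref{QMCk_edge_interaction_gell-mann_eq}. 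So modulo Proposition \ref{prop_Psym_is_SUd_heis_plus_indent_term}, this is a one-line arithmetic consequence.

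The only real content, therefore, is Proposition \ref{prop_Psym_is_SUd_heis_plus_indent_term} itself, which the excerpt states but whose proof I should also sketch since the present proposition depends on it. I would prove $\tfrac{1}{2}\left(\tfrac{d+1}{d}\right)I + h^{\text{Heis}_d} = P_{\text{sym}}$ by verifying the operator identity directly. The cleanest route uses the swap operator $\mathrm{SWAP} = P((1\,2)) = \sum_{ij}\ket{ji}\bra{ij}$ on $\H_d^{\otimes 2}$: it is standard that $P_{\text{sym}} = \tfrac{1}{2}(I + \mathrm{SWAP})$, since $\mathrm{SWAP}$ has eigenvalue $+1$ on the symmetric subspace and $-1$ on the antisymmetric one. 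So it suffices to show $\mathrm{SWAP} = \tfrac{1}{d}I + \tfrac{1}{2}\sum_{a=1}^{d^2-1}\Lambda^a\otimes\Lambda^a$, i.e. that $\sum_{a=1}^{d^2-1}\Lambda^a\otimes\Lambda^a = 2\,\mathrm{SWAP} - \tfrac{2}{d}I$. This follows from the completeness relation for the Gell-Mann basis: since $\{\Lambda^0 = \sqrt{2/d}\,I\} \cup \{\Lambda^a\}_{a\in[d^2-1]}$ is an orthogonal basis of $\mathcal{L}(\C^d)$ with $\tr(\Lambda^a\Lambda^b) = 2\delta_{ab}$, one has $\sum_{a=0}^{d^2-1}\tfrac{1}{2}\Lambda^a\otimes\Lambda^a = \mathrm{SWAP}$ (the resolution of the swap in any trace-orthonormal operator basis), and peeling off the $a=0$ term $\tfrac{1}{2}\cdot\tfrac{2}{d}I\otimes I = \tfrac{1}{d}I$ gives the claim. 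Plugging back in: $P_{\text{sym}} = \tfrac12 I + \tfrac12\mathrm{SWAP} = \tfrac12 I + \tfrac12\left(\tfrac1d I + \tfrac12\sum_a \Lambda^a\otimes\Lambda^a\right) = \tfrac12\left(\tfrac{d+1}{d}\right)I + h^{\text{Heis}_d}$.

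The main obstacle — really the only step requiring care — is justifying the completeness/swap-resolution identity $\sum_{a=0}^{d^2-1}\tfrac12\Lambda^a\otimes\Lambda^a = \mathrm{SWAP}$. I would verify this either by checking both sides agree on all basis vectors $\ket{ij}$, or by expanding an arbitrary operator $X = \sum_a c_a \Lambda^a$ in the basis and confirming $\sum_a \tfrac12 \tr(\Lambda^a X)\Lambda^a = X$ from $\tr(\Lambda^a\Lambda^b)=2\delta_{ab}$, then using the general fact that $\sum_a \tfrac12\Lambda^a \otimes \Lambda^a$ acts as $X \mapsto X$ under the channel-state correspondence, which forces it to equal $\mathrm{SWAP}$. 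Alternatively — and perhaps more self-contained given what the excerpt provides — one can avoid $\mathrm{SWAP}$ entirely and instead verify $\tfrac12\left(\tfrac{d+1}{d}\right)I + h^{\text{Heis}_d}$ is idempotent with the right trace ($\tr = \binom{d+1}{2}$, matching $\dim \vee^2\H_d$), using the product property \eqref{gell-mann_prod_property} together with $\sum_c d_{aac}\Lambda_c$ and $\sum_c f_{aac}\Lambda_c$ vanishing after summing over $a$; but the $\mathrm{SWAP}$ argument is shorter and I would present that.
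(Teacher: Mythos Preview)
Your argument is correct, but it runs in the opposite logical direction from the paper. In the appendix the authors prove Proposition~\ref{QMCk_edge_interaction_gell-mann_prop} \emph{first}, by a direct coordinate computation: they split $\sum_a \Lambda^a\otimes\Lambda^a$ into its symmetric, antisymmetric, and diagonal pieces using the explicit definitions of the three Gell-Mann families, compute each contribution in the $\ket{a,b}\bra{c,d}$ basis (the diagonal part requires a telescoping-sum identity $\sum_{c=m}^{d-1}\frac{1}{c(c+1)}=\frac{1}{m}-\frac{1}{d}$), and match against the expansion of the antisymmetric projector. Proposition~\ref{prop_Psym_is_SUd_heis_plus_indent_term} is then obtained as a one-line corollary of Proposition~\ref{QMCk_edge_interaction_gell-mann_prop} via $P_{\text{sym}}=I-h$. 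So citing Proposition~\ref{prop_Psym_is_SUd_heis_plus_indent_term} as ``already established'' would be circular in the paper's own logic; fortunately you supply an independent proof of it, so your argument stands on its own.

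Your route through $\mathrm{SWAP}$ and the completeness relation $\sum_{a=0}^{d^2-1}\tfrac12\Lambda^a\otimes\Lambda^a=\mathrm{SWAP}$ is genuinely cleaner: it replaces a page of index manipulation with a basis-independent identity, and it makes transparent why the particular choice of Gell-Mann matrices is irrelevant (any Hermitian basis with $\tr(\Gamma^a\Gamma^b)=2\delta_{ab}$ works, as the paper remarks elsewhere). The paper's brute-force proof, on the other hand, has the minor virtue of being entirely self-contained from the explicit matrix definitions without invoking the swap-resolution fact, and as a by-product isolates the diagonal contribution, which they reuse to identify the classical \textsc{Max-$d$-Cut} interaction.
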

We prove \cref{prop_Psym_is_SUd_heis_plus_indent_term,QMCk_edge_interaction_gell-mann_prop} in \cref{appendix_heis_model_and_QMCd}.

It follows from the definition that the \textQMCdProb edge interaction is naturally invariant under conjugation by local unitaries. Moreover, because \textQMCdProb is nothing but the negative of the \(SU(d)\) Heisenberg model and an identity term, finding its ground state, which in our case is the maximal eigenvalue, is \cc{QMA}-complete as was proven by Piddock and Montanaro \cite{piddock_universal_2021}.

\begin{definition}[\textQMCdProb]\label{QMCk_def}
    Let \(G = (V,E,w)\) be a graph with edge weights, which we refer to as the \emph{interaction graph}. The \emph{\textQMCdProb problem Hamiltonian}, \(H_G \in \mathcal{L}\left(\H_d^{\otimes V}\right)\), is given by
    \begin{equation}\label{QMCk_eq}
    H_G = \frac{1}{W}\sum_{(u,v) \in E} w_{uv} h_{uv} = \E_{(u,v) \sim E} h_{uv}
\end{equation}
    where \(W = \sum_{(u,v) \in E} w_{uv}\) denotes the sum of the weights and \(h_{uv} \in \mathcal{L}\left(\H_d^{\otimes V}\right)\) denotes the \textQMCdProb edge interaction applied to the qudits \(u\) and \(v\), namely, \(h_{uv} \otimes I_{V \setminus \{u,v\}}\).
\end{definition}

Equivalently, we can think of the interaction graph as giving a distribution over the local Hamiltonians. Then, we ask to optimize the expected energy over the local Hamiltonians. With that, we now give definitions to express the main focus of this work.

\begin{definition}[Energy of \textQMCdProb]\label{QMCk_energy_def} Let \(H_G\) be an instance of \textQMCdProb. The \emph{energy} of a state \(\ket{\psi} \in \H_d^{\otimes V}\), is the quantity \(\bra{\psi} H_G \ket{\psi}\). The maximum energy, also referred to as the \emph{value}, of \(H_G\) is
    \[\QMCdProb(G) = \lambda_\text{max}(H_G) = \max_{\substack{\ket{\psi} \in \H_d^{\otimes V}\\\braket{\psi|\psi} = 1}} \bra{\psi} H_G \ket{\psi}\]
    Equivalently, we can define this using density matrices.
    \[\QMCdProb(G) = \max_{\substack{\rho \in \mathcal{D}\left(\H_d^{\otimes V}\right)\\\text{s.t. }\rho^2 = \rho}} \tr(\rho H_G)\]
\end{definition}

We note that the constraint \(\rho^2 = \rho\) requires that the density matrices represent pure states. We make this constraint explicit because we will also look at a relaxation of \textQMCdProb by allowing for ancilla bits, which will, in turn, allow us to drop this constraint.

\begin{definition}[Energy of \textQMCdProb with ancillas]\label{QMCk_with_ancillas_energy_def} Let \(H_G\) be an instance of \textQMCdProb. Given a state with \(r \geq 0\) ancilla bits, \(\ket{\psi} \in \H_d^{\otimes V} \otimes \H_d^{\otimes r}\), it's energy is the quantity \(\bra{\psi} H_G \otimes I_A \ket{\psi}\), for which, we denote with subscript \(A\) the space of \(r\) additional ancilla bits. The maximum energy of \(H_G\) is also given by
    \[\begin{split}
        \QMCdProb_{\text{ancilla}}(G) = \max\ &\bra{\psi} H_G \otimes I_A \ket{\psi}\\
        {\scriptstyle\text{s.t. }} & {\scriptstyle\ket{\psi} \in \H_d^{\otimes V} \otimes \H_d^{\otimes r} \text{ for } r \geq 0} \\
        & {\scriptstyle\braket{\psi|\psi} = 1}
    \end{split}\]
\end{definition}

We can then make the following observations about this new problem.

\begin{proposition} The problem of optimizing over states with ancilla qubits is equivalent to optimizing over mixed density matrices.
    \[\QMCdProb_{\text{ancilla}}(G) = \max_{\rho \in \mathcal{D}\left(\H_d^{\otimes V}\right)} \tr(\rho H_G)\]
\end{proposition}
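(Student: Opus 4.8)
The plan is to prove the claimed equality by establishing two inequalities, exploiting the standard correspondence between mixed states on \(\H_d^{\otimes V}\) and pure states on a larger system given by the partial trace and its right inverse, purification. Throughout I use the identity that, since \(H_G \otimes I_A\) acts as the identity on the ancilla register, \(\bra{\psi}(H_G \otimes I_A)\ket{\psi} = \tr\big((H_G \otimes I_A)\ket{\psi}\bra{\psi}\big) = \tr\big(H_G\, \tr_A(\ket{\psi}\bra{\psi})\big)\) for any unit vector \(\ket{\psi} \in \H_d^{\otimes V} \otimes \H_d^{\otimes r}\).

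For \(\QMCdProb_{\text{ancilla}}(G) \leq \max_{\rho} \tr(\rho H_G)\): fix any \(r \geq 0\) and any unit vector \(\ket{\psi} \in \H_d^{\otimes V} \otimes \H_d^{\otimes r}\), and set \(\rho := \tr_A(\ket{\psi}\bra{\psi})\). Since the partial trace preserves Hermiticity, positive semidefiniteness, and trace, we have \(\rho \in \mathcal{D}(\H_d^{\otimes V})\), and by the identity above \(\bra{\psi}(H_G \otimes I_A)\ket{\psi} = \tr(H_G \rho) \leq \max_{\rho'} \tr(\rho' H_G)\). Taking the supremum over \(r\) and \(\ket{\psi}\) gives this direction. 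For the reverse inequality, take any \(\rho \in \mathcal{D}(\H_d^{\otimes V})\) and diagonalize it as \(\rho = \sum_k p_k \ket{\phi_k}\bra{\phi_k}\) with \(\{\ket{\phi_k}\}\) orthonormal, \(p_k \geq 0\), \(\sum_k p_k = 1\). Pick \(r\) large enough that \(\H_d^{\otimes r}\) contains an orthonormal family \(\{\ket{k}\}\) of the needed size (e.g.\ \(r = |V|\) works, since the rank of \(\rho\) is at most \(d^{|V|}\)), and set \(\ket{\psi} := \sum_k \sqrt{p_k}\,\ket{\phi_k} \otimes \ket{k}\). Then \(\ket{\psi}\) is a unit vector with \(\tr_A(\ket{\psi}\bra{\psi}) = \rho\), so \(\QMCdProb_{\text{ancilla}}(G) \geq \bra{\psi}(H_G \otimes I_A)\ket{\psi} = \tr(\rho H_G)\); maximizing over \(\rho\) finishes the argument.

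I would also note in passing that both optima are genuinely attained: the right-hand maximum because \(\mathcal{D}(\H_d^{\otimes V})\) is compact and \(\rho \mapsto \tr(\rho H_G)\) is continuous, and the left-hand supremum because the purification argument shows it is already achieved with \(r = |V|\), reducing it to optimizing a continuous function over the unit sphere of a fixed finite-dimensional space. There is no real obstacle here — the only point to watch is that the ancilla dimension \(d^r\) can be taken large enough to host a purification of an arbitrary \(\rho\), which is immediate because \(r\) ranges over all nonnegative integers; the entire content is the bookkeeping with the partial-trace identity.
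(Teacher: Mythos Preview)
Your argument is correct and is exactly the standard purification/partial-trace correspondence one would expect here. The paper itself does not supply a proof of this proposition---it is stated as a well-known fact and left unproven---so there is nothing to compare against beyond noting that your write-up fills in the omitted details cleanly.
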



\begin{remark}
    While allowing for ancillas can be seen as a relaxation of \textQMCdProb, the maximal energy is the same in both problems. So, in a sense, it is only a relaxation of the search space. Therefore, we will often refer to both definitions interchangeably with \(\QMCdProb(G)\).
\end{remark}

For the purposes of rounding, we will consider two special cases of \textQMCdProb that restrict the search space to be over product states. This is a common first step in studying algorithms for Local Hamiltonian problems as it is a well-understood ansatz. The first of which is when the solution space is over product states of pure states. This is the ansatz originally used to study the \textQMCProb problem \cite{gharibian_almost_2019,hwang_unique_2022,parekh_optimal_2022} as well as general LHPs \cite{gharibian_approximation_2012,brandao_product-state_2016} and has been used as a starting point for many subsequent algorithms \cite{parekh_application_2021,king_improved_2022}.

\begin{definition}[Pure product state value]\label{def_pure_prod_prob} The product state value of pure states of \(H_G\) is
    \[\textsc{PureProd}_{\textsc{QMC}_d}(G) = \max_{\substack{\forall v \in V, \rho_v \in \mathcal{D}(\C^d)\\\text{s.t. }\rho_v^2 = \rho}} \tr(\rho_G H_G)\]
    where \(\rho_G = \bigotimes_{v\in V} \rho_v\).
\end{definition}

Next, we will consider a product state solution of mixed states. Unlike before, in \cref{QMCk_with_ancillas_energy_def}, we won't look at a mere relaxation of the pure state variant, and instead, we will restrict to a certain level of mixed states that would no longer include pure states.

\begin{definition}[Mixed product state value] \label{MixedProd_def} The product state value of mixed states of \(H_G\) is
    \[\textsc{MixedProd}_{\textsc{QMC}_d}(G) = \max_{\substack{\forall v \in V, \rho_v \in \mathcal{D}(\C^d)\\\text{s.t. }\tr(\rho_v^2) = \frac{1}{d-1}}} \tr(\rho_G H_G)\]
    where \(\rho_G = \bigotimes_{v\in V} \rho_v\).
\end{definition}

We note that the constraint \(\tr(\rho_v^2) = \frac{1}{d-1}\) is used to specify the ``amount" of mixture that we want our states to have. This quantity is often referred to as the purity of a density matrix \(\rho\). The significance of this will be made clear in \cref{prod_state_section,prop_tr_rho2_vs_bloch_vec}. For now, we note that in the \(d=2\) case, this is nothing but \(\textsc{PureProd}_{\textsc{QMC}_2}(G)\). Secondly, we note that the maximally mixed state has a purity of \(\tr\left(\frac{1}{d^2} I\right) = \frac{1}{d}\) and so for large \(d\), we are optimizing over states that are close to maximally mixed, especially considering that pure states give \(\tr(\ket{\psi}\bra{\psi}^2) = 1\). Nonetheless, as stated in \cref{thm_1}, \cref{thm_1_pt_1}, we show that our algorithm achieves an improvement over a random assignment for all \(d \geq 2\), which in expectation is the same as the energy of the maximally mixed state. However, this improvement rapidly diminishes as stated in \cref{thm_1}, \cref{thm_1_pt_2}.

\section{Bloch Vectors and Product States}\label{prod_state_section}

The use of the Bloch vectors is very important in the rounding algorithm by Gharibian and Parekh \cite{gharibian_almost_2019}. In particular, single qubit pure states correspond bijectively to the Bloch vectors on a sphere. More generally, the set of valid Bloch vectors gives a unit ball, called the Bloch ball, whose vectors correspond bijectively to valid density matrices, with the pure states being on the surface and mixed states being inside the ball. This picture falls apart for qudits. In this section, we look at the Bloch vectors more closely and their extension to qudits for use in a generalization of the Gharibian-Parekh rounding algorithm.

Bloch vectors for qudits have been extensively studied \cite{jakobczyk_geometry_2001,kimura_bloch_2003,bertlmann_bloch_2008,goyal_geometry_2016}. We present an overview of the geometry of the convex set of Bloch vectors, denoted \(\Omega_d\), and discuss its implications for rounding. Then, we will look at product state solutions for \textQMCdProb. While all very standard linear algebra, we give proofs for all propositions in \cref{appendix_bloch_vecs} to be more self-contained. For a more complete discussion (albeit without proofs), see \cite{bengtsson_space_2006}.

Given some density matrix, \(\rho \in \mathcal{D}(\C^d)\), we can uniquely decompose it in the following way.
\begin{equation}\label{basic_bloch_vec_decomp}
    \rho = \frac{1}{d}I + \sum_{a = 1}^{d^2 - 1} b_a \cdot \Gamma^a
\end{equation}
Where the set \(\{\Gamma^a\}_a\) is a basis for traceless, hermitian matrices. We call \(\vec{b} \in \Omega_d \subseteq \R^{d^2-1}\) the Bloch vector. The matrix \(\rho^* = \frac{1}{d}I\) is a special density matrix called \emph{the maximally mixed state}. We note the following facts about Bloch vectors for qudits when \(\{\Gamma^a\}_a\) satisfies \(\tr(\Gamma^a \Gamma^b) = 2 \delta_{ab}\) (which is true for \(\{\Gamma^a\}_a = \{\Lambda^a\}_a\)). 

\begin{proposition}[Outsphere/Circumsphere]\label{outsphere_prop} Given a density matrix \(\rho\) and it's corresponding Bloch vector \(\vec{b}\), we always have that \(\|\vec{b}\| \leq \sqrt{\frac{d-1}{2d}}\). Moreover, if \(\rho\) is a pure state, i.e., \(\rho^2 = \rho\), then \(\|\vec{b}\| = \sqrt{\frac{d-1}{2d}}\).
\end{proposition}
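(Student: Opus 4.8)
The plan is to reduce the geometric statement to the elementary fact that a density matrix has purity at most one, with equality precisely for pure states. Concretely, I would start from the Bloch decomposition \eqref{basic_bloch_vec_decomp}, namely \(\rho = \frac{1}{d}I + \sum_{a=1}^{d^2-1} b_a \Gamma^a\), and simply compute \(\tr(\rho^2)\) using only the orthogonality relation \(\tr(\Gamma^a \Gamma^b) = 2\delta_{ab}\) together with tracelessness of the \(\Gamma^a\). Expanding the square gives three groups of terms: \(\tr\!\big(\tfrac{1}{d^2}I\big) = \tfrac{1}{d}\); the cross terms \(\tfrac{2}{d}\sum_a b_a \tr(\Gamma^a) = 0\); and the quadratic terms \(\sum_{a,b} b_a b_b \tr(\Gamma^a\Gamma^b) = 2\sum_a b_a^2 = 2\|\vec b\|^2\). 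Hence
\[
    \tr(\rho^2) = \frac{1}{d} + 2\|\vec b\|^2.
\]

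Next I would invoke the standard bound \(\tr(\rho^2) \le 1\) for any \(\rho \in \mathcal{D}(\C^d)\), with equality iff \(\rho\) is pure. For completeness this is itself a one-line argument: \(\rho\) is PSD with unit trace, so its eigenvalues \(\lambda_1,\dots,\lambda_d\) are nonnegative and sum to \(1\), whence each \(\lambda_i \in [0,1]\) and \(\lambda_i^2 \le \lambda_i\); summing gives \(\tr(\rho^2) = \sum_i \lambda_i^2 \le \sum_i \lambda_i = 1\), with equality exactly when every \(\lambda_i \in \{0,1\}\), i.e.\ (given the trace constraint) when exactly one eigenvalue equals \(1\) and \(\rho\) is a rank-one projector. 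Combining this with the displayed identity yields \(2\|\vec b\|^2 \le 1 - \tfrac1d = \tfrac{d-1}{d}\), i.e.\ \(\|\vec b\| \le \sqrt{\tfrac{d-1}{2d}}\), and the equality case \(\rho^2 = \rho\) corresponds exactly to \(\|\vec b\| = \sqrt{\tfrac{d-1}{2d}}\).

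There is no real obstacle here; the only thing to be careful about is bookkeeping in the expansion of \(\rho^2\) (making sure the cross terms vanish by tracelessness and that the quadratic terms collapse correctly under the normalization \(\tr(\Gamma^a\Gamma^b) = 2\delta_{ab}\) rather than \(=\delta_{ab}\), which is where the factor of \(2\) comes from). One should also note explicitly that the argument uses \emph{only} PSD-ness and unit trace of \(\rho\), so it applies to the chosen basis \(\{\Gamma^a\} = \{\Lambda^a\}\) without any further special structure of the GGM matrices. The converse direction of the equality claim (that \(\|\vec b\| = \sqrt{\tfrac{d-1}{2d}}\) forces \(\rho^2 = \rho\)) is immediate from the same identity, so the proposition follows. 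If desired, I would remark that this also shows the map \(\rho \mapsto \vec b\) sends \(\mathcal{D}(\C^d)\) into the closed ball of radius \(\sqrt{(d-1)/(2d)}\), with pure states mapping onto its boundary sphere — the reason it is called the outsphere/circumsphere — though, as the text later emphasizes, not every boundary point is a pure state when \(d \ge 3\).
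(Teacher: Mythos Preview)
Your proof is correct and follows essentially the same approach as the paper: compute \(\tr(\rho^2) = \tfrac{1}{d} + 2\|\vec b\|^2\) from the Bloch decomposition and orthogonality relations, then bound \(\tr(\rho^2) \le 1\) via the nonnegativity of the eigenvalues. The paper phrases the purity bound slightly differently (expanding \(1 = (\sum_i \lambda_i)^2 \ge \sum_i \lambda_i^2\) rather than using \(\lambda_i^2 \le \lambda_i\)), but this is the same argument.
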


We refer to the sphere of radius \(\sqrt{\frac{d-1}{2d}}\) as the outsphere or circumsphere, which defines the minimal ball that contains \(\Omega_d\). All pure states lie on the surface of this ball, while all mixed states are inside the ball.

\begin{remark}
    While every valid density matrix has a unique Bloch vector, it is not the case that every vector, \(\vec{b}\) such that \(\|\vec{b}\| \leq \sqrt{\frac{d-1}{2d}}\), gives a valid density matrix, i.e., \(\Omega_d\) is not a ball for \(d \geq 3\).\footnote{This is because the resulting matrix might not be PSD. This is easy to see in the \(k=3\) case with \(\frac{1}{3} I + \frac{1}{\sqrt{3}}\Lambda^d_{1}\), which has an eigenvalue of \(\frac{1}{3}-\frac{1}{\sqrt{3}} < 0\).\vspace{-0.05in}}
\end{remark}

The region of valid pure states is relatively small. This is characterized by the fact that the set of Bloch vectors that correspond to pure states is diffeomorphic to \(\C P^{d-1}\), the complex projective space in \(d\) complex dimensions, which, as a real manifold, has dimension \(2(d-1)\). This is in contrast to the outsphere, which is a sphere in \(\R^{d^2-1}\). So, rounding to pure states requires a more complete understanding of the geometry of pure states. 
However, within the convex region of valid Bloch vectors, \(\Omega_d\), there is a well-known maximal ball, which gives a sphere that we can still round to.

\begin{proposition}\label{insphere_prop} Within \(\Omega_d\), there is a maximal ball of radius \(\frac{1}{\sqrt{2d(d-1)}}\) that consists entirely of Bloch vectors that correspond to valid density matrices.
\end{proposition}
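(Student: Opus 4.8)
The plan is to translate the positive-semidefiniteness condition on the density matrix into a bound on its smallest eigenvalue, and then control that eigenvalue purely in terms of the Bloch-vector norm. First I would note that for any $\vec b \in \R^{d^2-1}$ the matrix $\rho = \tfrac1d I + \sum_a b_a \Gamma^a$ from \eqref{basic_bloch_vec_decomp} is automatically Hermitian and has trace $1$, since each $\Gamma^a$ is Hermitian and traceless. Hence $\vec b \in \Omega_d$ if and only if $\rho \succcurlyeq 0$, equivalently $\lambda_{\min}(M) \geq -\tfrac1d$ where $M := \sum_a b_a \Gamma^a = \rho - \tfrac1d I$. Using $\tr(\Gamma^a \Gamma^b) = 2\delta_{ab}$ we obtain $\tr(M) = 0$ and $\tr(M^2) = 2\|\vec b\|^2$.

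The core step is a linear-algebra estimate: for every traceless Hermitian $d \times d$ matrix $M$, $\lambda_{\min}(M) \geq -\sqrt{\tfrac{d-1}{d}\,\tr(M^2)}$. I would prove this by working with the eigenvalues $\lambda_1 \geq \cdots \geq \lambda_d$ of $M$: setting $t := -\lambda_d \geq 0$, the constraint $\sum_i \lambda_i = 0$ gives $\sum_{i<d}\lambda_i = t$, and Cauchy--Schwarz yields $t^2 \leq (d-1)\sum_{i<d}\lambda_i^2 = (d-1)\bigl(\tr(M^2) - t^2\bigr)$, hence $t^2 \leq \tfrac{d-1}{d}\tr(M^2)$, with equality exactly when $\lambda_1 = \cdots = \lambda_{d-1} = \tfrac{t}{d-1}$. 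Substituting $\tr(M^2) = 2\|\vec b\|^2$ gives $\lambda_{\min}(M) \geq -\sqrt{\tfrac{2(d-1)}{d}}\,\|\vec b\|$.

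Then I would conclude the containment: if $\|\vec b\| \leq \tfrac{1}{\sqrt{2d(d-1)}}$ then $\sqrt{\tfrac{2(d-1)}{d}}\,\|\vec b\| \leq \tfrac1d$, so $\lambda_{\min}(M) \geq -\tfrac1d$ and $\rho \succcurlyeq 0$; thus the closed ball of radius $\tfrac{1}{\sqrt{2d(d-1)}}$ about the origin (the maximally mixed state) lies inside $\Omega_d$. For maximality I would exhibit a boundary point of $\Omega_d$ on this sphere: take $\rho = \tfrac{1}{d-1}\bigl(I - \ket{1}\bra{1}\bigr)$, a genuine density matrix of rank $d-1$, hence on $\partial\Omega_d$; a direct computation gives $\tr\bigl((\rho - \tfrac1d I)^2\bigr) = \tfrac{1}{d(d-1)}$, so its Bloch vector has norm exactly $\tfrac{1}{\sqrt{2d(d-1)}}$. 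Finally, since $\Omega_d$ is invariant under the norm-preserving linear action $\vec b \mapsto$ (Bloch vector of $U\rho U^*$) of $SU(d)$, whose unique fixed point is the origin, the set of centers of maximal inscribed balls (a convex, symmetry-invariant set) must contain the origin, so the incenter of $\Omega_d$ is the maximally mixed state and the boundary point just constructed certifies that no larger ball about it fits.

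The main obstacle is the eigenvalue estimate in the second paragraph — it is the one genuinely nontrivial inequality, although it reduces quickly to Cauchy--Schwarz — together with the modest care needed to justify that the maximal inscribed ball is centered at the maximally mixed state, which is where the $SU(d)$-symmetry of $\Omega_d$ and the convexity of the set of incenters are used.
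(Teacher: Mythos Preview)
Your proof is correct and follows essentially the same approach as the paper: both establish containment via an eigenvalue inequality on the traceless part $M = \rho - \tfrac{1}{d}I$ (the paper argues the contrapositive, bounding $\tr(\rho^2)$ from below when an eigenvalue is negative, while you bound $\lambda_{\min}(M)$ directly via Cauchy--Schwarz---these are the same inequality), and both exhibit the same boundary point for maximality (your $\tfrac{1}{d-1}(I - \ket{1}\bra{1})$ is exactly the density matrix whose Bloch vector the paper constructs as a scaled sum of $d-1$ orthogonal pure-state Bloch vectors). Your additional $SU(d)$-symmetry argument that the incenter must be the origin is a nice strengthening not present in the paper, which only proves maximality among balls centered at the maximally mixed state.
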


We call the surface of this ball the insphere. To relate this back to \cref{MixedProd_def}, we give the following.

\begin{proposition}\label{prop_tr_rho2_vs_bloch_vec}
    For a density matrix, \(\rho\), with corresponding Bloch vector, \(\vec{b}\), we have that \(\tr(\rho^2) = \frac{1}{d-1}\) if and only if \(\|\vec{b}\| = \frac{1}{\sqrt{2d(d-1)}}\).
\end{proposition}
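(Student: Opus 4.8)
The plan is to compute $\tr(\rho^2)$ directly in terms of $\vec{b}$ using the defining orthogonality relation of the matrix basis. Write $\rho = \frac{1}{d}I + \sum_{a=1}^{d^2-1} b_a \Lambda^a$ as in \eqref{basic_bloch_vec_decomp} (taking $\{\Gamma^a\}_a = \{\Lambda^a\}_a$, which satisfies $\tr(\Lambda^a \Lambda^b) = 2\delta_{ab}$). Squaring gives
\[
\rho^2 = \frac{1}{d^2}I + \frac{2}{d}\sum_{a} b_a \Lambda^a + \sum_{a,b} b_a b_b \Lambda^a \Lambda^b .
\]
Now take the trace term by term: $\tr(I) = d$, each $\tr(\Lambda^a) = 0$ since the GGM matrices are traceless, and $\tr(\Lambda^a \Lambda^b) = 2\delta_{ab}$. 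Hence
\[
\tr(\rho^2) = \frac{1}{d} + 2\sum_{a} b_a^2 = \frac{1}{d} + 2\|\vec{b}\|^2 .
\]

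With this identity in hand, the proposition is a one-line algebraic equivalence. Setting $\tr(\rho^2) = \frac{1}{d-1}$ is equivalent to $2\|\vec{b}\|^2 = \frac{1}{d-1} - \frac{1}{d} = \frac{1}{d(d-1)}$, i.e. to $\|\vec{b}\|^2 = \frac{1}{2d(d-1)}$, which is exactly $\|\vec{b}\| = \frac{1}{\sqrt{2d(d-1)}}$. Both implications follow immediately since every step is an equivalence.

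I do not anticipate a genuine obstacle here; the only thing to be careful about is invoking the correct normalization $\tr(\Lambda^a \Lambda^b) = 2\delta_{ab}$ (stated in \cref{section_mat_basis}) rather than an unnormalized version, since the factor of $2$ is what produces the $2$ in $\tr(\rho^2) = \frac{1}{d} + 2\|\vec{b}\|^2$ and hence the $2d(d-1)$ under the radical. It is also worth recording the intermediate identity $\tr(\rho^2) = \frac{1}{d} + 2\|\vec{b}\|^2$ explicitly, as it simultaneously re-derives the purity of the maximally mixed state ($\vec{b} = 0$ gives $\tr(\rho^2) = 1/d$) and, combined with $\tr(\rho^2) \le 1$, recovers the circumsphere bound $\|\vec{b}\| \le \sqrt{\frac{d-1}{2d}}$ of \cref{outsphere_prop}, so stating it in this form keeps the section coherent.
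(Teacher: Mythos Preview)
Your proof is correct and matches the paper's approach: the paper derives the slightly more general identity $\tr(\rho\rho') = \tfrac{1}{d} + 2\langle \vec{b},\vec{b}'\rangle$ (\cref{prop_tr_rho_rho_and_bloch_vec_inner_prod}) and then specializes to $\rho'=\rho$, whereas you compute the $\rho'=\rho$ case directly, but the computation and the concluding algebra are identical.
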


We can observe that for \(d=2\), the outsphere is equal to the insphere, and so the Gharibian-Parekh rounding algorithm can equally be seen as rounding to Bloch vectors on the insphere. For this reason, our rounding algorithm is a natural generalization of the Gharibian-Parekh rounding algorithm. 

\subsection{Analysis of Product state solutions}\label{analysis_prod_state_section}

In this section, we look at the energy of \textQMCdProb for pure states and mixed product state solutions. We start with an analysis of product state solutions made up of pure states.

As described above, all pure states have Bloch vectors with \(\|\vec{b}\| = \sqrt{\frac{d-1}{2d}}\). Because it is useful to work with unit vectors, we rescale \(\Omega_d\) by a factor of \(\sqrt{\frac{2d}{d-1}}\), which gives the new equation for the Bloch vector decomposition as
\begin{equation}\label{rescaled1_bloch_vec_decomp}
    \rho = \frac{1}{d}I + \sqrt{\frac{d-1}{2d}} \sum_{a=1}^{d^2-1} b_a \Lambda^a
\end{equation}
Now, pure density matrices have unit Bloch vectors. We next give the following important proposition.
\begin{proposition}[\cite{jakobczyk_geometry_2001,byrd_characterization_2003}]\label{bloch_vec_angle_prop}
    For two pure states, \(\rho\) and \(\rho'\), with Bloch vectors \(\vec{b}\) and \(\vec{b}'\) as defined in \eqref{rescaled1_bloch_vec_decomp}, we have that \(\braket{\vec{b},\vec{b}'} \geq -\frac{1}{d-1}\). Furthermore, these pure states are orthogonal, i.e., \(\tr(\rho \rho') = 0\), if and only if \(\braket{\vec{b},\vec{b}'} = -\frac{1}{d-1}\).
\end{proposition}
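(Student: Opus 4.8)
The plan is to prove the statement by a direct computation of $\tr(\rho\rho')$ in terms of the rescaled Bloch vectors. First I would substitute the decomposition \eqref{rescaled1_bloch_vec_decomp} for both $\rho$ and $\rho'$ and expand $\tr(\rho\rho')$ into four contributions: the $I$--$I$ term, the two $I$--$\Lambda$ cross terms, and the $\Lambda$--$\Lambda$ term. Using $\tr(I)=d$, the traceless-ness $\tr(\Lambda^a)=0$, and the orthogonality relation $\tr(\Lambda^a\Lambda^b)=2\delta_{ab}$ recorded in \cref{section_mat_basis}, the two cross terms vanish and the surviving terms collapse to $\tr(\rho\rho') = \frac{1}{d} + \frac{d-1}{d}\braket{\vec{b},\vec{b}'}$.

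Next I would invoke positivity: since $\rho$ and $\rho'$ are density matrices, both are PSD, so $\tr(\rho\rho') = \tr\bigl(\sqrt{\rho}\,\rho'\,\sqrt{\rho}\bigr)\ge 0$. Combining this with the identity above and rearranging yields $\braket{\vec{b},\vec{b}'}\ge -\frac{1}{d-1}$, with equality if and only if $\tr(\rho\rho')=0$. For the orthogonality characterization I would note that for pure states $\rho=\ket{\psi}\bra{\psi}$ and $\rho'=\ket{\phi}\bra{\phi}$ one has $\tr(\rho\rho')=|\braket{\psi|\phi}|^2$, so the condition $\tr(\rho\rho')=0$ is exactly orthogonality of the underlying state vectors; hence the equality case of the inner-product bound coincides precisely with orthogonal pure states.

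There is no real obstacle here — the argument is a short normalization bookkeeping exercise — but two small points deserve care. First, the rescaling factor $\sqrt{(d-1)/(2d)}$ from \eqref{rescaled1_bloch_vec_decomp} must be tracked through the $\Lambda$--$\Lambda$ term so that it combines with $\tr(\Lambda^a\Lambda^b)=2\delta_{ab}$ to produce exactly the coefficient $\frac{d-1}{d}$; a stray factor here would change the stated constant $-\frac{1}{d-1}$. Second, the inequality itself does not actually use purity — it holds for any pair of density matrices — and purity enters only to guarantee $\|\vec{b}\|=\|\vec{b}'\|=1$ (via \cref{outsphere_prop}), which is what makes $\braket{\vec{b},\vec{b}'}\ge -\frac{1}{d-1}$ a genuine angle constraint and ensures the bound is attained, e.g. by any two orthogonal pure states. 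Since the proposition is quoted from \cite{jakobczyk_geometry_2001,byrd_characterization_2003}, the only real content is verifying that this paper's GGM normalization reproduces the cited bound.
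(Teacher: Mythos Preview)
Your proposal is correct and matches the paper's own proof essentially line for line: compute $\tr(\rho\rho') = \frac{1}{d} + \frac{d-1}{d}\braket{\vec{b},\vec{b}'}$ from the rescaled decomposition and the GGM orthogonality relations, then use $\tr(\rho\rho')\ge 0$ for PSD matrices to extract the bound, with equality precisely when $\tr(\rho\rho')=0$. Your added remark that the inequality itself does not use purity is also correct and consistent with the paper's treatment.
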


For \(d=2\), this gives us the property that antipodal points correspond to orthogonal states. And in general, for \(d \geq 2\), this lines up with the classical \textMCdProb problem \cite{frieze_improved_1997} where the colors are represented by vertices on a simplex, which also have this same value for the inner product between different vertices. In the quantum setting, this notion is referred to as the eigenvalue simplex. 

Next, we look at the energy achieved by a product state solution of pure states. We consider the case of having a pure product state solution.

\begin{proposition}\label{prop_pureprod_leq_1/2}
    For any graph \(G\), we always have that \(\textsc{PureProd}_{\textsc{QMC}_d}(G) \leq 1/2\).
\end{proposition}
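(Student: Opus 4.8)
The plan is to reduce the global energy of a pure product state to a sum of per-edge contributions, show each such contribution is controlled by the inner product of two unit Bloch vectors, and then invoke the tight lower bound on that inner product from \cref{bloch_vec_angle_prop}.

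First I would fix a pure product state $\rho_G = \bigotimes_{v \in V}\rho_v$ (so each $\rho_v$ satisfies $\rho_v^2 = \rho_v$) and observe that the reduced state on any edge $(u,v)$ is exactly $\rho_u \otimes \rho_v$, so that $\tr(\rho_G H_G) = \E_{(u,v)\sim E}\, \tr\bigl((\rho_u\otimes\rho_v)\,h\bigr)$ by \cref{QMCk_def}. Hence it suffices to bound a single edge term by $1/2$.

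Next I would expand one edge term using the Gell-Mann form of the edge interaction from \cref{QMCk_edge_interaction_gell-mann_prop}, $h = \tfrac12\tfrac{d-1}{d}I - \tfrac14\sum_a \Lambda^a\otimes\Lambda^a$. Writing each $\rho_v$ in the rescaled Bloch decomposition \eqref{rescaled1_bloch_vec_decomp} and using $\tr(\Lambda^a\Lambda^b)=2\delta_{ab}$ and $\tr(\Lambda^a)=0$, one gets $\tr(\rho_v\Lambda^a) = \sqrt{\tfrac{2(d-1)}{d}}\,b^{(v)}_a$, so the interaction part contributes $\tfrac14\sum_a \tr(\rho_u\Lambda^a)\tr(\rho_v\Lambda^a) = \tfrac{d-1}{2d}\langle \vec b^{(u)},\vec b^{(v)}\rangle$. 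Altogether $\tr\bigl((\rho_u\otimes\rho_v)h\bigr) = \tfrac{d-1}{2d}\bigl(1-\langle \vec b^{(u)},\vec b^{(v)}\rangle\bigr)$. Since $\rho_u,\rho_v$ are pure, their Bloch vectors are unit vectors and \cref{bloch_vec_angle_prop} gives $\langle \vec b^{(u)},\vec b^{(v)}\rangle \geq -\tfrac{1}{d-1}$, so $1-\langle \vec b^{(u)},\vec b^{(v)}\rangle \leq \tfrac{d}{d-1}$ and the edge term is at most $\tfrac{d-1}{2d}\cdot\tfrac{d}{d-1} = \tfrac12$. Averaging over edges yields $\textsc{PureProd}_{\textsc{QMC}_d}(G)\le 1/2$.

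I do not expect a genuine obstacle here; the only things to be careful about are correctly tracking the normalization constant $\sqrt{(d-1)/2d}$ in \eqref{rescaled1_bloch_vec_decomp} and the fact that the per-edge reduced states of a product state genuinely factor, so that the eigenvalue-simplex bound of \cref{bloch_vec_angle_prop} can be applied edge by edge. As a sanity check (and an alternative route that avoids Bloch vectors) one can instead use $h = I - P_{\mathrm{sym}}$ together with $P_{\mathrm{sym}} = \tfrac12(I+\mathrm{SWAP})$ to get $\bra{\psi_u}\!\bra{\psi_v} h \ket{\psi_u}\!\ket{\psi_v} = \tfrac12\bigl(1 - |\langle\psi_u|\psi_v\rangle|^2\bigr) \leq \tfrac12$ directly; I would present the Bloch-vector computation as the main argument since it is the one consistent with the framework set up in this section, and note the equality case $\langle \vec b^{(u)},\vec b^{(v)}\rangle = -\tfrac1{d-1}$ (orthogonal vertex states) as the condition for tightness on an edge.
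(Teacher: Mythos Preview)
Your proposal is correct and follows essentially the same approach as the paper: reduce to a single edge, expand using the Gell-Mann form \eqref{QMCk_edge_interaction_gell-mann_eq} together with the rescaled Bloch decomposition \eqref{rescaled1_bloch_vec_decomp} to obtain $\tr((\rho_u\otimes\rho_v)h)=\tfrac{d-1}{2d}\bigl(1-\langle\vec b_u,\vec b_v\rangle\bigr)$, and then apply \cref{bloch_vec_angle_prop}. Your additional SWAP/$P_{\mathrm{sym}}$ sanity check is a nice alternative derivation that the paper does not include.
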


We note that in contrast, \(\QMCdProb(G)\) can be as large as 1.

\begin{proof}
For each \(v\in V\) let \(\vec{b}_v \in S^{d^2 - 2}\) be the Bloch vector for \(\rho_v\) as given by the decomposition in \eqref{rescaled1_bloch_vec_decomp}. Then the energy of \(\rho_G := \bigotimes_{v \in V} \rho_v\) is given by
\begin{equation}
\tr(\rho_G H_G) = \E_{(u,v) \sim E} \tr\left(h_{uv} \rho_u \otimes \rho_v\right)
\end{equation}

We then consider only a single one of these edge interaction terms to get that
\begin{align*}
    \tr\left(h_{uv} \rho_u \otimes \rho_v\right) &= \tr\left(\left(\frac{d-1}{2d} I - \frac{1}{4} \sum_{a=1}^{d^2-1} \Lambda^a \otimes \Lambda^a\right) \left(\frac{1}{d} I + \sqrt{\frac{d-1}{2d}}\; \vec{b}_u \cdot \vec{\Lambda}\right) \otimes \left(\frac{1}{d} I + \sqrt{\frac{d-1}{2d}}\; \vec{b}_v \cdot \vec{\Lambda}\right)\right) \\
    &= \frac{d-1}{2d}\tr\left(\frac{1}{d^2}I\right) - \frac{d-1}{8d} \sum_{a=1}^{d^2-1} \vec{b}_u(a) \vec{b}_v(a) \tr(\Lambda^2_a \otimes \Lambda^2_a) \\
    &= \frac{1}{2}\left(\frac{d-1}{d}\right)\left(1-\braket{\vec{b}_u, \vec{b}_v}\right)
\end{align*}

We can then upper bound this using \cref{bloch_vec_angle_prop} which gives us that \(\tr\left(h_{uv} \rho_u \otimes \rho_v\right) \leq \frac{1}{2}\). And putting that all together, we get that the energy of \(\ket{\psi_G}\) is equivalently given by

\[\tr(\rho_G H_G) = \frac{1}{2}\left(\frac{d-1}{d}\right)\E_{(u,v) \sim E} \left(1-\braket{\vec{b}_u, \vec{b}_v}\right) \leq \frac{1}{2}\]
\end{proof}

Note, unlike for the qubit case, this does not easily allow us to re-express the optimization over Bloch vectors because the set of pure states makes up a small portion of the outsphere. We can, of course, restrict to the subset of \(S^{d^2 - 2}\), which are Bloch vectors that correspond to valid density matrices, as we do in the following proposition, but that does not lend well to a rounding algorithm.

\begin{proposition}\label{prop_pure_prod_prob_rewrite} We can rewrite the pure product state value as follows, where 
\begin{equation*}
    \Omega_d^{\textsc{ext}} = \{\vec{b} \in S^{d^2-2}\ |\ \vec{b} \star \vec{b} = \vec{b}\} \subset \Omega_d
\end{equation*}
are the extremal points, i.e., those that correspond to pure states. We use \((\vec{b} \star \vec{b})_{c} := \sqrt{\frac{2d}{d-1}}\sum_{a,b = 1}^{d^2-1} d_{abc} b_a b_b\) to denote the quantity referred to as the star product (where \(d_{abc}\) are the symmetric constants of the matrix basis using in the Bloch vector decomposition \eqref{rescaled1_bloch_vec_decomp}).
    \begin{equation}
        \textsc{PureProd}_{\textsc{QMC}_d}(G) = \max_{f:V\ra \Omega_d^{\textsc{ext}}} \frac{1}{2}\left(\frac{d-1}{d}\right)\E_{(u,v) \sim E} \left(1-\braket{f(u), f(v)}\right)
    \end{equation}
\end{proposition}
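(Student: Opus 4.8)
The plan is to reuse the edge-energy identity already extracted inside the proof of \cref{prop_pureprod_leq_1/2}: for any single-qudit states \(\rho_u,\rho_v\) with rescaled Bloch vectors \(\vec b_u,\vec b_v\in S^{d^2-2}\) as in \eqref{rescaled1_bloch_vec_decomp}, that proof shows
\[
\tr\!\bigl(h_{uv}\,\rho_u\otimes\rho_v\bigr)=\frac{1}{2}\left(\frac{d-1}{d}\right)\bigl(1-\braket{\vec b_u,\vec b_v}\bigr),
\]
hence \(\tr(\rho_G H_G)=\frac{1}{2}\left(\frac{d-1}{d}\right)\E_{(u,v)\sim E}\bigl(1-\braket{\vec b_u,\vec b_v}\bigr)\) whenever \(\rho_G=\bigotimes_{v\in V}\rho_v\). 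Given this, the only remaining content is to identify the set of \emph{pure} single-qudit density matrices with \(\Omega_d^{\textsc{ext}}\) under the correspondence \(\rho\mapsto\vec b\); the optimization over pure product states in \cref{def_pure_prod_prob} then transports verbatim to an optimization over assignments \(f\colon V\to\Omega_d^{\textsc{ext}}\).

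To establish that bijection I would square the decomposition \eqref{rescaled1_bloch_vec_decomp}. Writing \(t=\sqrt{(d-1)/(2d)}\) and expanding \(\rho^2\) with the product rule \eqref{gell-mann_prod_property}, the quadratic term \(\sum_{a,b}b_a b_b\,\Lambda^a\Lambda^b\) splits into an identity piece \(\tfrac{2}{d}\|\vec b\|^2 I\), a \(d_{abc}\)-piece, and an \(f_{abc}\)-piece; the last vanishes because \(f_{abc}\) is antisymmetric in \(a,b\) while \(b_a b_b\) is symmetric. Since any \(\rho\) of the form \eqref{rescaled1_bloch_vec_decomp} is automatically Hermitian with \(\tr\rho=1\), the condition \(\rho^2=\rho\) is equivalent to \(\rho\) being a rank-one orthogonal projector, i.e.\ to \(\rho\) being pure (and in particular PSD, since its eigenvalues solve \(\lambda^2=\lambda\)); and \(\rho^2=\rho\) now reduces to matching the \(I\)-component and the \(\Lambda^c\)-components of the expansion. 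Matching the \(I\)-component gives \(\|\vec b\|=1\), consistent with \cref{outsphere_prop} after the rescaling, and matching the \(\Lambda^c\)-components reproduces exactly the star-product relation \(\vec b\star\vec b=\vec b\) that defines \(\Omega_d^{\textsc{ext}}\) (this is the Bloch-vector purity characterization of \cite{jakobczyk_geometry_2001,byrd_characterization_2003}). The converse direction is the same computation read backwards: any \(\vec b\in\Omega_d^{\textsc{ext}}\) substituted into \eqref{rescaled1_bloch_vec_decomp} yields a Hermitian trace-\(1\) idempotent, hence a valid pure state, which also shows \(\Omega_d^{\textsc{ext}}\subseteq\Omega_d\).

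Finally, substituting \(\vec b_v=f(v)\) for an arbitrary \(f\colon V\to\Omega_d^{\textsc{ext}}\) into the energy identity and taking the maximum over all such \(f\) yields the claimed formula, with the maximum attained because \(\Omega_d^{\textsc{ext}}\) is compact and the objective is continuous. The main obstacle is purely bookkeeping: carrying the factor \(t=\sqrt{(d-1)/(2d)}\) and the normalization of the symmetric constants \(d_{abc}\) through the computation of \(\rho^2\) so that the \(\Lambda^c\)-matching lands precisely on the stated definition of \(\vec b\star\vec b\); the structural steps (the \(f_{abc}\) cancellation, the idempotent-\(\Rightarrow\)-PSD argument, and compactness of \(\Omega_d^{\textsc{ext}}\)) are all routine.
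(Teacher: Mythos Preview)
Your proposal is correct and takes essentially the same approach as the paper: both invoke the edge-energy identity from the proof of \cref{prop_pureprod_leq_1/2} and then identify pure single-qudit states with \(\Omega_d^{\textsc{ext}}\). The paper simply cites this identification from \cite{goyal_geometry_2016,bengtsson_space_2006,byrd_characterization_2003}, whereas you go one step further and sketch why \(\rho^2=\rho\) reduces to \(\|\vec b\|=1\) together with \(\vec b\star\vec b=\vec b\) via the product rule \eqref{gell-mann_prod_property}; this extra detail is sound and your caveat about tracking the normalization constant is apt.
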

\begin{proof}[Proof (sketch)]
    This follows from the steps used in the proof of \cref{prop_pureprod_leq_1/2} combined with the fact that Bloch vectors for pure states are exactly the set \(\Omega_d^{\textsc{ext}} = \{\vec{b} \in S^{d^2-2}\ |\ \vec{b} \star \vec{b} = \vec{b}\}\) \cite{goyal_geometry_2016,bengtsson_space_2006,byrd_characterization_2003}.
\end{proof}

Regardless of whether we could round to a pure product state solution, we can't hope to get a rounding algorithm with an approximation ratio of more than \(1/2\). The situation is worse when we consider the product state solution of mixed states. Specifically, we consider states whose Bloch vectors lie on the maximal sphere in \(\Omega_d\), which has radius \(\sqrt{\frac{1}{2d(d-1)}}\). Again, it is useful to work with unit vectors, so we rescale \(\Omega_d\) by a factor of \(\sqrt{2d(d-1)}\), which gives the new equation for the Bloch vector decomposition as
\begin{equation}\label{rescaled2_bloch_vec_decomp}
    \rho = \frac{1}{d}I + \frac{1}{\sqrt{2d(d-1)}} \vec{b} \cdot \vec{\Lambda}
\end{equation}

Now, any density matrix with a purity of \(\tr(\rho^2) = \frac{1}{d-1}\) will have a unit Bloch vector. We note that pure states in this model have Bloch vectors with \(\|\vec{b}\| = d-1\). Additionally, we note that we no longer have the property that \(\braket{\vec{b}_u, \vec{b}_v} \geq -\frac{1}{d-1}\) when \(\vec{b}_u, \vec{b}_v \in S^{d^2-2}\). Importantly, this means we lose the notion of orthogonality among these mixed states.

Now, we can look at the energy achieved by a product state solution of mixed states with bounded purity. Namely, \(\rho_G = \bigotimes_{v \in V} \rho_v\), where \(\rho_v\) is a mixed state which lies on the maximal sphere of density matrices, i.e., \(\tr(\rho_v^2) = \frac{1}{d-1}\) for all \(v \in V\).

\begin{proposition}\label{prop_rewite_mixed_state_and_upper_bound}
    We can rewrite the mixed product state value as follows.
    \begin{equation}
        \textsc{MixedProd}_{\textsc{QMC}_d}(G) = \max_{f:V \ra S^{d^2-2}} \frac{1}{2}\left(\frac{d-1}{d}\right)\E_{(u,v) \sim E} \left(1-\frac{\braket{f(u), f(v)}}{(d-1)^2}\right)
    \end{equation}
    Furthermore, we have that \(\textsc{MixedProd}_{\textsc{QMC}_d}(G) \leq \frac{1}{2} - o_d(1)\).
\end{proposition}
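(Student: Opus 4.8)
The strategy is to repeat, essentially verbatim, the single-edge computation from the proof of \cref{prop_pureprod_leq_1/2}, but with the rescaled decomposition \eqref{rescaled2_bloch_vec_decomp} in place of \eqref{rescaled1_bloch_vec_decomp}, and then to bound the resulting objective using nothing more than $\langle f(u),f(v)\rangle\ge -1$ for unit vectors.

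For the rewriting: fix an assignment $v\mapsto\rho_v=\tfrac1d I+\tfrac1{\sqrt{2d(d-1)}}\,\vec b_v\cdot\vec\Lambda$ with $\vec b_v\in S^{d^2-2}$, and expand a single edge term $\tr(h_{uv}\,\rho_u\otimes\rho_v)$ using the Gell-Mann form of $h$ from \cref{QMCk_edge_interaction_gell-mann_prop}. The identity part of $h$ contributes $\tfrac12\tfrac{d-1}{d}$ since $\tr(\rho_u\otimes\rho_v)=1$; in the $-\tfrac14\sum_a\Lambda^a\otimes\Lambda^a$ part the constant cross-terms vanish by tracelessness of the $\Lambda^a$, and the surviving contribution is $-\tfrac14\sum_a\tr(\Lambda^a\rho_u)\tr(\Lambda^a\rho_v)$, which by $\tr(\Lambda^a\Lambda^b)=2\delta_{ab}$ equals $-\tfrac1{2d(d-1)}\langle\vec b_u,\vec b_v\rangle$. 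Collecting, $\tr(h_{uv}\,\rho_u\otimes\rho_v)=\tfrac12\tfrac{d-1}{d}\bigl(1-\tfrac{\langle\vec b_u,\vec b_v\rangle}{(d-1)^2}\bigr)$; averaging over edges gives exactly the stated objective. To see that maximizing this over all $f:V\to S^{d^2-2}$ yields precisely $\textsc{MixedProd}_{\textsc{QMC}_d}(G)$, I invoke \cref{insphere_prop,prop_tr_rho2_vs_bloch_vec}: after rescaling by $\sqrt{2d(d-1)}$ the insphere becomes the unit sphere, so \emph{every} $\vec b_v\in S^{d^2-2}$ gives a legitimate density matrix, and by \cref{prop_tr_rho2_vs_bloch_vec} these are exactly the density matrices of purity $\tfrac1{d-1}$. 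Hence unit-vector assignments $f$ are in bijection with the feasible product states of \cref{MixedProd_def}, with matching value. It is worth stressing that, unlike the pure-state reformulation \cref{prop_pure_prod_prob_rewrite}, there is no star-product constraint coupling the $\vec b_v$, so this really is a clean vector program over $(S^{d^2-2})^V$.

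For the upper bound: since $\langle f(u),f(v)\rangle\ge -1$, each edge term is at most $1+\tfrac1{(d-1)^2}$, so
\[
\textsc{MixedProd}_{\textsc{QMC}_d}(G)\;\le\;\tfrac12\Bigl(\tfrac{d-1}{d}\Bigr)\Bigl(1+\tfrac1{(d-1)^2}\Bigr)\;=\;\tfrac12-\tfrac{d-2}{2d(d-1)}.
\]
As $\tfrac{d-2}{2d(d-1)}=\Theta(1/d)\to 0$, this is $\tfrac12-o_d(1)$, giving the second assertion (and it is tight on $K_2$ by taking antipodal Bloch vectors, though that is not needed).

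\textbf{Main obstacle.} There is no serious obstacle: the one slightly delicate point is the surjectivity of the Bloch correspondence onto the purity-$\tfrac1{d-1}$ shell (the analogue of ``every unit vector is a valid pure state'' in the qubit case fails for $d\ge3$), but this is exactly what \cref{insphere_prop,prop_tr_rho2_vs_bloch_vec} supply; the rest is the same bookkeeping as in \cref{prop_pureprod_leq_1/2}.
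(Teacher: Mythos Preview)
Your proposal is correct and follows essentially the same approach as the paper: both expand a single edge term using \cref{QMCk_edge_interaction_gell-mann_prop} and the decomposition \eqref{rescaled2_bloch_vec_decomp} to obtain $\tfrac12\tfrac{d-1}{d}\bigl(1-\tfrac{\langle\vec b_u,\vec b_v\rangle}{(d-1)^2}\bigr)$, invoke \cref{insphere_prop,prop_tr_rho2_vs_bloch_vec} for the bijection with $S^{d^2-2}$, and then bound via $\langle\vec b_u,\vec b_v\rangle\ge -1$ to get $\tfrac12-\tfrac{d-2}{2d(d-1)}=\tfrac12-o_d(1)$.
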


\begin{proof}
For each \(v\in V\) let \(\vec{b}_v \in S^{d^2 - 2}\) be its Bloch vector as given by the decomposition in \eqref{rescaled2_bloch_vec_decomp}. Then, as we did for \cref{bloch_vec_angle_prop}, we consider the energy for only a single edge interaction to get that

\begin{align*}
    \tr\left(h_{uv} \rho_u \otimes \rho_v\right) &= \tr\left(\left(\frac{d-1}{2d} I - \frac{1}{4} \sum_{a = 1}^{d^2-1} \Lambda^a \otimes \Lambda^a\right) \left(\frac{1}{d} I + \frac{1}{\sqrt{2d(d-1)}}\; \vec{b}_u \cdot \vec{\Lambda}\right) \otimes \left(\frac{1}{d} I + \frac{1}{\sqrt{2d(d-1)}}\; \vec{b}_v \cdot \vec{\Lambda}\right)\right) \\
    &= \frac{d-1}{2d}\tr\left(\frac{1}{d^2}I\right) - \frac{1}{8d(d-1)} \sum_{a=1}^{d^2-1} \vec{b}_u(a) \vec{b}_v(a) \tr((\Lambda^a)^2 \otimes (\Lambda^a)^2) \\
    &= \frac{1}{2}\left(\frac{d-1}{d}\right)\left(1-\frac{\braket{\vec{b}_u, \vec{b}_v}}{(d-1)^2}\right)
\end{align*}

This expression is maximized when \(\braket{\vec{b}_u, \vec{b}_v} = -1\), which gives us that 
\begin{equation}\label{opt_ratio_for_mix_prod_state}
    \tr\left(h_{uv} \rho_u \otimes \rho_v\right) \leq \frac{(d-1)^2+1}{2d(d-1)} = \frac{1}{2} - \frac{d-2}{2d(d-1)} = \frac{1}{2} - o_d(1)
\end{equation}

Then the energy of the mixed product state solution \(\rho_G = \bigotimes_{v \in V} \rho_v\) is given by.

\[\tr(H_G \rho_G) = \frac{1}{2}\left(\frac{d-1}{d}\right)\E_{(u,v) \sim E} \left(1-\frac{\braket{\vec{b}_u, \vec{b}_v}}{(d-1)^2}\right) \leq \frac{1}{2} - o_d(1)\]

Finally, we can optimize over the Bloch vectors \(\{\vec{b}_u\}_{v \in V}\), because they are in bijection with density matrices such that \(\tr(\rho^2) = \frac{1}{d-1}\), as shown by \cref{insphere_prop,prop_tr_rho2_vs_bloch_vec}, and the fact that the GGM matrices and the identity form a real basis for Hermitian matrices.
\end{proof}

\begin{remark}
    While the approximation ratio for any algorithm that gives these mixed states is bounded by \(\frac{1}{2} - o_d(1)\), we note that this is not too far from the \(\frac{1}{2}\) bound for pure states. At it's worse, when \(d=3,4\) we have that the energy is at most \(\frac{5}{12} \approx 0.41667\).
\end{remark}

Lastly, we look at the energy of a random assignment of pure states, or rather the maximally mixed state \(\rho_G^* = \frac{1}{d^n} I = \bigotimes_{u \in V} \rho^*\), where \(\rho^* = \frac{1}{d} I\) we use to denote the maximally mixed state of a single qudit. This can be seen as analogous to the classical random assignment. Furthermore, it is also a product-state solution. 

\begin{proposition}
     The energy of the maximally mixed states is \(\frac{1}{2}\left(1-\frac{1}{d}\right)\).
\end{proposition}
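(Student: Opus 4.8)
The plan is to exploit the product structure $\rho_G^* = \bigotimes_{u\in V}\rho^*$ together with linearity of the trace to reduce the computation to a single edge interaction. By \cref{QMCk_def}, $\tr(\rho_G^* H_G) = \E_{(u,v)\sim E}\tr\!\left(h_{uv}\,\rho^*\otimes\rho^*\right)$, so it suffices to evaluate $\tr\!\left(h\,(\rho^*\otimes\rho^*)\right)$ for one copy of the \textQMCdProb edge interaction and observe that this value is the same on every edge.

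For the single-edge computation I would use \cref{QMCk_edge_interaction_gell-mann_prop}, writing $h = \tfrac{1}{2}\bigl(\tfrac{d-1}{d}\bigr)I - \tfrac14\sum_{a=1}^{d^2-1}\Lambda^a\otimes\Lambda^a$, and note that $\rho^*\otimes\rho^* = \tfrac{1}{d^2}I$. The key point is that each $\Lambda^a$ is traceless, so $\tr\!\bigl((\Lambda^a\otimes\Lambda^a)\tfrac{1}{d^2}I\bigr) = \tfrac{1}{d^2}\tr(\Lambda^a)^2 = 0$; equivalently, the maximally mixed state has zero Bloch vector, so only the identity component of $h$ contributes. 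What remains is $\tfrac12\bigl(\tfrac{d-1}{d}\bigr)\tr\!\bigl(\tfrac{1}{d^2}I\bigr) = \tfrac12\bigl(\tfrac{d-1}{d}\bigr)$, since $\tr(I)=d^2$ on $\H_d^{\otimes 2}$. (As a sanity check one can instead use that $h$ is the orthogonal projector onto the antisymmetric subspace $\wedge^2\H_d$, of dimension $\binom d2$, giving $\tr(h\,\tfrac{1}{d^2}I) = \tfrac{1}{d^2}\binom d2 = \tfrac{d-1}{2d}$, the same value.)

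Since this value is independent of the chosen edge, averaging gives $\tr(\rho_G^* H_G) = \E_{(u,v)\sim E}\tfrac12\bigl(\tfrac{d-1}{d}\bigr) = \tfrac12\bigl(1-\tfrac1d\bigr)$, as claimed. There is no real obstacle here: the statement is a direct calculation, and the only thing to be careful about is correctly tracking the dimension factor $\tr(I)=d^2$ and invoking tracelessness of the Gell-Mann matrices to kill the off-identity terms.
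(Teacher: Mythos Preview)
Your proposal is correct and follows essentially the same approach as the paper: reduce to a single edge, apply the Gell-Mann decomposition of $h$ from \cref{QMCk_edge_interaction_gell-mann_prop}, and use tracelessness of the $\Lambda^a$ to leave only the identity contribution. Your additional sanity check via $\operatorname{rank}(h)=\binom{d}{2}$ is a nice touch not present in the paper's proof.
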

\begin{proof}
    We consider the maximally mixed state over all vertices, which is given by \(\rho^*_G = \bigotimes_{u \in V} \rho^*_v\).
    We first consider the energy of a single-edge interaction.
    \begin{align*}
        \tr\left(h_{uv} \rho_* \otimes \rho_*\right) &= \tr\left(\left(\frac{d-1}{2d} I - \frac{1}{4} \sum_{a = 1}^{d^2-1} \Lambda^a \otimes \Lambda^a\right) \left(\frac{1}{d} I\right) \otimes \left(\frac{1}{d} I \right)\right) \\
        &=\frac{d-1}{2d}\tr\left(\frac{1}{d^2}I\right) - \frac{1}{4d^2} \sum_{a = 1}^{d^2-1} \tr(\Lambda^a \otimes \Lambda^a) \\
        &= \frac{d-1}{2d} = \frac{1}{2}\left(1-\frac{1}{d}\right)
    \end{align*}
which can then be extended to the full graph to get
\(\tr\left(H_G \rho_*^{\otimes V}\right) = \frac{1}{2}\left(1-\frac{1}{d}\right)\).
\end{proof}

Combined with the trivial upper bound of \(\QMCdProb(G) \leq 1\), we note that this gives an approximation no worse than \(\frac{1}{2}\left(1-\frac{1}{d}\right)\).

\section{Sum-of-Squares Hierarchies and SDP relaxations}\label{section_sos}

In this section, we derive the Semidefinite Program (SDP) that we used to relax the \textQMCdProb problem through the framework of noncommuting Sum of Squares (ncSoS) commonly referred to as the quantum Lasserre hierarchy. We note that much of this framework generalizes existing derivations of SDPs that optimize Hamiltonians over qubits \cite{barthel_solving_2012,baumgratz_lower_2012,brandao_product-state_2016,gharibian_almost_2019,parekh_application_2021,parekh_beating_2021,king_improved_2022,lee_optimizing_2022,hwang_unique_2022}.

\begin{definition}[General 2-Local Hamiltonian Problem]\label{basic_2LHP_def}
    The basic \(2\)-local Hamiltonian optimization problem asks to maximize the expected energy over a distribution, given by an interaction graph, \(G = (V,E,w)\), of local Hamiltonians, \(\{h_{uv}\ |\ (u,v) \in E\}\). That is, we wish to find the quantity of the following expression.
    \begin{equation}\label{basic_LHP}
        \text{OPT(G)} = \max_{\rho \in \mathcal{D}\left(\H_d^{\otimes n}\right)} \E_{(u,v) \sim E} \left[\tr\left(\rho h_\alpha\right)\right]
    \end{equation}
    Additionally, in this section, we make two assumptions about the local Hamiltonians, \(\{h_{uv}\ |\ (u,v) \in E\}\), namely that they are PSD and that they are invariant under conjugation of local unitaries.  
\end{definition}

Using a basis, \(\{\Lambda^a\}_{a \in [d^2-1]}\), of traceless, hermitian operators for a single qudit such that, and the identity such that \(\tr(\Lambda^a \Lambda^b) = 2 \delta_{ab}\) (e.g., the generalized Gell-Mann matrices), we define the following basis for hermitian operators over \(n\), \(d\)-dimensional qudits.
\begin{equation}\label{P_d^n}
    \mathcal{P}_d^n := \{\Lambda^a\ |\ a \in \{0, 1,2, \dotsc, d^2-1\}\}^{\otimes n}
\end{equation}
We again note that \(\Lambda^0 := \sqrt{\frac{2}{d}}I\) is used to denote the normalized identity.

\begin{remark}\label{remark_reababilty_but_more_confusing_Lambda0_I}
    Due to how we defined the subscript notation, we note that \(\Lambda^a_u := \Lambda^a_u \otimes I_{[n] \setminus \{u\}} \notin \mathcal{P}_d^n\) is not in \eqref{P_d^n}. It is instead a scalar multiple of the basis element \(\Lambda^a_u \otimes \Lambda^0_{[n] \setminus \{u\}}\). Nonetheless, for the sake of readability, we often will conflate the two, using the fact that \(I = \sqrt{\frac{d}{2}} \Lambda^0\), to simplify many of our expressions.
\end{remark}

Using \eqref{P_d^n}, a local Hamiltonian optimization problem can be expressed as an optimization over real vectors \(\vec{y} \in \R^{d^{2n}}\) such that \(\rho_{\vec{y}} := \frac{1}{2^n}\sum_{A \in \mathcal{P}_d^n} y(A) A\) is a valid density matrix (in particular, that \(\rho_{\vec{y}} \succcurlyeq 0\) and that \(\tr(\rho_{\vec{y}}) = 1\)). Conversely, given a density matrix \(\rho \in \mathcal{D}\left(\H_d^{\otimes n}\right)\), we can define the vector such that \(y_\rho(A) := \tr(\rho A)\) for \(A \in \mathcal{P}_d^n\). To influence how we define the SDP, we consider the bilinear form, \mbox{\(M: \mathcal{L}(\H_d^{\otimes n}) \times \mathcal{L}(\H_d^{\otimes n}) \ra \C\)}, defined below, that will become our moment matrix.
\begin{equation} \label{SDP_mat_goal}
M(A, B) := \tr(\rho A^* B)
\end{equation}
As a matrix, this is nothing but \(M = \left(\tr(\rho A^* B)\right)_{A,B \in \mathcal{P}_d^n}\).
We note that this gives an equivalent definition of a density matrix (namely, \(M\) such that \(M \succcurlyeq 0\), it knows about matrix multiplication, i.e., \mbox{\(M(A,B) = M(A',B')\)} such that \(A^*B = A'^*B'\), and that \(M(I,I) = 1\)). Then, this gives us an equivalent way to optimize; namely, we optimize \(M(I,H)\) over bilinear forms, \(M\), that satisfy these properties.

We note that because our basis is of exponential size, we are optimizing over an exponential number of variables. To counteract this, we consider the notion of a pseudo-density matrix. First, we define a new basis that bounds the degree of the basis elements. 

\begin{definition}
    For a basis element/monomial \(A \in \mathcal{P}_d^n\), we define its \emph{degree}, denoted \(\omega(A)\), to be the number of non-identity terms (up to scalar multiples) or the number of qudits such that \(A\) doesn't act as the identity (up to scalar multiples). For example \(\Lambda^a_u \otimes \Lambda^b_v \otimes \Lambda^0_{[n] \setminus \{u,v\}}\) has degree 2 assuming \(a,b \neq 0\). We extend this notion to an arbitrary operator to be the largest degree among all of its non-zero components.
\end{definition}

Here, we can think of an arbitrary operator \(B \in \H_d^{\otimes n}\) as a polynomial. We then define the following basis for the subspace of operators with degree/degree at most \(t\), denoted \(\mathcal{L}^{(t)}(\H_d^{\otimes n})\).
\begin{equation}\label{P_d^n(d)}
    \mathcal{P}_d^n(t) := \{A \in \mathcal{P}_d^n\ |\ \omega(A) \leq t\}
\end{equation}

\begin{definition}[degree-\(2t\) pseudo-density Matrix] 
    A matrix \(\Tilde{\rho} \in \mathcal{L}(\H_d^{\otimes n})\) is called a \emph{degree-\(2t\) pseudo-density matrix} over \(n\), \(d\)-dimensional qudits if \(\tr(\Tilde{\rho}) = 1\), \(\Tilde{\rho}^* = \Tilde{\rho}\), and \(\tr(\Tilde{\rho} A^* A) \geq 0\) for all \(A \in \mathcal{L}^{(t)}(\H_d^{\otimes n})\). Additionally, we denote the space of degree-\(2t\) pseudo-density matrices as 
    \[\Tilde{\mathcal{D}}^{(2t)}(\H_d^{\otimes n}) := \{\Tilde{\rho} \in \mathcal{L}(\H_d^{\otimes n})\ |\ \tr(\Tilde{\rho}) = 1,\ \Tilde{\rho}^* = \Tilde{\rho},\ \tr(\Tilde{\rho} A^* A) \geq 0 \text{ for all } A \in \mathcal{L}^{(t)}(\H_d^{\otimes n})\}\]
\end{definition}

This can be most easily seen as a relaxation of the PSD constraint on the density matrices, in particular: 

\begin{proposition}[degree-\(2n\) pseudo-density Matrices are Valid Density Matrices] 
    \(\Tilde{\mathcal{D}}^{(2n)}(\H_d^{\otimes n}) = \mathcal{D}(\H_d^{\otimes n})\).
\end{proposition}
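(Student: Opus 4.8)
The plan is to prove the two inclusions separately; only the reverse inclusion $\tilde{\mathcal{D}}^{(2n)}(\H_d^{\otimes n}) \subseteq \mathcal{D}(\H_d^{\otimes n})$ requires an argument, and even that is short once one makes the right observation about degrees.

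For $\mathcal{D}(\H_d^{\otimes n}) \subseteq \tilde{\mathcal{D}}^{(2n)}(\H_d^{\otimes n})$: given $\rho \in \mathcal{D}(\H_d^{\otimes n})$, it is Hermitian with unit trace, and for any operator $A \in \mathcal{L}(\H_d^{\otimes n})$ we have $\tr(\rho A^* A) = \tr(A \rho A^*) \geq 0$, since $\rho \succcurlyeq 0$ forces $A\rho A^* \succcurlyeq 0$. In particular this holds for every $A \in \mathcal{L}^{(n)}(\H_d^{\otimes n})$, so $\rho$ satisfies the defining conditions of a degree-$2n$ pseudo-density matrix.

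For the reverse inclusion, the key observation is that $\mathcal{L}^{(n)}(\H_d^{\otimes n}) = \mathcal{L}(\H_d^{\otimes n})$: every element of the basis $\mathcal{P}_d^n$ is a tensor product of $n$ single-qudit matrices, each of which is either (a scalar multiple of) the identity $\Lambda^0$ or a generalized Gell-Mann matrix, so its degree $\omega(\cdot)$ is at most $n$; hence $\mathcal{P}_d^n(n) = \mathcal{P}_d^n$, which spans all of $\mathcal{L}(\H_d^{\otimes n})$. Therefore, if $\Tilde{\rho} \in \tilde{\mathcal{D}}^{(2n)}(\H_d^{\otimes n})$, the constraint $\tr(\Tilde{\rho} A^* A) \geq 0$ in fact holds for \emph{every} $A \in \mathcal{L}(\H_d^{\otimes n})$. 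To recover positive semidefiniteness, fix an arbitrary $\ket{\psi} \in \H_d^{\otimes n}$ and any unit vector $\ket{\phi}$, and take $A := \ket{\phi}\bra{\psi}$; then $A^* A = \braket{\phi|\phi}\,\ket{\psi}\bra{\psi} = \ket{\psi}\bra{\psi}$, so $0 \leq \tr(\Tilde{\rho} A^* A) = \bra{\psi}\Tilde{\rho}\ket{\psi}$. As $\ket{\psi}$ was arbitrary, $\Tilde{\rho} \succcurlyeq 0$, and together with $\tr(\Tilde{\rho}) = 1$ and $\Tilde{\rho}^* = \Tilde{\rho}$ we conclude $\Tilde{\rho} \in \mathcal{D}(\H_d^{\otimes n})$.

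I do not expect any genuine obstacle: the content is entirely the degree bookkeeping observation $\mathcal{L}^{(n)}(\H_d^{\otimes n}) = \mathcal{L}(\H_d^{\otimes n})$, after which the rank-one test-operator trick is standard. The only points needing care are matching the index convention (a degree-$2t$ pseudo-density matrix is tested against $A$ of degree $\leq t$, so here $t = n$) and confirming that treating $\Lambda^0$ as a scalar multiple of $I$ does not inflate the degree count of a basis monomial.
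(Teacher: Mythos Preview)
Your proof is correct. The paper does not actually supply a proof of this proposition, treating it as immediate from the definitions; your argument spells out exactly the natural justification, namely that $\mathcal{L}^{(n)}(\H_d^{\otimes n}) = \mathcal{L}(\H_d^{\otimes n})$ so that the pseudo-density constraint at level $2n$ tests against all operators, and then the rank-one choice $A = \ket{\phi}\bra{\psi}$ recovers positive semidefiniteness.
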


And so, we have the following hierarchy that indicates how ncSoS relaxes the problem of optimizing over valid density matrices.
\begin{equation}\label{pdensity_hierarchy}
    \Tilde{\mathcal{D}}^{(2)}(\H_d^{\otimes n}) \supset \Tilde{\mathcal{D}}^{(4)}(\H_d^{\otimes n}) \supset \cdots \supset\Tilde{\mathcal{D}}^{(2(n-1))}(\H_d^{\otimes n}) \supset \Tilde{\mathcal{D}}^{(2n)}(\H_d^{\otimes n}) = \mathcal{D}(\H_d^{\otimes n})
\end{equation}

\begin{remark}
    While a pseudo-density matrix is defined to be Hermitian, and thus it can be written as a real combination of \(\mathcal{P}_d^n\), the definition says nothing about the components of terms in \(\mathcal{P}_d^n \setminus \mathcal{P}_d^n(2t)\). Furthermore, because we are working with 2 local Hamiltonians and thus the objective does not depend on them either (even for \(t = 1\)), the values of these components can be ignored or, without loss of generality, can be set to 0. In other words, we can assume that the pseudo-density matrix \(\Tilde{\rho} \in \mathcal{L}^{(2t)}(\H_d^{\otimes n})\) has bounded degree. 
\end{remark}

All in all, this allows us to rephrase our optimization over vectors \(\vec{y} \in \R^{O(n^t d^{2t})}\), such that \(\Tilde{\rho}_{\vec{y}} := \frac{1}{2^n} \sum_{A \in \mathcal{P}_d^n(2t)} y(A) A\) is a valid degree-\(2t\) pseudo-density matrix. We note that \(|\vec{y}|\) is now polynomial in \(n\) and \(d\), but exponential in the SoS degree, \(t\). Equivalently, we can look at the bilinear form \(\Tilde{M}: \mathcal{L}^{(t)}(\H_d^{\otimes n}) \times \mathcal{L}^{(t)}(\H_d^{\otimes n}) \ra \C\), defined below, that will become our pseudo moment matrix.
\begin{equation} \label{SDP_mat_d}
\Tilde{M}(A, B) := \tr(\Tilde{\rho} A^* B)
\end{equation}
Which, when restricted over the subspace \(\mathcal{L}^{(t)}(\H_d^{\otimes n})\), is PSD and is polynomial in the basis of \(\mathcal{P}_d^n(2t)\).

Finally, we give the following lemma.

\begin{lemma}[degree-one terms]\label{degree_one_terms_lemma}
    Let \(\{h_\alpha\}_\alpha\) be a set of local Hamiltonians invariant under conjugation of local unitaries. For any (pseudo)-density matrix \(\rho = \frac{1}{2^n}\sum_{A \in \mathcal{P}_d^n} y(A) A\), there is another density matrix \(\rho' = \frac{1}{2^n}\sum_{A \in \mathcal{P}_d^n} y'(A) A\) such that that \(y'(A) = 0\) for all \(A \in \mathcal{P}_d^n(1) \setminus \{I\}\) being a degree one basis vector, which achieves the same expected energy, \(\E_\alpha \left[\tr(\rho h_\alpha)\right] = \E_\alpha \left[\tr(\rho' h_\alpha)\right]\).
\end{lemma}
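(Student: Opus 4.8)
The plan is to \emph{twirl} $\rho$ over the unitary symmetry group of the Hamiltonian. By the invariance assumption of \cref{basic_2LHP_def}, each $h_\alpha$ is fixed by conjugation by a single unitary applied simultaneously to all of the qudits in its support; extending $U$ by the identity on the remaining qudits, this gives $(U^{\otimes n})^{*}\, h_\alpha\, U^{\otimes n} = h_\alpha$ for every $U\in U(d)$, so $H_G$ commutes with $U^{\otimes n}$, and by cyclicity of the trace $\tr\!\left((U^{\otimes n}\rho\,(U^{\otimes n})^{*})\, h_\alpha\right) = \tr\!\left(\rho\,(U^{\otimes n})^{*}h_\alpha U^{\otimes n}\right) = \tr(\rho\, h_\alpha)$ for every $\alpha$ and every $U$. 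So I would set
\[
\rho' \ :=\ \int_{U(d)} U^{\otimes n}\, \rho\,(U^{\otimes n})^{*}\ \mathrm{d}U ,
\]
the Haar average over $U(d)$. Equivalently one could average over any finite subgroup of $U(d)$ acting irreducibly on $\C^{d}$ (e.g.\ the Heisenberg--Weyl/clock--shift group), which is legitimate for \textQMCdProb since there every $h_\alpha$ is $U\otimes U$-invariant for \emph{all} $U$, and which makes $\rho'$ an explicit finite average rather than an integral.

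First I would check that $\rho'$ is again a valid (pseudo-)density matrix of the same level. Conjugation by the product unitary $U^{\otimes n}$ sends a basis monomial $\Lambda^{a_1}_{v_1}\cdots\Lambda^{a_k}_{v_k}$ of degree $k$ to $(U\Lambda^{a_1}U^{*})_{v_1}\cdots(U\Lambda^{a_k}U^{*})_{v_k}$, and since each $U\Lambda^{a_i}U^{*}$ is still traceless and Hermitian this is a combination of monomials of degree exactly $k$. Hence conjugation -- and therefore its average -- respects the grading by degree, preserves the trace and Hermiticity, and preserves the pseudo-positivity conditions $\tr(\tilde\rho\, A^{*}A)\ge 0$ for $A\in\mathcal{L}^{(t)}(\H_d^{\otimes n})$ (apply each such inequality to $(U^{\otimes n})^{*}A\,U^{\otimes n}$, which has the same degree as $A$). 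In particular, starting from a genuine density matrix yields a genuine density matrix, and starting from a degree-$2t$ pseudo-density matrix of bounded degree yields another one.

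Next I would show the degree-one coefficients of $\rho'$ vanish. Writing $\rho = \frac{1}{2^{n}}\sum_{A\in\mathcal{P}_d^n} y(A)\,A$ and collecting the degree-one monomials $\Lambda^{a}_{u}$ (with $a\neq 0$), the average sends each of them to $\bigl(\int_{U(d)} U\Lambda^{a}U^{*}\,\mathrm{d}U\bigr)_{u}$. By invariance of the Haar measure, $\int_{U(d)} U M U^{*}\,\mathrm{d}U$ commutes with every element of $U(d)$, hence is a scalar, and taking traces forces it to equal $\frac{\tr M}{d}I$, which is $0$ for the traceless $M = \Lambda^{a}$. Since conjugation does not mix degrees, the average annihilates exactly the degree-one block, so $y'(A) = 0$ for every $A\in\mathcal{P}_d^n(1)\setminus\{I\}$, while $y'(I) = y(I) = 1$. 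Combined with the energy equality of the first paragraph, $\rho'$ has all the claimed properties, with $\E_\alpha[\tr(\rho' h_\alpha)] = \E_\alpha[\tr(\rho\, h_\alpha)]$.

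The one conceptual point to get right is that the twirl must be over the \emph{diagonal} subgroup $\{U^{\otimes n} : U\in U(d)\}$, not a site-by-site average: the relevant invariance of the \textQMCdProb (antisymmetric-subspace projector) interaction is under the \emph{same} unitary on both endpoints of an edge, and it genuinely fails for independent per-qudit unitaries, so per-site twirling would not preserve the energy. Everything else is routine bookkeeping; the only technical fact needed is that conjugation by a product unitary preserves a monomial's degree, which is precisely what makes the average block-diagonal with respect to the degree grading and hence able to kill the degree-one part while leaving $\tr(\rho\, H_G)$ untouched.
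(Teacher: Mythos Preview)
Your proof is correct and follows essentially the same approach as the paper: both twirl $\rho$ by the diagonal action $U^{\otimes n}$, check that this preserves the pseudo-density constraints and the objective (using that conjugation by $U^{\otimes n}$ does not raise degree), and then show the average kills the degree-one block. The only difference is in execution: the paper averages over the finite Heisenberg--Weyl (clock--shift) group and verifies by direct computation that each $\Lambda^{\pm}_{ab}$ and $\Lambda^{d}_{a}$ averages to zero, whereas you average over Haar and invoke Schur's lemma ($\int U\Lambda^{a}U^{*}\,\mathrm{d}U=\tfrac{\tr\Lambda^{a}}{d}I=0$)---which you yourself note is equivalent to averaging over any irreducibly acting finite subgroup such as the clock--shift group.
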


We prove this lemma in \cref{appendix_odd_deg_terms}. This allows us to assume without loss of generality that there are no degree-one components in \(\rho\) and, in turn, will let us simplify the SDP and upper bound its value. Namely, as a result, we have the following property of the bilinear form, \(M\), using \eqref{gell-mann_prod_property}.
\begin{equation}
    M(\Lambda^a_u,\Lambda^b_u) = \tr(\rho \Lambda^a_u \Lambda^b_u) = \frac{2}{d}\delta_{ab} + \tr(\rho \mathcal{O}) = \frac{2}{d}\delta_{ab}
\end{equation}
Where \(\mathcal{O}\) is some complex combination of degree-one terms, and thus, by \cref{degree_one_terms_lemma}, their components onto \(\rho\) are zero. As a direct result, this allows us to view the degree-two ncSoS SDP as a real SDP (combined with the fact that \(M(\Lambda^a_u,\Lambda^b_v) = M(\Lambda^b_v,\Lambda^a_u)\) for \(u \neq v\)). However, this is not as easy for higher degree terms as \cref{degree_one_terms_lemma} says nothing about degree-three terms or odd degree terms in general. In fact, as we discuss in \cref{appendix_odd_deg_terms}, odd degree terms cannot, in general, can not be ignored.

\subsection{The SDP}

We finish our a priori analysis of the SDP by writing it as a vector program. To do this we use the fact that \(M \succcurlyeq 0\) and thus can be represented as a Gram matrix of the set of vectors denoted by \(\{\ket{I}\} \cup \{\ket{\Lambda^a_u}\ |\ \forall a \in [d^2 - 1],\ u \in V\} \subset \R^{n(d^2-1)+1}\). Alternatively, since the identity entries in \(M\) are constant, we consider the sub-matrix with Gram matrix of the vectors denoted by \(\{\ket{\Lambda^a_u}\ |\ \forall a \in [d^2 - 1],\ u \in V\}\). Here, we abuse the bra-ket notation to denote the SDP vectors, \(\ket{\Lambda^a_u} \in \R^{n(d^2-1)}\). The SDP vectors are such that \(\braket{\Lambda^a_u | \Lambda^b_v} = M(\Lambda^a_u, \Lambda^b_v)\). It is clearly a relaxation of \cref{basic_2LHP_def} as was observed in \eqref{pdensity_hierarchy}.

With that, we define our SDP. Our SDP mirrors the classical notion of the basic/simple SDP \cite{raghavendra_optimal_2008,raghavendra_approximating_2009}, which optimizes over valid probability distributions over the local assignments to the payoff functions and uses inner products to enforce consistency between them. For general \(k\)-local Hamiltonian problems, we would extend the SDP to enforce all \(k\)-body moments to be true density matrices. We note that similar SDPs have been used previously \cite{barthel_solving_2012,brandao_product-state_2016,parekh_application_2021,parekh_beating_2021}.

\begin{definition}\label{basicvecprog_def}[The SDP] Given a graph \(G=(V,E,w)\), and 2-local Hamiltonians, invariant under conjugation of local unitaries, \(h_{uv} = \sum_{a,b=0}^{d^2-1} c_{uv}^{ab} \Lambda^a_u  \Lambda^b_v\), for all \((u,v)\in E\), the value of the SDP is given by (for which we denote its value by \(\text{SDP}(G)\))
    \begin{maxi!}|l|
    {\substack{\rho_{uv} \in \mathcal{D}(\H_d^{\otimes 2})\\\forall u<v \in V}}{\E_{(u,v) \sim E} \left[\tr(\rho_{uv} h_{uv})\right]}{\label{basicvecprog}}{\label{basicvecprog_obj}}
    \addConstraint{\braket{\Lambda^a_u | \Lambda^b_u}}{= \frac{2}{d} \delta_{ab}\label{basicvecprog_const_square}}{\forall u \in V,\  a,b \in [d^2-1]}
    \addConstraint{\braket{\Lambda^a_u | \Lambda^b_v}}{= \tr(\rho_{uv} \Lambda^a \otimes \Lambda^b)\label{basicvecprog_loc_moment_and_M}\quad}{\forall u<v \in V,\ \forall a,b \in [d^2-1]}
    \addConstraint{\tr(\rho_{uv} \Lambda^a \otimes I)}{= 0\label{basicvecprog_loc_moment_degree_1_terms}\quad}{\forall u<v \in V,\ \forall a \in [d^2-1]}
    \addConstraint{\tr(\rho_{uv} I \otimes \Lambda^a)}{= 0\label{basicvecprog_loc_moment_degree_1_terms2}\quad}{\forall u<v \in V,\ \forall a \in [d^2-1]}
    \addConstraint{\ket{A}}{\in \R^{(d^2-1)n}\label{basicvecprog_SDP_vecs}}{\forall A \in \mathcal{P}_d^n(1) \setminus \{I\}}
    \end{maxi!}
\end{definition}

Again, note that for constraints \eqref{basicvecprog_const_square} we don't index by basis elements of \(\mathcal{P}_d^n(1)\) explicitly and instead implicitly give the constraint that \(\braket{\Lambda^a_u \otimes \Lambda^0_{V \setminus \{u\}} | \Lambda^a_u  \otimes \Lambda^0_{V \setminus \{u\}}} = \left(\frac{2}{d}\right)^n \delta_{ab}\).

Next, we discuss the reason for including the SDP variables, \(\rho_{uv}\), to represent the two-body marginals. While not the case in previous work on \textQMCProb, we note that for \(d \geq 3\), the SDP that is equivalent to the level-two ncSoS (i.e., \eqref{basicvecprog} without \(\rho_{uv}\) and constraints  \eqref{basicvecprog_loc_moment_and_M} through \eqref{basicvecprog_loc_moment_degree_1_terms2} and with the objective in terms of inner products of the SDP vectors) overshoots even the trivial upper bound of \(\QMCdProb(G) \leq 1\) for some simple graphs. As an example, this is the case for complete graphs of size \(1 < n < d\) (when \(d \geq 3\)). This is shown in \cref{clique_val_prop} (the proof of which is given in \cref{appendix_proof_clique_val_prop}).

\begin{proposition}\label{clique_val_prop}
    For \textQMCdProb with interaction graph, \(K_n\), being the unweighted complete graph on \(n\) vertices, the level-two ncSoS gets a value of \(\frac{(d-1) (d+n)}{2 d (n-1)}\).
\end{proposition}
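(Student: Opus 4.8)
The plan is to pin down the optimum of the level-two ncSoS relaxation on $K_n$ \emph{exactly}, by exhibiting one symmetric feasible point together with a matching upper bound, both written purely in terms of the inner products $\braket{\Lambda^a_u | \Lambda^b_v}$ of the SDP vectors.

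First I would rewrite the objective. By \cref{QMCk_edge_interaction_gell-mann_prop} we have $h_{uv} = \frac{d-1}{2d}I - \frac14\sum_{a=1}^{d^2-1}\Lambda^a_u\Lambda^a_v$, and after the degree-one reduction of \cref{degree_one_terms_lemma} the relevant relaxation (namely \cref{basicvecprog_def} without the $\rho_{uv}$ variables, as described just before the proposition) becomes: maximize, over families of vectors $\{\ket{\Lambda^a_u}\}_{a\in[d^2-1],\,u\in V}\subset\R^{(d^2-1)n}$ obeying the diagonal constraint $\braket{\Lambda^a_u|\Lambda^b_u}=\tfrac2d\delta_{ab}$ of \eqref{basicvecprog_const_square}, the quantity
\[
\frac{d-1}{2d}\;-\;\frac14\cdot\frac{1}{\binom{n}{2}}\sum_{u<v}\sum_{a=1}^{d^2-1}\braket{\Lambda^a_u|\Lambda^a_v}.
\]
Thus everything reduces to evaluating $\min\sum_{u<v}\sum_a\braket{\Lambda^a_u|\Lambda^a_v}$ over such families.

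For the upper bound on the value I would use only the diagonal constraint: for each fixed $a$,
\[
\sum_{u<v}\braket{\Lambda^a_u|\Lambda^a_v}=\tfrac12\Bigl(\bigl\|\textstyle\sum_u\ket{\Lambda^a_u}\bigr\|^2-\textstyle\sum_u\|\ket{\Lambda^a_u}\|^2\Bigr)\;\ge\;-\tfrac12\cdot\tfrac{2n}{d}=-\tfrac{n}{d},
\]
and summing over the $d^2-1$ choices of $a$ gives $\sum_{u<v}\sum_a\braket{\Lambda^a_u|\Lambda^a_v}\ge-\tfrac{(d^2-1)n}{d}$. Substituting this and simplifying $\tfrac{n}{\binom{n}{2}}=\tfrac{2}{n-1}$ bounds the value by $\tfrac{d-1}{2d}+\tfrac{d^2-1}{2d(n-1)}=\tfrac{(d-1)(d+n)}{2d(n-1)}$. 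For the matching lower bound I would write down the symmetric solution $\braket{\Lambda^a_u|\Lambda^b_v}=-\tfrac{2}{d(n-1)}\delta_{ab}$ for $u\ne v$; its Gram matrix is $\bigl((\tfrac2d-c)I_n+cJ_n\bigr)\otimes I_{d^2-1}$ with $c=-\tfrac{2}{d(n-1)}$ and $J_n$ the all-ones matrix, whose eigenvalues are $\tfrac2d+c(n-1)=0$ and $\tfrac2d-c>0$, so it is PSD of rank $(n-1)(d^2-1)\le n(d^2-1)$ and is realized by genuine vectors in $\R^{(d^2-1)n}$ satisfying \eqref{basicvecprog_const_square}. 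This point attains the bound above, so the two bounds coincide and the value is exactly $\tfrac{(d-1)(d+n)}{2d(n-1)}$.

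The PSD check and the arithmetic are routine; the only step requiring care is the passage to the ``objective in inner products'' form, that is, correctly tracking the identity term of $h_{uv}$ and confirming via \cref{degree_one_terms_lemma} that the single-site off-diagonal moments may be taken to vanish, so that \eqref{basicvecprog_const_square} is the only active constraint on the degree-one block of the moment matrix.
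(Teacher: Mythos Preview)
Your proposal is correct and is essentially the same argument as the paper's. The paper phrases the upper bound via $\tr(\tilde\rho A^*A)\ge 0$ with $A=\sum_u\Lambda^a_u$, which is literally your inequality $\bigl\|\sum_u\ket{\Lambda^a_u}\bigr\|^2\ge 0$, and it exhibits the same symmetric Gram matrix for the lower bound, verifying PSD by recognizing each $n\times n$ block as the Gram matrix of the vertices of a simplex rather than by your explicit eigenvalue computation of $(\tfrac2d-c)I_n+cJ_n$.
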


It is for this reason that we add additional constraints to correct for this (i.e., \eqref{basicvecprog_loc_moment_and_M} to \eqref{basicvecprog_loc_moment_degree_1_terms2}). That is, our SDP is equivalent to the 2nd level of the ncSoS hierarchy in addition to enforcing that all two body moments are true density matrices. This is clearly still a relaxation of \cref{basic_2LHP_def} as a true density matrix has all of its two body moments being true density matrices. 

Moreover, we note that enforcing two body moments to be true density matrices doesn't change the number of variables in \(\vec{y}\), but rather just adds the additional constraints from the level 4 ncSoS that act on the existing variables. That is to say, taking the pseudo-density framework, for every distinct pair \(u\neq v \in [n]\), we enforce that \(\rho_{uv} := \tr_{[n]\setminus\{i,v\}}(\Tilde{\rho}) \succcurlyeq 0\). These new constraints allow the SDP to achieve the aforementioned trivial upper bound.

\begin{proposition}\label{prop_sdp_trivial_up_bound}
    For \textQMCdProb with some interaction graph \(G = (V, E)\), our SDP gets a value of \(\text{SDP}_{\text{QMC}_d}(G) \leq 1\).
\end{proposition}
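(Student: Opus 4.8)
The plan is to exploit the defining feature of the SDP in \cref{basicvecprog_def}: every two-body marginal $\rho_{uv}$ is required to be a genuine density matrix on $\H_d^{\otimes 2}$, and the objective $\E_{(u,v)\sim E}\tr(\rho_{uv}h_{uv})$ is a \emph{convex combination} of the quantities $\tr(\rho_{uv}h_{uv})$. Since $h_{uv}$ is the \textQMCdProb edge interaction, which by \cref{QMCd_edge_interaction_def} is an orthogonal projector, we have $0\preccurlyeq h_{uv}\preccurlyeq I$. Therefore for each edge, $\tr(\rho_{uv}h_{uv})\le\tr(\rho_{uv}I)=1$, because $\rho_{uv}$ is a trace-one PSD matrix. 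Taking the convex combination over edges gives $\text{SDP}_{\text{QMC}_d}(G)=\E_{(u,v)\sim E}\tr(\rho_{uv}h_{uv})\le 1$, and since this holds for every feasible point of the SDP it holds for the optimum.

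First I would record that $h$ is a projector (hence $0\preccurlyeq h\preccurlyeq I$), citing \cref{QMCd_edge_interaction_def} or the observation $h=I-P_{\text{sym}}$. Next I would note that feasibility of the SDP includes $\rho_{uv}\in\mathcal{D}(\H_d^{\otimes 2})$ for all $u<v$, so each $\rho_{uv}$ is PSD with unit trace. Then the single-edge bound $\tr(\rho_{uv}h_{uv})\le 1$ follows from the standard fact that $\tr(\rho M)\le\lambda_{\max}(M)\tr(\rho)$ for PSD $\rho$ — here with $M=h_{uv}$ (embedded with identity on the remaining coordinates, which does not change its eigenvalues) and $\lambda_{\max}(h_{uv})=1$. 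Finally I would observe that the SDP objective is literally an average of these per-edge values, so the bound is preserved under the expectation, giving $\text{SDP}_{\text{QMC}_d}(G)\le 1$.

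There is essentially no obstacle here; the only point that deserves a word of care is that the objective in \cref{basicvecprog_def} is written in terms of the marginals $\rho_{uv}$ rather than in terms of the SDP inner products $\braket{\Lambda^a_u|\Lambda^b_v}$ — but that is exactly why the constraint $\rho_{uv}\in\mathcal{D}(\H_d^{\otimes 2})$ was added (as the discussion preceding \cref{clique_val_prop} emphasizes, the bare level-two ncSoS objective written via inner products can exceed $1$). Since the SDP of \cref{basicvecprog_def} couples the inner products to honest density matrices $\rho_{uv}$ via \eqref{basicvecprog_loc_moment_and_M} and phrases the objective through those $\rho_{uv}$, the trivial upper bound goes through. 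One could alternatively phrase the whole argument as: the SDP is a relaxation of \cref{basic_2LHP_def}, whose value is $\QMCdProb(G)=\lambda_{\max}(H_G)\le 1$ since $H_G$ is an average of projectors — but the direct per-marginal argument above is cleaner and self-contained.
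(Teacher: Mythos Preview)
Your proposal is correct and essentially identical to the paper's proof: the paper writes $h_{uv}=I-(P_{\text{sym}})_{uv}$, notes $\tr(\rho_{uv}P_{\text{sym}})\ge 0$ since $P_{\text{sym}}$ is a projector and $\rho_{uv}\succcurlyeq 0$, and concludes $\tr(\rho_{uv}h_{uv})=1-\tr(\rho_{uv}P_{\text{sym}})\le 1$ for each edge, then averages. Your phrasing via $0\preccurlyeq h\preccurlyeq I$ and $\tr(\rho_{uv}h_{uv})\le\tr(\rho_{uv})=1$ is the same argument from the complementary side.
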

\begin{proof}
    Let \(\{\rho_{uv} \in \mathcal{D}(\H_d^{\otimes 2})\}\) be the local density matrices and \(\{\ket{A}\ |\ A \in \mathcal{P}_d^n(1) \setminus \{I\}\}\) be the SDP vectors for a solution to \cref{basicvecprog_def}. We can use the fact that \(h_{uv} = I - (P_{\text{sym}})_{uv}\), being the \textQMCdProb edge interations as defined in \cref{QMCd_edge_interaction_def}, and that \(P_{\text{sym}}^2 = P_{\text{sym}}\) to get that \(\tr(\rho_{uv} P_{\text{sym}}) = \tr(\rho_{uv} P^2_{\text{sym}}) \geq 0\). Then
    \begin{align*}
        \E_{(u,v) \sim E} \left[\tr(\rho_{uv} h_{uv})\right] &= \E_{(u,v) \sim E} \left[\tr(\rho_{uv} (I - P_{\text{sym}}))\right] \\
        &= 1 - \E_{(u,v) \sim E}[\tr(\rho_{uv} P_{\text{sym}})] \\
        &\leq 1 \\
    \end{align*}
\end{proof}

Additionally, we have the following (We omit the proof as it is virtually identical to those given in previous papers \cite{parekh_application_2021,parekh_beating_2021}).

\begin{lemma}
    The vector program of \cref{basicvecprog_def} is an efficiently computable semidefinite program that provides an upper bound on \cref{basic_2LHP_def}.
\end{lemma}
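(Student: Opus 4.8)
The plan is to prove the two assertions — that \cref{basicvecprog_def} is a relaxation of \cref{basic_2LHP_def} (hence an upper bound) and that it is solvable in polynomial time — separately. For the upper bound I would argue that every feasible point of \eqref{basic_LHP} induces a feasible point of the SDP of equal objective value. Take an optimal $\rho \in \mathcal{D}(\H_d^{\otimes n})$ for \eqref{basic_LHP}. By \cref{degree_one_terms_lemma} we may assume, without loss of generality, that $\rho$ has no degree-one components while achieving the same expected energy. Form the moment matrix $M(A,B) := \tr(\rho\, A^* B)$ of \eqref{SDP_mat_goal}, restricted to $A,B \in \mathcal{P}_d^n(1)\setminus\{I\}$; since $\rho \succcurlyeq 0$, for any real coefficients $\{c_A\}$ we have $\sum_{A,B} c_A c_B\, M(A,B) = \tr\!\big(\rho\, C^* C\big) \ge 0$ with $C := \sum_A c_A A$, so $M \succcurlyeq 0$ and admits a Gram representation by vectors $\ket{\Lambda^a_u} \in \R^{(d^2-1)n}$. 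Finally set $\rho_{uv} := \tr_{[n]\setminus\{u,v\}}(\rho) \in \mathcal{D}(\H_d^{\otimes 2})$ for each $u<v$.

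Next I would check feasibility constraint by constraint. Constraint \eqref{basicvecprog_const_square} follows from the product property \eqref{gell-mann_prod_property}: $\Lambda^a_u \Lambda^b_u = \frac{2}{d}\delta_{ab} I + \mathcal{O}$ with $\mathcal{O}$ a combination of degree-one terms, which vanish against $\rho$ by the normalization above, so $\braket{\Lambda^a_u\,|\,\Lambda^b_u} = M(\Lambda^a_u,\Lambda^b_u) = \frac{2}{d}\delta_{ab}$. Constraint \eqref{basicvecprog_loc_moment_and_M} is immediate from the definition of the partial trace, since for $u \ne v$ we have $\tr(\rho_{uv}\,\Lambda^a\otimes\Lambda^b) = \tr(\rho\,\Lambda^a_u\Lambda^b_v) = M(\Lambda^a_u,\Lambda^b_v)$. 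Constraints \eqref{basicvecprog_loc_moment_degree_1_terms} and \eqref{basicvecprog_loc_moment_degree_1_terms2} hold because $\Lambda^a_u \otimes I$ and $I \otimes \Lambda^a_v$ are degree-one and again have zero component onto $\rho$. Constraint \eqref{basicvecprog_SDP_vecs} holds by construction of the Gram vectors. The objective value is preserved because each $h_{uv}$ acts nontrivially only on qudits $u,v$, so $\tr(\rho_{uv}\, h_{uv}) = \tr(\rho\, h_{uv})$; averaging over $E$ gives $\text{SDP}(G) \ge \E_{(u,v)\sim E}\big[\tr(\rho\, h_{uv})\big] = \text{OPT}(G)$, matching the hierarchy picture in \eqref{pdensity_hierarchy}.

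For efficient computability I would observe that the program has $O(n(d^2-1))$ vector variables in $\R^{(d^2-1)n}$ — equivalently a single $O(nd^2)\times O(nd^2)$ PSD Gram matrix — together with $O(n^2)$ Hermitian PSD matrices $\rho_{uv}$ of dimension $d^2$, and only a polynomial number of linear constraints \eqref{basicvecprog_const_square}--\eqref{basicvecprog_SDP_vecs}; hence it is a semidefinite program of size $\mathrm{poly}(n,d)$. Feasibility is witnessed by the maximally mixed state, and constraint \eqref{basicvecprog_const_square} forces $\|\ket{\Lambda^a_u}\|^2 = 2/d$, so the Gram matrix (and each $\rho_{uv}$) lies in a bounded region; therefore standard interior-point or ellipsoid methods compute $\text{SDP}(G)$ to any additive error $\varepsilon$ in time $\mathrm{poly}(n,d,\log(1/\varepsilon))$, exactly as in \cite{parekh_application_2021,parekh_beating_2021}.

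There is no deep difficulty here; the content is bookkeeping. The one point that warrants care is the interface between the two ``halves'' of the SDP — one must verify that the $M$-block and the $\{\rho_{uv}\}$-block extracted from a single global $\rho$ are automatically consistent with the linking constraint \eqref{basicvecprog_loc_moment_and_M} (they are, since both are partial data of the same $\rho$), and, when invoking efficient solvability, that the feasible region is compact and, for an exact polynomial-time guarantee, well-conditioned (a Slater-type interior point). Absent the latter one settles for the usual ``$\varepsilon$-approximate optimum in time depending on $\log(1/\varepsilon)$'' statement, which is all that is needed in the sequel.
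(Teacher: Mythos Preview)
Your proposal is correct and complete. The paper itself omits this proof entirely, remarking only that it ``is virtually identical to those given in previous papers \cite{parekh_application_2021,parekh_beating_2021}''; what you have written is precisely that standard argument, so there is nothing substantive to compare.
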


Lastly, we note that enforcing all two-body moments to be valid density matrices implies useful properties. Namely, that \(\tr(\rho_{uv} P_{\text{sym}}) \geq 0\), which implies the following, by \cref{prop_Psym_is_SUd_heis_plus_indent_term} and that \(P_{\text{sym}}\) is a projector.

\begin{align}
    0 &\leq \tr(\rho_{uv} P_{\text{sym}}) \nonumber\\
    0 &\leq \frac{1}{2}\left(\frac{d+1}{d}\right) + \frac{1}{4} \sum_{a=1}^{d^2-1} \tr(\rho_{uv} \Lambda^a \otimes \Lambda^a) \nonumber\\
    \sum_{a=1}^{d^2-1} \braket{\Lambda^a_u|\Lambda^a_v} &\geq -2\left(\frac{d+1}{d}\right)\label{quantum_frieze_jerrum_style_constraint}
\end{align}


\noindent This can be seen to mirror the Frieze-Jerrum SDP constraint that \(\braket{u|v} \geq -\frac{1}{d-1}\), for SDP vectors \(\ket{u}\) and \(\ket{v}\) \cite{frieze_improved_1997}. Going one step further, we define the following vectors for each vertex, \(u \in V\), which are the concatenation of all SDP vectors associated with the vertex, \(u\), then normalized. We refer to these vectors as the \emph{stacked SDP vectors}.
\begin{equation}\label{stacked_sdp_vec}
    \ket{u}: = \frac{1}{\sqrt{\frac{2}{d}(d^2-1)}} \bigoplus_{a = 1}^{d^2-1} \ket{\Lambda^a_u} \in \R^{n(d^2-1)^2}
\end{equation}
The normalization is used to ensure that the stacked vectors are unit vectors, and it comes from the fact that
\begin{equation}\label{normalization_for_sdp_vecs}
    \sum_{a=1}^{d^2-1} \braket{\Lambda^a_u|\Lambda^a_u} = \frac{2}{d}(d^2-1)
\end{equation}

Using these stacked SDP vectors, we can reexamine \eqref{quantum_frieze_jerrum_style_constraint}, which now becomes exactly the Frieze-Jerrum SDP constraint, \(\braket{u|v} \geq -\frac{1}{d-1}\) \cite{frieze_improved_1997}. We note that this does not require us to use \cref{degree_one_terms_lemma} as \eqref{gell-mann_prod_property} gives only diagonal generalize Gell-Mann matrices, which cancel out as per the anticommutation constants \cite{bossion_general_2021}.

\subsection{The Product State SDP}

As was observed in \cite{hwang_unique_2022} for \textQMCProb, we can also define an SDP relaxation for the problem of optimizing \textQMCdProb over pure product-state solutions (\cref{def_pure_prod_prob} or equivalently \cref{prop_pure_prod_prob_rewrite}) by relaxing the size of the vectors we maximize over. We also enforce the constraint \(\braket{u|v} \geq -\frac{1}{d-1}\) on the SDP vectors as it is true of the vectors in \(\Omega_d^{\textsc{ext}}\) by \cref{bloch_vec_angle_prop}. All in all, we get the following SDP.

\begin{definition}[The Product State SDP]\label{prodsdpprog_def} For a \textQMCdProb instance with interaction graph \(G=(V,E,w)\), the value of the product state SDP is given by (for which we denote its value by \(\text{ProdSDP}_{\text{QMC}_d}(G)\))
    \begin{maxi!}|l|
        {}{\frac{1}{2}\left(\frac{d-1}{d}\right)\E_{(u,v) \sim E} \left[1-\braket{u|v}\right]}{\label{prodsdpprog}}{\label{prodsdpprog_obj}}
        \addConstraint{\braket{u|v}}{\geq -\frac{1}{d-1}\label{prodsdpprog_FJ_const}\quad}{\forall u,v \in V}
        \addConstraint{\ket{u}}{\in S^{n-1}\label{prodsdpprog_SDP_vecs}}{\forall u \in V}
    \end{maxi!}

    \begin{remark}
        This is exactly the Frieze-Jerrum SDP \cite{frieze_improved_1997} with an additional factor of \(\frac{1}{2}\).
    \end{remark}
\end{definition}

\section{Rounding}\label{section_rounding}

In this section, we prove \cref{thm_1,thm_2}. To do this, we give a brief overview of the projection rounding technique \cite{briet_positive_2010} used to round to mixed product states of qudits in much the same way that Gharibian and Parekh rounded to pure product state of single qubits \cite{gharibian_almost_2019}. For this paper, the end goal is to get a mixed product state solution. As was discussed in \cref{prod_state_section}, this can be represented efficiently by Bloch vectors for each vertex \(u \in V\) in the graph. So using the SDP defined in \cref{basicvecprog_def}, for each vertex, \(u \in V\), we will round the stacked SDP vector \(\ket{u}\), given in \eqref{stacked_sdp_vec}, to a Bloch vector \(\vec{b}_u\) (or in the case of using \cref{prodsdpprog_def}, we round using the SDP vectors directly).

Projection rounding, first introduced by Bri\"et, de Olivera Filho, and Vallentin \cite{briet_positive_2010} can be seen as a generalization of Halfspace rounding introduced introduced by Goemans and Williamson \cite{goemans_improved_1995}. The algorithm goes as follows. 

\begin{algorithm}[\textQMCdProb Rounding Algorithm]\label{algo_proj_rounding} \emph{Input:} \(\ket{u} \in S^{\ell-1}\) for each \(u \in V\) (where \(\ell = (d^2-1)^2 n\) or \(\ell = n\))
    \begin{enumerate}
        \item Pick a random matrix with i.i.d. standard Gaussian entries, \(\mathbf{Z} \sim \mathcal{N}(0,1)^{(d^2-1) \times \ell}\). 
        \item \emph{Output:} \(\vec{b}_u := \mathbf{Z} \ket{u}/\|\mathbf{Z} \ket{u}\|\) for each \(u \in V\).
    \end{enumerate}
\end{algorithm}

Bri\"et, de Olivera Filho, and Vallentin gave the following analytical tool to analyze the expectation of the inner product between two rounded vectors.

\begin{lemma}[Lemma 2.1 from \cite{briet_grothendieck_2014}]
    Let \(-1 \leq \gamma \leq 1\), and let \(\vec{u}\) and \(\vec{v}\) be two \(n\)-dimensional unit vectors such that \(\braket{\vec{u},\vec{v}} = \gamma\). Let \(\mathbf{Z} \in \R^{k \times n}\) be a random matrix with entries chosen from \(kn\) i.i.d. standard Gaussians, \(\mathcal{N}(0, 1)\). Then
    \[F^*(k, \gamma) := \E_\mathbf{Z} \left\langle\frac{\mathbf{Z} \vec{u}}{\left\|\mathbf{Z}\vec{u}\right\|},\frac{\mathbf{Z} v}{\left\|\mathbf{Z} v\right\|}\right\rangle = \frac{2 \gamma}{k}\left(\frac{\Gamma((k+1)/2)}{\Gamma(k/2)}\right)^2 \, _2 F_1\left(\frac{1}{2}, \frac{1}{2}; \frac{k}{2} + 1; \gamma^2\right)\]
    Where \(\, _2 F_1(\cdot, \cdot; \cdot; \cdot)\) is the Gaussian hypergeometric function.
\end{lemma}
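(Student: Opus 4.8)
The plan is to recognize $F^*(k,\gamma)$ as an explicit Gaussian integral and then evaluate it by a Schwinger-type parametrization followed by classical hypergeometric identities. First set $X := \mathbf{Z}\vec{u}$ and $Y := \mathbf{Z}\vec{v}$. Since $\vec{u},\vec{v}$ are unit vectors, each of $X,Y$ is a standard Gaussian vector in $\R^k$, and $\E[X_i Y_j] = \braket{\vec{u},\vec{v}}\,\delta_{ij} = \gamma\,\delta_{ij}$; hence $(X,Y)$ is jointly Gaussian in $\R^k\times\R^k$ with covariance having diagonal blocks $I_k$ and off-diagonal blocks $\gamma I_k$, and in particular its law — hence $F^*$ — depends only on $k$ and $\gamma$, not on $n$. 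So it suffices to compute $\E\big[\braket{X,Y}/(\|X\|\,\|Y\|)\big]$ for this pair, whose density is the explicit Gaussian with that covariance.

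The awkward point is that $\|X\|^{-1}$ and $\|Y\|^{-1}$ are correlated and do not factor. I would dissolve this using the identity $\|X\|^{-1} = \pi^{-1/2}\int_0^\infty e^{-\lambda\|X\|^2}\lambda^{-1/2}\,d\lambda$ and likewise for $Y$: substituting both and interchanging integrals reduces everything to $\E\big[\braket{X,Y}\,e^{-\lambda\|X\|^2-\mu\|Y\|^2}\big]$, integrated against $(\lambda\mu)^{-1/2}$ over $(0,\infty)^2$. That expectation is a pure Gaussian moment: absorbing the two exponentials into the density completes the square to $\tfrac12(X,Y)^\top A(X,Y)$ for an explicit block matrix $A$, and the moment is (up to the density's normalization) $(\det A)^{-1/2}$ times the trace of the off-diagonal block of $A^{-1}$, both of which are elementary in $\lambda,\mu,\gamma$. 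After the linear substitutions $x = 2\lambda(1-\gamma^2)$, $y = 2\mu(1-\gamma^2)$ one is left with $\tfrac{k\gamma}{2\pi}(1-\gamma^2)^{k/2}$ times $\int_0^\infty\!\!\int_0^\infty (xy)^{-1/2}\big((1+x)(1+y)-\gamma^2\big)^{-k/2-1}\,dx\,dy$.

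The last step evaluates this double integral and is where the main obstacle lies: the inner ($y$-) integral scales to a Beta integral $B(\tfrac12,\tfrac{k+1}{2})$ times a power of $(1+x-\gamma^2)$, leaving $\int_0^\infty x^{-1/2}(1+x)^{-1/2}(1+x-\gamma^2)^{-(k+1)/2}\,dx$, and the substitution $x = u/(1-u)$ turns this into Euler's integral representation of a Gauss hypergeometric function, yielding a factor $\, _2 F_1\!\big(\tfrac{k+1}{2},\tfrac12;\tfrac k2+1;-\tfrac{\gamma^2}{1-\gamma^2}\big)$. Getting this into the stated form requires exactly Pfaff's transformation $\, _2 F_1(a,b;c;z) = (1-z)^{-a}\, _2 F_1(a,c-b;c;\tfrac{z}{z-1})$ followed by Euler's transformation $\, _2 F_1(a,b;c;z) = (1-z)^{c-a-b}\, _2 F_1(c-a,c-b;c;z)$, which move the argument to $\gamma^2$ and the parameters to $(\tfrac12,\tfrac12;\tfrac k2+1)$ while all accumulated powers of $(1-\gamma^2)$ — from the Gaussian determinant, from the substitutions, and from the two transformations — cancel completely; collecting the Gamma factors and using $\Gamma(\tfrac k2+1) = \tfrac k2\Gamma(\tfrac k2)$ leaves the prefactor $\tfrac{2\gamma}{k}\big(\Gamma((k+1)/2)/\Gamma(k/2)\big)^2$, as claimed. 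The identity is first established for $|\gamma|<1$, where all integrals converge, and extended to $\gamma=\pm1$ by continuity; a useful sanity check is that Gauss's summation gives $\, _2 F_1(\tfrac12,\tfrac12;\tfrac k2+1;1) = \Gamma(\tfrac k2+1)\Gamma(\tfrac k2)/\Gamma(\tfrac{k+1}{2})^2$, so $F^*(k,\pm1)=\pm1$. (An alternative to the transformation bookkeeping is to verify the formula coefficient-by-coefficient as a power series in $\gamma$, but that route is messier.)
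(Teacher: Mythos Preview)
The paper does not prove this lemma; it quotes it verbatim as ``Lemma~2.1 from \cite{briet_grothendieck_2014}'' and uses it as a black box in the analysis of the rounding algorithm. So there is no paper proof to compare against.

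Your proposal is a correct and standard route to the result. The reduction to a jointly Gaussian pair $(X,Y)$ with block covariance $\begin{pmatrix} I & \gamma I \\ \gamma I & I \end{pmatrix}$ is exactly right, the Schwinger parametrization $\|X\|^{-1}=\pi^{-1/2}\int_0^\infty e^{-\lambda\|X\|^2}\lambda^{-1/2}\,d\lambda$ legitimately decouples the normalizations, and the resulting Gaussian moment is indeed governed by the off-diagonal block of $A^{-1}$ with the determinant factor you describe. The subsequent reduction of the $(\lambda,\mu)$ integral to a Beta factor times a one-dimensional Euler integral, followed by Pfaff and then Euler transformations to land on $\, _2F_1(\tfrac12,\tfrac12;\tfrac{k}{2}+1;\gamma^2)$, is the right sequence, and your consistency check $F^*(k,\pm1)=\pm1$ via Gauss's summation is a good sanity anchor. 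The only thing to watch is the bookkeeping of the powers of $1-\gamma^2$ coming from $\det\Sigma$, the Jacobian of the $(x,y)$ substitution, and the two hypergeometric transformations; you assert they cancel, and they do, but in a full write-up that cancellation should be displayed explicitly since it is where sign and exponent errors most easily creep in.
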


Our full algorithm is to solve the SDP given in \cref{basicvecprog_def} and then use the stacked vectors, given by \eqref{stacked_sdp_vec}, as input to \cref{algo_proj_rounding}. For finding the approximation ratio, \(\alpha_d\), it is enough to consider the worst-case edge. That is we consider an \(\alpha_d\) such that for all \((u,v) \in E\) we have that

\begin{equation}\label{approx_ratio1}
\E_{\mathbf{Z}} \left[ \frac{1}{2}\left(\frac{d-1}{d}\right)\left(1-\frac{\braket{\vec{b}_u, \vec{b}_v}}{(d-1)^2}\right)\right] \geq \alpha_d \frac{1}{2}\left(\frac{d-1}{d}\right)\left(1-(d+1)\braket{u|v}\right)
\end{equation}

Here, the left hand side comes from \cref{prop_rewite_mixed_state_and_upper_bound} and the right hand side comes from applying \eqref{stacked_sdp_vec} to the SDP objective, \eqref{basicvecprog_obj}, of \cref{basicvecprog_def}; namely,

\begin{align*}
    \tr(\rho_{uv} h_{uv}) &= \frac{1}{2}\left(\frac{d-1}{d}\right) - \frac{1}{4} \sum_{a = 1}^{d^2-1}\braket{\Lambda^a_u|\Lambda^a_v} \\
    &= \frac{1}{2}\left(\frac{d-1}{d}\right) - \frac{1}{2d}(d^2-1) \braket{u|v} \\
    &= \frac{1}{2}\left(\frac{d-1}{d}\right)\left(1 - (d+1) \braket{u|v}\right) \\
\end{align*}

We calculate this worst case value by minimizing over \(\gamma = \braket{u|v} \geq -\frac{1}{d-1}\), as per the observation give in \eqref{quantum_frieze_jerrum_style_constraint}. We also have that \(\gamma \leq \frac{1}{d+1}\) as otherwise the bound on the optimal energy would be negative, which is not possible as \(h_{uv}\) is PSD. This bound can be derived from \(\tr(\rho_{uv} h_{uv}) \geq 0\) in much the same way we derived \eqref{quantum_frieze_jerrum_style_constraint}. This then gives us the following expression for \(\alpha_d\).

\begin{equation}\label{approx_ratio2}
    \alpha_d = \min_{-\frac{1}{d-1} \leq \gamma < \frac{1}{d+1}} \frac{1 - \frac{F^*(d^2-1, \gamma)}{(d-1)^2}}{1 - (d+1)\gamma}
\end{equation}

We note that for \(d=2\), this is exactly the qubit case and thus gives the same expression and approximate ratio. Moreover, as was done for the analysis of \textQMCProb, we can then numerically analyze \eqref{approx_ratio2}, which gives the following (shown in \cref{tab:approx_ratios}). For completeness, we also give the optimal approx ratios for mixed states with \(\tr(\rho^2) = \frac{1}{d-1}\) (which is given by \eqref{opt_ratio_for_mix_prod_state} in the proof of \cref{prop_rewite_mixed_state_and_upper_bound}). We show that the values of \(\gamma_d\) are exact (for \(d \geq 3\)) in \cref{lemma_bad_angle}.

\begin{table}[ht!]
    \centering
    \begin{tabular}{c|c|c|l}
        \(d\) & \(\alpha_d\) & \(\gamma_d\) & Opt Mixed Prod State Ratio \\
        \hline
        2 & 0.498767 & \(-0.9659\) & \(\,\frac{1}{2} = 0.5\) \\
        3 & 0.372996 & \(-1/2\) & \(\frac{5}{12} \approx 0.416667\)  \\
        4 & 0.388478 & \(-\frac{1}{3}\) & \(\frac{5}{12} \approx 0.416667\)  \\
        5 & 0.406129 & \(-\frac{1}{4}\) & \(\frac{17}{40} = 0.425\) \\
        10 & 0.450614 & \(-\frac{1}{9}\) & \(\frac{41}{90} \approx 0.455556\) \\
        100 & 0.495001 & \(-\frac{1}{99}\) & \(\frac{4901}{9900} \approx 0.495051\) \\
    \end{tabular}
    \caption{The numerically calculated approximation ratios, the value of \(\gamma_d\) that minimized the expression (known as the ``bad angle"), and optimal mixed product state approximation ratios with \(\tr(\rho^2) = \frac{1}{d-1}\)  for different values of \(d\) (from \eqref{opt_ratio_for_mix_prod_state}). For \(d \geq 3\), these values of \(\gamma_d\) are exact as stated in \cref{lemma_bad_angle}.}
    \label{tab:approx_ratios}
\end{table}

Similarly, for the approximation to the optimal product-state solution (\cref{def_pure_prod_prob}), our full algorithm is to solve the SDP given in \cref{prodsdpprog_def} and then use the SDP vectors as input to \cref{algo_proj_rounding}. The approximation ratios, \(\beta_d\), are then such that.

\begin{equation}\label{approx_ratio3}
\E_{\mathbf{Z}} \left[ \frac{1}{2}\left(\frac{d-1}{d}\right)\left(1-\frac{\braket{\vec{b}_u, \vec{b}_v}}{(d-1)^2}\right)\right] \geq \beta_d \frac{1}{2}\left(\frac{d-1}{d}\right)\left(1-\braket{u|v}\right)
\end{equation}
Here, the left hand side again comes from \cref{prop_rewite_mixed_state_and_upper_bound} and the right hand side comes directly from the SDP objective, \eqref{basicvecprog_obj}, of \cref{basicvecprog_def}. Finally,
\begin{equation}\label{approx_ratio4}
    \beta_d = \min_{-\frac{1}{d-1} \leq \gamma < 1} \frac{1 - \frac{F^*(d^2-1, \gamma)}{(d-1)^2}}{1 - \gamma}
\end{equation}
where the constraint \(\gamma = \braket{u|v} \geq -\frac{1}{d-1}\) comes directly from the product-state SDP, \eqref{prodsdpprog_FJ_const}. We hold off on solving for these values numerically as they will be determined exactly by \cref{thm_2}.

\begin{lemma}[The Bad Angle For \(\alpha_d\)]\label{lemma_bad_angle}
    For \(d \geq 3\), the value of \(-\frac{1}{d-1} \leq \gamma < \frac{1}{d+1}\) that achieves the approximation ratio in \eqref{approx_ratio2} is \(\gamma = -\frac{1}{d-1}\).
\end{lemma}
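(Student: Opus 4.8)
The plan is to write the function minimized in \eqref{approx_ratio2} as a ratio $g(\gamma)=N(\gamma)/D(\gamma)$ with numerator $N(\gamma)=1-\tfrac{f(\gamma)}{(d-1)^2}$ and denominator $D(\gamma)=1-(d+1)\gamma$, where $f(\gamma):=F^*(d^2-1,\gamma)$, and to show that $g$ attains its minimum over $\gamma\in[-\tfrac1{d-1},\tfrac1{d+1})$ at the left endpoint. The engine of the argument is the shape of $f$: expanding the Gaussian hypergeometric factor as a power series gives $f(\gamma)=\sum_{n\ge 0}c_n\gamma^{2n+1}$ with every $c_n>0$, so $f$ is an odd, strictly increasing function on $[-1,1]$ that is concave on $[-1,0]$ and convex on $[0,1]$, with $f(1)=1$ (by Gauss's summation theorem, or from the meaning of $F^*$ when the two input vectors coincide) and hence $|f|\le 1$. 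In particular $D>0$ and $N>0$ throughout the admissible interval (for $\gamma\le 0$ we have $f\le 0$ so $N\ge 1$; for $\gamma\in[0,\tfrac1{d+1})$ we have $N\ge 1-\tfrac1{(d-1)^2}>0$ since $d\ge 3$), so $g>0$ there.

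Set $\alpha:=g(-\tfrac1{d-1})$. Since $D>0$, it suffices to prove $\phi(\gamma):=N(\gamma)-\alpha D(\gamma)\ge 0$ on the interval, and $\phi(-\tfrac1{d-1})=0$ by construction. I would split at $\gamma=0$. On $[0,\tfrac1{d+1})$ the function $\phi$ is concave (an affine function minus the convex function $f/(d-1)^2$), and at the two endpoints it is nonnegative: $\phi(0)=1-\alpha\ge 0$ because $\alpha\le g(0)=1$, and $\phi(\tfrac1{d+1})=N(\tfrac1{d+1})>0$; concavity then forces $\phi\ge 0$ on all of $[0,\tfrac1{d+1})$. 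On $[-\tfrac1{d-1},0]$ the function $\phi$ is convex (affine minus the concave $f/(d-1)^2$), so it is enough to check $\phi'(-\tfrac1{d-1})\ge 0$: convexity makes $\phi'$ nondecreasing, hence nonnegative on the whole subinterval, hence $\phi$ is nondecreasing there and $\phi\ge\phi(-\tfrac1{d-1})=0$.

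The one genuine computation, and the step I expect to be the main obstacle, is the endpoint-derivative inequality. Writing $x_0=\tfrac1{d-1}$ and using that $f$ is odd (so $f(-x_0)=-f(x_0)$ and $f'(-x_0)=f'(x_0)$), one gets $\alpha=\dfrac{(d-1)^2+f(x_0)}{2d(d-1)}$ and $\phi'(-x_0)=(d+1)\alpha-\dfrac{f'(x_0)}{(d-1)^2}$, so $\phi'(-x_0)\ge 0$ is equivalent to
\[(d+1)(d-1)\bigl((d-1)^2+f(x_0)\bigr)\ \ge\ 2d\,f'(x_0).\]
For the right-hand side I would avoid any delicate estimate of the hypergeometric function by using convexity of $f$ on $[x_0,1]$: the tangent line at $x_0$ lies below the graph, so $f'(x_0)\le\dfrac{f(1)-f(x_0)}{1-x_0}\le\dfrac{1}{1-x_0}=\dfrac{d-1}{d-2}$, where we used $f(x_0)\ge 0$, $f(1)=1$, and $x_0<1$ (valid since $d\ge 3$). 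For the left-hand side, drop the nonnegative term $f(x_0)$ to get the lower bound $(d+1)(d-1)^3$. It then remains to verify the elementary inequality $(d+1)(d-1)^3\ge\dfrac{2d(d-1)}{d-2}$, i.e. $(d+1)(d-1)^2(d-2)\ge 2d$, which holds for all $d\ge 3$ since $(d-1)^2(d-2)\ge 4$ and $d+1>d>0$ give $(d+1)(d-1)^2(d-2)\ge 4(d+1)>2d$. Combining the two subintervals yields $\phi\ge 0$ throughout, hence $g\ge\alpha$ with equality at $\gamma=-\tfrac1{d-1}$, which is the claim. The only places where analysis (as opposed to bookkeeping and a polynomial inequality) enters are the sign pattern of the hypergeometric coefficients, which governs the global convexity structure of $f$, and the tangent-line bound on $f'(x_0)$; the care is in arranging things so that these two ingredients reduce the whole statement to the single inequality above without ever evaluating $F^*$.
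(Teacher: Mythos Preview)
Your argument is correct and takes a genuinely different route from the paper's. The paper computes $\partial_\gamma\bigl(\tfrac{N}{D}\bigr)$ explicitly, expands it in terms of two hypergeometric functions, bounds a certain combination of these by the constant $2$ via a separate extremal analysis of its critical points, and then appeals to Gautschi's inequality for the ratio $\Gamma(d^2/2)/\Gamma((d^2-1)/2)$ to conclude positivity of the derivative. Your approach instead exploits the global shape of $f=F^*(d^2-1,\cdot)$: since its odd power series has all positive coefficients, $f$ is concave on $[-1,0]$ and convex on $[0,1]$ with $f(1)=1$, which lets you reduce everything to the single tangent-line estimate $f'(x_0)\le (1-f(x_0))/(1-x_0)\le (d-1)/(d-2)$ and an elementary polynomial inequality. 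This sidesteps both the hypergeometric bookkeeping and the Gamma-function bound entirely. One small presentational wrinkle: your justification of $\phi(0)\ge 0$ via ``$\alpha\le g(0)=1$'' is part of what is being proved; the clean fix is simply to run the $[-x_0,0]$ step first, since convexity together with $\phi'(-x_0)\ge 0$ already forces $\phi(0)\ge 0$, and then feed that into the concavity argument on $[0,\tfrac{1}{d+1})$. Alternatively one checks $\alpha\le 1$ directly from the formula $\alpha=\tfrac{(d-1)^2+f(x_0)}{2d(d-1)}$ and $f(x_0)\le 1$.
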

\begin{proof}
    In order to prove this, we first prove that for integers \(d \geq 3\) and for \(\gamma\) in the range \(-\frac{1}{d-1} \leq \gamma < \frac{1}{d+1}\), we have that the partial derivative of \eqref{approx_ratio2} with respect to \(\gamma\) is non-negative:
    \[\frac{\partial}{\partial \gamma} \left\{\frac{1 - \frac{F^*(d^2-1, \gamma)}{(d-1)^2}}{1 - (d+1)\gamma}\right\} \geq 0\]
    It then follows that \eqref{approx_ratio2} is minimized when \(\gamma = -\frac{1}{d-1}\) takes the smallest value in this range.

    First, recall the series expansion definition of the Gaussian hypergeometric function:
    \begin{equation}\label{hypgeo_func_def}
        \, _2 F_1\left(a, b; c; z\right) := \sum_{n=0}^{\infty} \frac{(a)_n (b)_n}{(c)_n} \frac{z^n}{n!} = 1 + \frac{ab}{c}\frac{z}{1}+ \frac{a(a+1)b(b+1)}{c(c+1)}\frac{z^2}{2} + \cdots
    \end{equation}
    where \((x)_n = \prod_{k=0}^{n-1} (x+k)\) denotes the rising factorial. Next, we note that the derivative of the Gaussian hypergeometric function with respect \(z\) is the following.
    \begin{equation}\label{hypgeo_func_deriv}
        \frac{d z}{d} \, _2 F_1\left(a, b; c; z\right) = \frac{ab}{c} \, _2 F_1\left(a + 1, b + 1; c + 1; z\right)
    \end{equation}
    As an immediate result of this, we know that for \(a,b,c > 0\), the derivative of the Gaussian hypergeometric function is positive for \(z > 0\). Moreover, all of its derivatives are positive. In other words, the Gaussian hypergeometric function for \(a,b,c \geq 0\) is concave up when \(z > 0\). 
    
    Using \eqref{hypgeo_func_deriv}, we have that
    \begin{align*}
        \frac{\partial}{\partial \gamma}\left\{F^*\left(d^2-1,\gamma \right)\right\} = \frac{2}{d^2-1}\left(\frac{\Gamma\left(\frac{d^2}{2}\right)}{\Gamma\left(\frac{1}{2}\left(d^2-1\right)\right)}\right)^2 \frac{\gamma^2 \, _2F_1\left(\frac{3}{2},\frac{3}{2};\frac{1}{2} \left(d^2+3\right);\gamma ^2\right)+\left(d^2+1\right) \, _2F_1\left(\frac{1}{2},\frac{1}{2};\frac{1}{2} \left(d^2+1\right);\gamma ^2\right)}{\left(d^2+1\right)}
    \end{align*}
    Then,
    \begin{align}
        \frac{\partial}{\partial \gamma} \left\{\frac{1 - \frac{F^*(d^2-1, \gamma)}{(d-1)^2}}{1 - (d+1)\gamma}\right\} &= \frac{(d+1) (d-1)^2-(d+1) F^*\left(d^2-1,\gamma \right) - (1 - (d+1)\gamma) \frac{\partial}{\partial \gamma}\left\{F^*\left(d^2-1,\gamma \right)\right\}}{(d-1)^2 (1 - (d+1)\gamma)^2} \nonumber\\
        \begin{split}
            &= \frac{d+1}{(1 - (d + 1) \gamma)^2}-
            \frac{2}{d^2-1}\left(\frac{\Gamma\left(\frac{d^2}{2}\right)}{\Gamma\left(\frac{1}{2}\left(d^2-1\right)\right)}\right)^2 \left(\frac{1}{(d-1)^2 \left(d^2+1\right) (1 - (d + 1) \gamma)^2}\right) \\
            &\ \ \ \ \ \ \ \ \ \ \ \ \ \ \ \ \cdot \Bigg(\left(d^2+1\right) \, _2F_1\left(\frac{1}{2},\frac{1}{2};\frac{1}{2} \left(d^2+1\right);\gamma ^2\right)\\
            &\ \ \ \ \ \ \ \ \ \ \ \ \ \ \ \ \ \ \ \ + \gamma ^2 (1- (d + 1) \gamma) \, _2F_1\left(\frac{3}{2},\frac{3}{2};\frac{1}{2} \left(d^2+3\right);\gamma ^2\right)\Bigg)\label{left_off_eq_1}
        \end{split}
    \end{align}
    For which, we direct our attention to the following, which we denoted by \(g\).
    \[g(d,\gamma) := \frac{1}{d^2+1}\left(\left(d^2+1\right) \, _2F_1\left(\frac{1}{2},\frac{1}{2};\frac{1}{2} \left(d^2+1\right);\gamma ^2\right) + \gamma^2 (1- (d + 1) \gamma) \, _2F_1\left(\frac{3}{2},\frac{3}{2};\frac{1}{2} \left(d^2+3\right);\gamma ^2\right)\right)\]
    We seek to upper bound this in the range \(-1 \leq \gamma \leq \frac{1}{d+1}\). We can observe that the partial derivative on \(\gamma\) has two roots at \(\gamma = 0, \frac{1}{d+1}\) for all \(d\), as
    \[\frac{\partial}{\partial \gamma} \left\{g(d,\gamma)\right\} = \frac{3 \gamma  (1-(d+1)\gamma) \left(3 \gamma ^2 \, _2F_1\left(\frac{5}{2},\frac{5}{2};\frac{1}{2} \left(d^2+5\right);\gamma ^2\right)+\left(d^2+3\right) \, _2F_1\left(\frac{3}{2},\frac{3}{2};\frac{1}{2} \left(d^2+3\right);\gamma ^2\right)\right)}{\left(d^2+1\right) \left(d^2+3\right)}\]
    So to bound the maximum value in this range, we consider the maximum value at these points combined with the point at \(\gamma = -1\). Clearly, \(g(d,0) = 1\) for all \(d\) as is evident by the series expansion, \eqref{hypgeo_func_def}. Additionally, \(g(d, -1) > g(d, -\frac{1}{d+1}) > g(d, \frac{1}{d+1})\) as the Gaussian hypergeometric function is concave up for \(\gamma^2 > 0\) and the term \(\gamma^2 (1- (d + 1) \gamma)\) is maximized when \(\gamma = -1\). Therefore, we are left to maximize \(g(d,-1)\) over \(d \geq 3\).
    \[g(d,-1) = \, _2F_1\left(\frac{1}{2},\frac{1}{2};\frac{1}{2} \left(d^2+1\right);1\right)+\frac{(d+2)}{d^2+1} \, _2F_1\left(\frac{3}{2},\frac{3}{2};\frac{1}{2} \left(d^2+3\right);1\right)\]
    By the series expansion, we have that \(\, _2F_1\left(\frac{a}{2},\frac{b}{2};\frac{1}{2} \left(d^2+c\right);1\right) = 1 + \Theta(\frac{1}{d^2})\) and is decreasing and thus \(g(d,-1) = 1 + \Theta(\frac{1}{d})\). Thus \(g(d,\gamma) \leq g(d,-1) \leq g(3,-1) < 2\) and is decreasing (for \(d \geq 3\)).
    
    Then, we can rearrange \eqref{left_off_eq_1} and use the upper bound \(g(d,\gamma) < 2\), to get
    \begin{align*}
        \frac{\partial}{\partial \gamma} \left\{\frac{1 - \frac{F^*(d^2-1, \gamma)}{(d-1)^2}}{1 - (d+1)\gamma}\right\} 
        &> \frac{d+1}{(1 - (d + 1) \gamma)^2}-
        \frac{4}{d^2-1}\left(\frac{\Gamma\left(\frac{d^2}{2}\right)}{\Gamma\left(\frac{1}{2}\left(d^2-1\right)\right)}\right)^2 \left(\frac{1}{(d-1)^2 \left(d^2+1\right) (1 - (d + 1) \gamma)^2}\right) \\
        &= \frac{d+1}{(1 - (d + 1) \gamma)^2}\left(1-
        \frac{4}{d^2-1}\left(\frac{\Gamma\left(\frac{d^2}{2}\right)}{\Gamma\left(\frac{1}{2}\left(d^2-1\right)\right)}\right)^2 \left(\frac{1}{(d-1)^2 \left(d^2+1\right) (d+1)}\right)\right)
    \end{align*}
    Which we want to show is positive. To do this, we show that both terms in the product are positive. First, we can observe that \(\frac{d+1}{(1 - (d + 1) \gamma)^2} > 0\) for all \(d \geq 3\) and \(\gamma < \frac{1}{d+1}\). Next, we can use Gautschi's inequality, which states that for \(x > 1\), \(\frac{\Gamma(x)}{\Gamma(x-\frac{1}{2})} < \sqrt{x}\), to get the following for \(d \geq 3\).
    \[\frac{4}{d^2-1}\left(\frac{\Gamma\left(\frac{d^2}{2}\right)}{\Gamma\left(\frac{1}{2}\left(d^2-1\right)\right)}\right)^2 < \frac{4}{d^2-1} \frac{d^2}{2} \leq \left(\frac{3}{2}\right)^2\] 
    Thus, the whole term is positive for \(d \geq 3\), and so the product is also positive for all \(d \geq 3\).
\end{proof}

\begin{lemma}[The Bad Angle for \(\beta_d\)]\label{lemma_bad_angle2}
    For \(d \geq 3\), the value of \(-\frac{1}{d-1} \leq \gamma < 1\) that achieves the approximation ratio in \eqref{approx_ratio4} is \(\gamma = -\frac{1}{d-1}\).
\end{lemma}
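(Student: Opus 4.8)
The plan is to adapt the proof of \cref{lemma_bad_angle} essentially verbatim: I would show that on the whole interval $-\frac{1}{d-1}\le\gamma<1$ the partial derivative
\[
\frac{\partial}{\partial\gamma}\left\{\frac{1-\frac{F^*(d^2-1,\gamma)}{(d-1)^2}}{1-\gamma}\right\}
\]
is nonnegative, so that the ratio is nondecreasing on that (half-open) interval and its minimum \eqref{approx_ratio4} is attained at the left endpoint $\gamma=-\frac{1}{d-1}$, which is the claim. The only structural differences from \cref{lemma_bad_angle} are that the denominator $1-(d+1)\gamma$ is replaced by $1-\gamma$ and the range is extended to $\gamma<1$, and I expect this to make the argument \emph{easier}, not harder, since $\gamma^2$ still stays in $[0,1)$ on the new range.

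For the derivative I would reuse the expression for $\frac{\partial}{\partial\gamma}F^*(d^2-1,\gamma)$ already computed in the proof of \cref{lemma_bad_angle} (it follows from the series \eqref{hypgeo_func_def} and the identity \eqref{hypgeo_func_deriv}), writing $F^*(d^2-1,\gamma)=C_d\,\gamma\,{}_2F_1\!\left(\tfrac12,\tfrac12;\tfrac{d^2+1}{2};\gamma^2\right)$ with $C_d=\frac{2}{d^2-1}\left(\frac{\Gamma(d^2/2)}{\Gamma((d^2-1)/2)}\right)^2$. Applying the quotient rule, the key simplification is the telescoping $(1-\gamma)\,\partial_\gamma F^*(d^2-1,\gamma)+F^*(d^2-1,\gamma)=C_d\,\hat g(d,\gamma)$, where
\[
\hat g(d,\gamma):={}_2F_1\!\left(\tfrac12,\tfrac12;\tfrac{d^2+1}{2};\gamma^2\right)+\frac{\gamma^2(1-\gamma)}{d^2+1}\,{}_2F_1\!\left(\tfrac32,\tfrac32;\tfrac{d^2+3}{2};\gamma^2\right);
\]
this telescoping occurs because the $(1-\gamma)$ multiplying the leading ${}_2F_1(\tfrac12,\tfrac12;\cdot;\gamma^2)$ term of $\partial_\gamma F^*$ combines with the $\gamma\,{}_2F_1(\tfrac12,\tfrac12;\cdot;\gamma^2)$ coming from $F^*$ itself, and $\hat g(d,\gamma)$ is precisely the quantity $g(d,\gamma)$ of \cref{lemma_bad_angle} with $1-(d+1)\gamma$ replaced by $1-\gamma$. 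Consequently
\[
\frac{\partial}{\partial\gamma}\left\{\frac{1-\frac{F^*(d^2-1,\gamma)}{(d-1)^2}}{1-\gamma}\right\}=\frac{1}{(1-\gamma)^2}\left(1-\frac{C_d}{(d-1)^2}\,\hat g(d,\gamma)\right),
\]
so it remains to show $\frac{C_d}{(d-1)^2}\,\hat g(d,\gamma)\le 1$ for $d\ge 3$ and $-\frac{1}{d-1}\le\gamma<1$.

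To bound $\hat g$, I would argue as follows. On the range $0\le\gamma^2<1$ both hypergeometric series have nonnegative coefficients and converge at $z=1$ (since $\tfrac{d^2-1}{2}>0$ and $\tfrac{d^2-3}{2}>0$ for $d\ge 3$), hence are bounded above by their values at $z=1$; by Gauss's summation theorem (or just the series) each such value is $1+O(1/d^2)$ and decreasing in $d$, so both are bounded by their $d=3$ values. Combined with $\gamma^2(1-\gamma)\le 2$ on $[-1,1]$ this gives the crude bound $0<\hat g(d,\gamma)<2$ for all $d\ge 3$ — note that, unlike in \cref{lemma_bad_angle}, there is no need to differentiate $\hat g$ to locate its maximizer here. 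Plugging $\hat g<2$ into the displayed identity and then applying Gautschi's inequality $\frac{\Gamma(x)}{\Gamma(x-1/2)}<\sqrt{x}$ with $x=d^2/2$ (which yields $C_d<\frac{d^2}{d^2-1}$) gives $\frac{C_d}{(d-1)^2}\,\hat g(d,\gamma)<\frac{2d^2}{(d-1)^3(d+1)}$, a quantity equal to $\tfrac{9}{16}$ at $d=3$ and decreasing in $d$, hence $<1$ for all $d\ge 3$. Thus the derivative is strictly positive on $[-\tfrac{1}{d-1},1)$, the ratio is increasing, and its minimum is at $\gamma=-\tfrac{1}{d-1}$. I expect the main obstacle to be purely bookkeeping: verifying the telescoping identity for $(1-\gamma)\partial_\gamma F^* + F^*$ and checking that the crude estimate $\hat g<2$ is robust enough to state without chasing the exact maximizer; there is no genuine analytic difficulty beyond what already appears in \cref{lemma_bad_angle}.
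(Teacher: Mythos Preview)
Your proposal is correct and follows essentially the same approach as the paper's proof sketch: compute the $\gamma$-derivative of the ratio, identify the analogue $\hat g(d,\gamma)$ of the paper's $g(d,\gamma)$ with $1-(d+1)\gamma$ replaced by $1-\gamma$, bound it by a constant, and finish with Gautschi's inequality. The paper asserts (without details) that $\hat g<\tfrac{3}{2}$ for $d\ge 3$, whereas you use the cruder bound $\hat g<2$; both suffice since $\tfrac{2d^2}{(d-1)^3(d+1)}\le\tfrac{9}{16}<1$ for $d\ge 3$, and your argument for bounding $\hat g$ (monotonicity of the hypergeometric series in $z=\gamma^2$, convergence at $z=1$, and the crude estimate $\gamma^2(1-\gamma)\le 2$) is in fact more explicit than what the paper provides.
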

\begin{proof}[Proof (sketch)]
    This follows by the same logic as the proof for \cref{lemma_bad_angle}. We note that
    \begin{align*}
        \frac{\partial}{\partial \gamma} \left\{\frac{1 - \frac{F^*(d^2-1, \gamma)}{(d-1)^2}}{1 - \gamma}\right\} &= \frac{(d-1)^2 - F^*\left(d^2-1,\gamma \right) - (1 - \gamma) \frac{\partial}{\partial \gamma}\left\{F^*\left(d^2-1,\gamma \right)\right\}}{(d-1)^2 (1 - \gamma)^2} \\
        \begin{split}
            &= \frac{1}{(1-\gamma)^2} \Biggl(1 - \frac{2}{d^2-1}\left(\frac{\Gamma\left(\frac{d^2}{2}\right)}{\Gamma\left(\frac{1}{2}\left(d^2-1\right)\right)}\right)^2 \left(\frac{1}{(d-1)^2(d^2+1)}\right) \\
            &\ \ \ \ \ \ \ \ \ \ \ \ \ \ \ \ \ \ \ \ \ \ \ \ \cdot \biggl(\left(d^2+1\right) \, _2F_1\left(\frac{1}{2},\frac{1}{2};\frac{1}{2} \left(d^2+1\right);\gamma ^2\right)\\
            &\ \ \ \ \ \ \ \ \ \ \ \ \ \ \ \ \ \ \ \ \ \ \ \ \ \ \ \ + \gamma^2 (1 - \gamma) \, _2F_1\left(\frac{3}{2},\frac{3}{2};\frac{1}{2} \left(d^2+3\right);\gamma ^2\right)\biggr)\Biggr)
        \end{split}
    \end{align*}
    For which, the term with the two Gaussian hypergeometric functions can be show to be less than \(\frac{3}{2}\) for \(d \geq 3\).
\end{proof}

As a result of \cref{lemma_bad_angle,lemma_bad_angle2}, we can plug the minimizers into \eqref{approx_ratio2} and \eqref{approx_ratio4} to get the following.
\begin{equation}\label{approx_ratio5}
    \alpha_d = \frac{1}{2}\left(\frac{d-1}{d}\right)\left(1 - \frac{F^*(d^2-1, -\frac{1}{d-1})}{(d-1)^2}\right),\ \ \ \ \beta_d = \left(\frac{d-1}{d}\right)\left(1 - \frac{F^*(d^2-1, -\frac{1}{d-1})}{(d-1)^2}\right)\ \ \ \ \text{for } d \geq 3
\end{equation}

\subsection{Proof of \texorpdfstring{\cref{thm_1}}{Theorem \ref{thm_1}} and \texorpdfstring{\cref{thm_2}}{Theorem \ref{thm_2}}}

Finally, we can prove \cref{thm_1}.

\begin{theorem}[Restatement of \cref{thm_1}]\label{thm_1_restate} There exists an efficient approximation algorithm for \textQMCdProb that admits an \(\alpha_d\)-approximation, where the constants \(\alpha_d\) (for \(d \geq 2\)) satisfy, 
\begin{enumerate}
    \item[(i)] \(\alpha_d \geq \frac{1}{2}\left(1-\frac{1}{d}\right)\)
    \item[(ii)] \(\alpha_d - \frac{1}{2}\left(1-\frac{1}{d}\right) \sim \frac{1}{2 d^3}\)
    \item[(iii)] \(\alpha_2 \geq 0.498767,\alpha_3 \geq 0.372995,\alpha_4 \geq 0.388478,\alpha_5 \geq 0.406128,\alpha_{10} \geq 0.450614,\alpha_{100} \geq 0.4950005\)
\end{enumerate}
\end{theorem}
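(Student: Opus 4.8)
The starting point is the closed form for $\alpha_d$ already established in \eqref{approx_ratio5} via \cref{lemma_bad_angle}: for $d \geq 3$,
\[\alpha_d = \frac{1}{2}\left(\frac{d-1}{d}\right)\left(1 - \frac{F^*(d^2-1, -\tfrac{1}{d-1})}{(d-1)^2}\right),\]
while for $d=2$ the algorithm coincides with the Gharibian--Parekh algorithm and $\alpha_2$ is the minimum of \eqref{approx_ratio2} over $\gamma \in [-1,\tfrac{1}{3})$, namely the value $0.498767\ldots$ of \cite{briet_positive_2010,gharibian_almost_2019,hwang_unique_2022}. A one-line rearrangement of the $d\geq 3$ formula gives the identity
\[\alpha_d - \tfrac{1}{2}\left(1-\tfrac{1}{d}\right) = -\frac{F^*(d^2-1, -\tfrac{1}{d-1})}{2d(d-1)},\]
which is what all three parts are really about. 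For \emph{Part (i)} it then suffices to show $F^*(d^2-1,-\tfrac{1}{d-1}) < 0$ for $d\geq 3$ (the $d=2$ bound is immediate, $0.498767 > \tfrac{1}{4}$). From the defining formula $F^*(k,\gamma) = \tfrac{2\gamma}{k}\big(\Gamma(\tfrac{k+1}{2})/\Gamma(\tfrac{k}{2})\big)^2\,{}_2F_1(\tfrac{1}{2},\tfrac{1}{2};\tfrac{k}{2}+1;\gamma^2)$ the Gamma ratio is positive and the hypergeometric factor is a power series with nonnegative coefficients evaluated at $\gamma^2 \geq 0$, hence positive; so $\sgn F^*(k,\gamma) = \sgn \gamma$, and $\gamma = -\tfrac{1}{d-1} < 0$ gives $F^* < 0$, i.e. $\alpha_d > \tfrac{1}{2}(1-\tfrac{1}{d})$.

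\emph{Part (ii)} is obtained by expanding each factor of $F^*(d^2-1,-\tfrac{1}{d-1})$ as $d\to\infty$. Here $\gamma^2 = (d-1)^{-2}\to 0$ and, because the denominator parameter $\tfrac{d^2+1}{2}$ grows quadratically, a term-by-term comparison in the series \eqref{hypgeo_func_def} gives ${}_2F_1(\tfrac{1}{2},\tfrac{1}{2};\tfrac{d^2+1}{2};(d-1)^{-2}) = 1 + O(d^{-4})$. For the Gamma ratio I would sandwich $\Gamma(\tfrac{d^2}{2})/\Gamma(\tfrac{d^2-1}{2})$ between Gautschi's inequality $\Gamma(x)/\Gamma(x-\tfrac{1}{2}) < \sqrt{x}$ (already used in \cref{lemma_bad_angle}) and a matching lower bound $\Gamma(x)/\Gamma(x-\tfrac{1}{2}) > \sqrt{x-\tfrac{1}{2}}$ of Wendel/Kershaw type, with $x=\tfrac{d^2}{2}$, yielding $\big(\Gamma(\tfrac{d^2}{2})/\Gamma(\tfrac{d^2-1}{2})\big)^2 = \tfrac{d^2}{2}\,(1+O(d^{-2}))$. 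Since $\tfrac{2\gamma}{d^2-1} = -\tfrac{2}{(d-1)^2(d+1)}$, we get $F^*(d^2-1,-\tfrac{1}{d-1}) \sim -\tfrac{d^2}{(d-1)^2(d+1)}$, and hence $\alpha_d - \tfrac{1}{2}(1-\tfrac{1}{d}) \sim \tfrac{d}{2(d-1)^3(d+1)} \sim \tfrac{1}{2d^3}$.

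\emph{Part (iii)} follows by evaluating the closed form \eqref{approx_ratio5} numerically at $d \in \{3,4,5,10,100\}$. The only transcendental ingredient is ${}_2F_1(\tfrac{1}{2},\tfrac{1}{2};\tfrac{d^2+1}{2};(d-1)^{-2})$, whose power series converges with term ratio bounded by $(d-1)^{-2}$; truncating after a handful of terms and bounding the tail by a geometric series yields a rigorous two-sided enclosure, and combined with rational bounds on the Gamma ratio this certifies a lower bound on $\alpha_d$ to the stated number of digits. For $d=2$ the value $\alpha_2 \geq 0.498767$ is imported from \cite{briet_positive_2010,gharibian_almost_2019,hwang_unique_2022}, where the one-dimensional minimization \eqref{approx_ratio2} over $\gamma \in [-1,\tfrac{1}{3})$ is carried out and the minimizer (the ``bad angle'') is $\gamma \approx -0.9659$ rather than $-\tfrac{1}{d-1}=-1$ --- which is exactly why $d=2$ is excluded from \cref{lemma_bad_angle}.

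The delicate step is the asymptotic in Part (ii): the factors $\big(\Gamma(\tfrac{d^2}{2})/\Gamma(\tfrac{d^2-1}{2})\big)^2$ and ${}_2F_1(\tfrac{1}{2},\tfrac{1}{2};\tfrac{d^2+1}{2};(d-1)^{-2})$ must be controlled to the right order simultaneously, since a crude $\Theta$-estimate on either one loses the leading constant $\tfrac{1}{2}$ in $\tfrac{1}{2d^3}$; and the $d=2$ endpoint is genuinely outside the clean formula and must be handled by the separate Gharibian--Parekh-style numerical minimization.
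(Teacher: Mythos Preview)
Your proof is correct and follows essentially the same route as the paper: both start from the closed form \eqref{approx_ratio5} (via \cref{lemma_bad_angle}), use the positivity of the hypergeometric series coefficients for Part~(i), and combine the ${}_2F_1 = 1 + O(d^{-4})$ expansion with the Gamma-ratio asymptotic $(\Gamma(d^2/2)/\Gamma((d^2-1)/2))^2 \sim d^2/2$ for Part~(ii). The only cosmetic difference is that the paper obtains the Gamma-ratio asymptotic via Stirling's formula whereas you sandwich it with Gautschi/Wendel--Kershaw bounds; both give the same leading term $d^2/2$ and hence the constant $\tfrac{1}{2}$ in $\tfrac{1}{2d^3}$.
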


\begin{proof}
We prove \cref{thm_1}, \cref{thm_1_pt_1,thm_1_pt_2} for \(d \geq 3\). The \(d=2\) case follows from \cref{thm_1_pt_3}. First, we use \eqref{approx_ratio5}, which is a result of \cref{lemma_bad_angle} and simplify:
\begin{align}
    \alpha_d%
    &= \frac{1}{2}\left(1-\frac{1}{d}\right) - \frac{F^*\left(d^2-1, -\frac{1}{d-1}\right)}{2d(d-1)} \nonumber\\
    &=\frac{1}{2}\left(1-\frac{1}{d}\right)+\frac{1}{(d-1)^3 d (d+1)}\left(\frac{\Gamma\left(\frac{d^2}{2}\right)}{\Gamma\left(\frac{d^2-1}{2}\right)}\right)^2 \, _2 F_1\left(\frac{1}{2}, \frac{1}{2}; \frac{1}{2}(d^2 + 1); \frac{1}{(d-1)^2}\right)\label{main_thm_1_eq}
\end{align}
Using the series expansion, \eqref{hypgeo_func_def}, for the values $a,b=1/2,c=\frac1{2}(d^2+1),z=\frac{1}{(d-1)^2}$ as in \eqref{main_thm_1_eq}, we always have that (for any \(n \geq 0\))
\[\frac{\left(\frac{1}{2}\right)_n \left(\frac{1}{2}\right)_n}{\left(\frac{1}{2}(d^2+1)\right)_n} \frac{1}{(d-1)^{2n} n!} > 0\]
Then plugging in our values gives the series approximation of \(1 + O(d^{-4})\).
Furthermore, because each term in the series is positive, the \(O(d^{-4})\) term is positive. This proves \cref{thm_1}, \cref{thm_1_pt_1}.

Next, we look at the expression in \cref{thm_1}, \cref{thm_1_pt_2}, which using \eqref{main_thm_1_eq} gives the following equality. Moreover, we use the series approximation of the Gaussian hypergeometric function with the above inputs to get the asymptotic equivalence below. The series approximation, \(1 + o_d(1)\), tells us that it is asymptotically equivalent to 1.
\[\alpha_d - \frac{1}{2}\left(1-\frac{1}{d}\right) = \left(\frac{\Gamma\left(\frac{d^2}{2}\right)}{\Gamma\left(\frac{1}{2}\left(d^2-1\right)\right)}\right)^2 \left(\frac{\, _2 F_1\left(\frac{1}{2}, \frac{1}{2}; \frac{1}{2}(d^2+1); \frac{1}{(d-1)^2}\right)}{(d-1)^3 d (d+1)}\right) \sim \frac{1}{d^5} \left(\frac{\Gamma\left(\frac{d^2}{2}\right)}{\Gamma\left(\frac{1}{2}\left(d^2-1\right)\right)}\right)^2\]
Furthermore, by using Stirling's approximation/the Lanczos approximation, which says that \(\Gamma(z) \sim \sqrt{2\pi} z^{z-1/2}e^{-z}\) (as \(z \ra \infty\)), we get the following.
\[\left(\frac{\Gamma\left(\frac{d^2}{2}\right)}{\Gamma\left(\frac{1}{2}\left(d^2-1\right)\right)}\right)^2 \sim \frac{1}{2 e}\frac{(d^2)^{d^2-1}}{(d^2-1)^{d^2-2}} = \frac{1}{2 e}\frac{d^2}{\left(1 - d^{-2}\right)^{-2}\left(1 - d^{-2}\right)^{d^2}} \sim \frac{d^2}{2}\]

Finally, we can put it all together to get that
\[\alpha_d - \frac{1}{2}\left(1-\frac{1}{d}\right) \sim \frac{1}{d^5} \left(\frac{\Gamma\left(\frac{d^2}{2}\right)}{\Gamma\left(\frac{1}{2}\left(d^2-1\right)\right)}\right)^2 \sim \frac{1}{2d^3}\]
This proves \cref{thm_1}, \cref{thm_1_pt_2}.

\cref{thm_1}, \cref{thm_1_pt_3} is proven numerically and can be seen in \cref{tab:approx_ratios}.
\end{proof}

Next, we can prove \cref{thm_2}.

\begin{theorem}[Restatement of \cref{thm_2}]\label{thm_2_restate} \textQMCdProb admits an \(\beta_d\)-approximation to the optimal product-state solution with respect to the basic SDP, where the constants \(\beta_d\) satisfy, 
\begin{enumerate}[label=(\roman*)]
    \item \(\beta_d = 2 \alpha_d\) for \(d \geq 3\)
    \item \(\beta_2 \geq 0.956337\) \cite{briet_positive_2010,hwang_unique_2022}
\end{enumerate}
\end{theorem}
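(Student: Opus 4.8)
The plan for part~(i) is to mirror the preceding analysis that produced $\alpha_d$, the only change being which SDP we start from. The algorithm solves the product state SDP of \cref{prodsdpprog_def} and feeds its vectors $\{\ket{u}\}_{u\in V}\subset S^{n-1}$ into \cref{algo_proj_rounding}, outputting the mixed product state whose per-vertex Bloch vectors (in the rescaled decomposition \eqref{rescaled2_bloch_vec_decomp}) are the unit vectors $\vec{b}_u=\mathbf{Z}\ket{u}/\|\mathbf{Z}\ket{u}\|$; by \cref{insphere_prop} and \cref{prop_tr_rho2_vs_bloch_vec} these are genuine density matrices of purity $\frac{1}{d-1}$, so the output is feasible for \cref{MixedProd_def}. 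Since the product state SDP is a relaxation of $\textsc{PureProd}_{\textsc{QMC}_d}(G)$ --- its objective is that of \cref{prop_pure_prod_prob_rewrite} and its constraint \eqref{prodsdpprog_FJ_const} holds on $\Omega_d^{\textsc{ext}}$ by \cref{bloch_vec_angle_prop} --- it suffices to compare the expected rounded value against the SDP value edge by edge. By \cref{prop_rewite_mixed_state_and_upper_bound} the rounded value equals $\frac{1}{2}\left(\frac{d-1}{d}\right)\E_{(u,v)\sim E}\left(1-\braket{\vec{b}_u,\vec{b}_v}/(d-1)^2\right)$, and applying Lemma~2.1 of \cite{briet_grothendieck_2014} --- whose hypothesis that the $\ket{u}$ be unit vectors is exactly \eqref{prodsdpprog_SDP_vecs} --- replaces $\E_\mathbf{Z}\braket{\vec{b}_u,\vec{b}_v}$ by $F^*(d^2-1,\braket{u|v})$. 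Because each per-edge SDP contribution $\frac{1}{2}\left(\frac{d-1}{d}\right)(1-\braket{u|v})$ is nonnegative on the feasible range $\braket{u|v}\in[-\frac{1}{d-1},1)$, the worst-case-edge comparison of \eqref{approx_ratio3}--\eqref{approx_ratio4} aggregates into a global guarantee, so the approximation ratio is the $\beta_d$ of \eqref{approx_ratio4}.

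It then remains to evaluate \eqref{approx_ratio4} and compare it with \eqref{approx_ratio2}. By \cref{lemma_bad_angle2} the minimizing $\gamma$ for $d\geq 3$ is $\gamma=-\frac{1}{d-1}$, so substituting gives the closed form already recorded in \eqref{approx_ratio5}, namely $\beta_d=\left(\frac{d-1}{d}\right)\left(1-F^*(d^2-1,-\frac{1}{d-1})/(d-1)^2\right)$. By \cref{lemma_bad_angle} the same angle minimizes \eqref{approx_ratio2}, and the two expressions differ only through their denominators evaluated at $\gamma=-\frac{1}{d-1}$: one gets $1-(d+1)\gamma=\frac{2d}{d-1}$ for $\alpha_d$ versus $1-\gamma=\frac{d}{d-1}$ for $\beta_d$, a ratio of exactly $2$. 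Hence $\beta_d=2\alpha_d$ for every $d\geq 3$, which is part~(i); equivalently one reads this directly off the two formulas in \eqref{approx_ratio5}.

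For part~(ii) the closed form above does not apply, since \cref{lemma_bad_angle2} is stated only for $d\geq 3$ and the minimizing angle for $\beta_2$ is interior rather than the endpoint; the claim is instead inherited from prior work. When $d=2$ the constraint $\tr(\rho_v^2)=\frac{1}{d-1}=1$ forces every $\rho_v$ to be pure, so \cref{MixedProd_def} coincides with \cref{def_pure_prod_prob}, the insphere and outsphere coincide, and \cref{algo_proj_rounding} reduces to projection rounding of Bloch vectors over the full sphere $S^2$. Then the product state SDP of \cref{prodsdpprog_def} is, up to the global factor $\frac{1}{2}$, the SDP of \cite{briet_positive_2010}, and \eqref{approx_ratio4} specializes to $\beta_2=\min_{-1\leq\gamma<1}\frac{1-F^*(3,\gamma)}{1-\gamma}$, which is precisely the quantity shown to be at least $0.956337$ in \cite{briet_positive_2010,hwang_unique_2022}.

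The genuinely difficult step --- locating the bad angle for $\beta_d$ --- is already carried out in \cref{lemma_bad_angle2}, so what remains is essentially bookkeeping: verifying that $F^*$ really computes the expected rounded inner product (the unit-vector hypothesis), that the rounded Bloch vectors land on the insphere of $\Omega_d$ and hence give valid density matrices of the prescribed purity, and that the per-edge ratio bound may be summed because every SDP edge term is nonnegative on the feasible range. I would also keep the $d=2$ case separate throughout, both because the minimizer there is not $-\frac{1}{d-1}$ and because the bound must be imported from \cite{briet_positive_2010,hwang_unique_2022} rather than derived from \eqref{approx_ratio5}.
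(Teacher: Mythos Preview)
Your proposal is correct and follows essentially the same approach as the paper: invoke \cref{lemma_bad_angle2} and \cref{lemma_bad_angle} to evaluate \eqref{approx_ratio4} and \eqref{approx_ratio2} at the common minimizer $\gamma=-\frac{1}{d-1}$, read off the factor of two from \eqref{approx_ratio5}, and defer the $d=2$ case to \cite{briet_positive_2010,hwang_unique_2022}. The paper's own proof is only two sentences long, so your version simply spells out the bookkeeping (feasibility of the rounded state, validity of the edge-by-edge comparison, the $d=2$ reduction) that the paper leaves implicit.
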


\begin{proof}
    First, for \cref{thm_2}, \cref{thm_2_pt_1}, this follows from \eqref{approx_ratio5}, which are results of \cref{lemma_bad_angle,lemma_bad_angle2}. Then, \cref{thm_2}, \cref{thm_2_pt_2} is proven numerically.
\end{proof}

\section{The Complete Graph As The Algorithmic Gap}\label{section_alg_gap}

In this section, we prove \cref{thm_alg_gap} by showing that our analysis for our rounding algorithm is tight. This is often done through the notion of the algorithmic gap, defined in \cref{def_alg_gap}. In particular, we show that the algorithmic gap of our algorithm matches the approximation ratio given in \cref{thm_1}.

\begin{theorem}[Restatement of \cref{thm_alg_gap}] The approximation algorithm for \textQMCdProb that rounds to mixed product states using the basic SDP has algorithmic gap \(\alpha_d\) for \(d \geq 3\).
\end{theorem}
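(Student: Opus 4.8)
The plan is to show the algorithmic gap is \emph{at most} $\alpha_d$ by exhibiting a single tight instance; since \cref{thm_1} already gives $\mathrm{Gap}_A(\mathcal I)\ge\alpha_d$ for every instance $\mathcal I$, this pins down the infimum in \cref{def_alg_gap} to exactly $\alpha_d$. The instance will be the unweighted complete graph $K_d$ on $d$ vertices. The point of $K_d$ is that \cref{lemma_bad_angle} locates the worst angle of the rounding at $\gamma=-\frac{1}{d-1}$, and $-\frac{1}{d-1}$ is exactly the stacked-vector inner product that an optimal SDP solution for $K_d$ is forced to take on \emph{every} edge.

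First I would compute the true optimum. Let $\ket{\det}$ span the one-dimensional space $\wedge^d \H_d$. For any pair $u<v$, transposing the qudits $u$ and $v$ sends $\ket{\det}$ to $-\ket{\det}$, so $\ket{\det}$ lies in the range of the two-qudit antisymmetric projector on $\{u,v\}$, i.e.\ $h_{uv}\ket{\det}=\ket{\det}$. Hence $H_{K_d}\ket{\det}=\ket{\det}$, and since each $h_{uv}\preccurlyeq I$ we have $H_{K_d}\preccurlyeq I$, giving $\mathrm{OPT}(K_d)=\lambda_{\max}(H_{K_d})=1$. I would then certify $\mathrm{SDP}(K_d)=1$ by feeding the marginals of $\ket{\det}$ into \cref{basicvecprog_def}: the two-qudit marginal is $\rho_{uv}=\frac{1}{\binom{d}{2}}(I-P_{\text{sym}})$, the maximally mixed state on $\wedge^2\H_d$; its single-qudit marginals are $\frac{1}{d}I$, so the degree-one constraints \eqref{basicvecprog_loc_moment_degree_1_terms}--\eqref{basicvecprog_loc_moment_degree_1_terms2} hold, \cref{degree_one_terms_lemma} is automatically respected, and constraint \eqref{basicvecprog_const_square} is met; and the matrix $M(\Lambda^a_u,\Lambda^b_v):=\tr(\ket{\det}\bra{\det}\,\Lambda^a_u\Lambda^b_v)$ is a genuine PSD Gram matrix — namely of the vectors $\{\Lambda^a_u\ket{\det}\}$ — realizing the SDP vectors. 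A one-line trace identity, $\tr(\rho_{uv}h_{uv})=\frac{1}{\binom{d}{2}}\tr((I-P_{\text{sym}})^2)=1$, shows the objective equals $1$, matching the ceiling from \cref{prop_sdp_trivial_up_bound}.

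The structural step is that $\mathrm{SDP}(K_d)=1$ forces every edge to the bad angle. Written through the stacked vectors \eqref{stacked_sdp_vec}, each edge contributes $\frac{1}{2}(\frac{d-1}{d})(1-(d+1)\braket{u|v})$ to the objective; by the Frieze--Jerrum-type inequality \eqref{quantum_frieze_jerrum_style_constraint} we have $\braket{u|v}\ge-\frac{1}{d-1}$, so each contribution is at most $1$, with equality iff $\braket{u|v}=-\frac{1}{d-1}$. As the average over the edges of $K_d$ equals $1$, every optimal SDP solution has $\braket{u|v}=-\frac{1}{d-1}$ on all edges. Running \cref{algo_proj_rounding} on such a solution, \cite{briet_grothendieck_2014} gives $\E_{\mathbf{Z}}\braket{\vec b_u,\vec b_v}=F^*(d^2-1,-\frac{1}{d-1})$ on every edge; the rounded unit Bloch vectors lie on the insphere, so by \cref{insphere_prop,prop_tr_rho2_vs_bloch_vec} they yield a bona fide mixed product state, and \cref{prop_rewite_mixed_state_and_upper_bound} evaluates its expected energy to $A(K_d)=\frac{1}{2}(1-\frac{1}{d})(1-F^*(d^2-1,-\frac{1}{d-1})/(d-1)^2)=\alpha_d$ by \eqref{approx_ratio5}. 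Hence $\mathrm{Gap}_A(K_d)=A(K_d)/\mathrm{OPT}(K_d)=\alpha_d$, and combining with \cref{thm_1} finishes the proof.

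The step I expect to be the main obstacle is the SDP-tightness certification: one must verify that the proposed two-qudit marginals are \emph{jointly} consistent as moments of a single state (here automatic because they all descend from $\ket{\det}$), that the one-qudit marginals are maximally mixed so the degree-one constraints and \eqref{basicvecprog_const_square} are satisfied, and that the two-qudit marginal of the Slater determinant is indeed maximally mixed on $\wedge^2\H_d$ — standard linear algebra, but worth writing out carefully. Everything else reduces to trace identities together with the already-established \cref{lemma_bad_angle} and \eqref{approx_ratio5}.
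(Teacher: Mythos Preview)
Your proposal is correct and follows essentially the same approach as the paper: exhibit $K_d$ as a tight instance by showing $\mathrm{OPT}(K_d)=\mathrm{SDP}(K_d)=1$, deduce that every edge sits at the bad angle $\braket{u|v}=-\tfrac{1}{d-1}$, and evaluate the expected rounded energy via \eqref{approx_ratio5}. The only cosmetic difference is that the paper obtains $\mathrm{SDP}(K_d)\ge 1$ by invoking ``relaxation of \cref{lemma_d_clique_energy}'' in one line, whereas you spell out the feasible SDP solution explicitly through the two-body marginals of the Slater determinant; this extra detail is fine and indeed addresses exactly the verification you flagged as the main obstacle.
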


To show that this is the algorithmic gap, it suffices to give an instance that achieves the ratio \(\alpha_d\) in expectation. This is because the analysis determining the approximation ratio can be seen as a lower bound on the algorithmic gap. The problem instance we will use will be that of the complete graph on \(d\) vertices, denoted \(K_d\). The following is a well-known result (e.g., see \cite{brandao_product-state_2016}). However, we nonetheless give the proof in \cref{appendix_proof_lemma_d_clique_energy}.

\begin{lemma}[Energy of \(K_d\)]\label{lemma_d_clique_energy}
    The energy of \textQMCdProb with the complete graph on \(d\) vertices, \(K_d\), as it's interaction graph is 1.
\end{lemma}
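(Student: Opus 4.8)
The plan is to exhibit an explicit pure state on $\H_d^{\otimes d}$ that is annihilated by the symmetric projector on every edge of $K_d$, so that it achieves energy $1$ under $H_{K_d}$, and then combine this with the trivial upper bound $\QMCdProb(G) \le 1$ from \cref{prop_sdp_trivial_up_bound} (or simply the fact that $h$ is a projector). The natural candidate is the \emph{totally antisymmetric state} on $d$ qudits, i.e., the (normalized) generator of the one-dimensional subspace $\wedge^d \H_d$,
\[
  \ket{\Psi} \;=\; \frac{1}{\sqrt{d!}} \sum_{\sigma \in S_d} \sgn(\sigma)\, \ket{\sigma(1)\,\sigma(2)\,\cdots\,\sigma(d)}.
\]
Since $\dim \H_d = d$, this space is nonzero and exactly one-dimensional, so $\ket{\Psi}$ is well-defined.

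The key step is to show that for every pair $i < j$ in $[d]$, the two-qudit reduced action places $\ket{\Psi}$ entirely in the antisymmetric subspace of qudits $i,j$; equivalently, $P(\tau_{ij})\ket{\Psi} = -\ket{\Psi}$ for the transposition $\tau_{ij}$, which is immediate from $\sgn(\tau_{ij}) = -1$ and the defining property of $\wedge^d \H_d$ (every $\ket{\psi}\in\wedge^n\H_d$ satisfies $\ket{\psi} = \sgn(\sigma)P(\sigma)\ket{\psi}$). From this, $(P_{\text{sym}})_{ij}\ket{\Psi} = 0$: any vector antisymmetric under $\tau_{ij}$ is orthogonal to the symmetric subspace of those two tensor factors, because on $\H_d^{\otimes 2}$ the symmetric and antisymmetric subspaces are orthogonal complements (as noted in \cref{section_prelims}). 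Hence $h_{ij}\ket{\Psi} = (I - (P_{\text{sym}})_{ij})\ket{\Psi} = \ket{\Psi}$ for every edge $(i,j)$ of $K_d$, so
\[
  \bra{\Psi} H_{K_d} \ket{\Psi} \;=\; \E_{(i,j)\sim E}\, \bra{\Psi} h_{ij} \ket{\Psi} \;=\; 1.
\]
Combined with $\QMCdProb(K_d) = \lambda_{\max}(H_{K_d}) \le 1$ (each $h_{ij}$ is a projector, so $H_{K_d} \preccurlyeq I$), this gives $\QMCdProb(K_d) = 1$.

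I do not expect any serious obstacle here; the only thing to be careful about is the bookkeeping in passing from ``antisymmetric under the global $S_d$ action'' to ``antisymmetric under the local swap of factors $i$ and $j$,'' which is exactly the statement that $P(\tau_{ij})$ acts as the swap on tensor legs $i,j$ and trivially elsewhere — this follows directly from the definition $P(\sigma) := \sum_{i_1,\dots,i_n} \ket{i_{\sigma^{-1}(1)}\cdots i_{\sigma^{-1}(n)}}\bra{i_1\cdots i_n}$ specialized to $\sigma = \tau_{ij}$. An alternative, equally short route avoiding $P(\sigma)$ entirely: write the Slater determinant $\ket{\Psi}$ and check that swapping the $i$-th and $j$-th single-qudit labels in each basis ket flips the overall sign, so the partial trace $\tr_{[d]\setminus\{i,j\}}\ket{\Psi}\bra{\Psi}$ is supported on the antisymmetric subspace of $\H_d^{\otimes 2}$, giving $\tr(h_{ij}\ket{\Psi}\bra{\Psi}) = 1$ directly.
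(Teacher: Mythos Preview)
Your proof is correct and uses the same witness state as the paper---the totally antisymmetric state $\ket{\Psi} = \frac{1}{\sqrt{d!}}\sum_{\sigma\in S_d}\sgn(\sigma)\ket{\sigma(1)\cdots\sigma(d)}$---but your verification that $\bra{\Psi}h_{ij}\ket{\Psi}=1$ is cleaner. The paper proceeds by explicitly rewriting $\ket{\Psi}$ as a sum over the alternating group $A_d$ in which the $(i,j)$ tensor factors are grouped into the two-qudit singlets $\ket{\psi_{ab}} = \tfrac{1}{\sqrt2}(\ket{ab}-\ket{ba})$, via a bijection $f:\sigma\mapsto(\sigma(i)\ \sigma(j))\circ\sigma$, and then evaluates $h_{ij}$ on each summand. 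You instead invoke the defining antisymmetry $P(\tau_{ij})\ket{\Psi}=-\ket{\Psi}$ directly and use $h_{ij}=I-(P_{\text{sym}})_{ij}$ together with $(P_{\text{sym}})_{ij}=\tfrac12(I+P(\tau_{ij}))$, which kills the state in one line. Your route is shorter and more conceptual; the paper's route has the minor advantage of making explicit how $\ket{\Psi}$ decomposes edge-by-edge into the image of $h_{ij}$, which could be useful elsewhere but is not needed for the lemma itself.
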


Next, we prove \cref{thm_alg_gap} by looking at the SDP vectors. We note that in \cref{thm_1}, we already showed that the algorithmic gap is at least \(\alpha_d\). It then suffices to give an instance, \(G\), where our rounding algorithm outputs a solution with value \(\alpha_d \text{OPT}(G)\) in expectation. We show that the compete graph on \(d\) vertices, \(K_d\), is such an instance.

\begin{proof}
    Consider the graph \(K_d\) and it's corresponding Hamiltonian \(H_{K_d}\). Then let \(M\) be an optimal SDP solution with corresponding vector program solution \(\{\ket{A}\ |\ A \in \mathcal{P}_d^d(1)\}\) to \cref{basicvecprog_def}.

    We can observe that the SDP, being a relaxation, must get a value at least \(\text{SDP}_{\text{QMC}_d}(K_d) \geq 1\) (here, we use \cref{lemma_d_clique_energy}). And, by \cref{prop_sdp_trivial_up_bound}, we know that \(\text{SDP}_{\text{QMC}_d}(K_d) \leq 1\). Moreover, this is true for each edge interaction as observed in the proof of \cref{prop_sdp_trivial_up_bound}. All in all, this means that our SDP solution for \(H_{K_d}\) has that \(\braket{u|v} = -\frac{1}{d-1}\) for each \(u < v \in V\). This is exactly the bad angle, and thus, plugging it into \cref{def_alg_gap}, we get the following, where the denominator is 1, by \cref{lemma_d_clique_energy}.
    \begin{align*}
        \text{Gap}_A(K_d) &=\E_{Z} \left[\frac{1}{2}\left(\frac{d-1}{d}\right) \E_{uv \sim E} \left(1-\frac{\braket{\vec{b}_u, \vec{b}_v}}{(d-1)^2}\right)\right] \\
        &= \frac{1}{2}\left(\frac{d-1}{d}\right) \left(1-\frac{F^*(d^2-1, -\frac{1}{d-1})}{(d-1)^2}\right) \\
        &= \alpha_d
    \end{align*}
    The last equality follows from \cref{lemma_bad_angle} for \(d \geq 3\).
\end{proof}

\section{Conclusion and Future Directions}\label{section_conclusion}

In this paper, we look at an approximation algorithm for \textQMCdProb and prove it beats random assignment. In particular, our algorithm finds a mixed product state solution. Moreover, we show that our analysis is tight by finding an algorithmic gap instance for our algorithm. However, our algorithm is not optimal as is evedent by the Frieze-Jerrum Algorithm \cite{frieze_improved_1997}. We believe that rounding directly to pure product state solutions using a clever choice for a frame could produce a better algorithm. Nonetheless, our paper makes progress on approximating LHPs over qudits. We give the following open problems.

Does there exist a pure product state rounding algorithm that achieves a better approximation ratio than ours? Gharibian and Kempe \cite{gharibian_approximation_2012} as well as Brand\~ao and Harrow \cite{brandao_product-state_2016}, for example, looked at a rounding algorithm to pure product state solution for general LHPs over qudits, but required assumptions for their graphs. Does there exist algorithms that do not require these assumptions?

Additionally, after the Gharibian-Parekh algorithm, many subsequent papers considered rounding to entangled states \cite{anshu_beyond_2020,parekh_application_2021,king_improved_2022,lee_optimizing_2022,parekh_optimal_2022} and achieved approximation ratios better than the optimal product state ratio of \(1/2\). Can similar algorithms using higher levels of the ncSoS hierarchy be used for \textQMCdProb. The major road block for this is in establishing something analogous to the ``star bound" \cite{parekh_application_2021} for qudits. This may prove to be challenging as odd degree terms can not be in general ignored. However, with recent work looking at the algebra of swap operators to solve \textQMCProb and its relaxation \cite{takahashi_su2-symmetric_2023,watts_relaxations_2023}, the question arises if similar results can be achieved for \textQMCdProb by looking at polynomial optimization in the qudit swap operators.

In the analysis of classical approximation algorithms for CSPs/GCSPs, one metric that is used to show the optimality of an SDP-based rounding algorithm is the integrality gap. If the integrality gap matches the approximation ratio, we say that the rounding algorithm is optimal for the SDP. Work done by Hwang, Neeman, Parekh, Thompson, and Wright \cite{hwang_unique_2022} showed that the integrality gap of the 2nd level ncSoS SDP for \textQMCProb, assuming a plausible conjecture in Gaussian geometry, matches the approximation ratio of the Gharibian-Parekh algorithm. It is natural to ask if we can show an integrity gap for the 2nd level ncSoS SDP that also enforces two-body moments to be valid density matrices for the problem of \textQMCdProb or even \textQMCProb.

\section*{Acknowledgements}

This work was done in part while ZJ was visiting the Simons Institute for the Theory of Computing. Additionally, the authors thank John Wright for the helpful discussions and ZJ thanks Ian Jorquera.

\printbibliography

\appendix

\section{Proofs of \texorpdfstring{\cref{prop_Psym_is_SUd_heis_plus_indent_term,QMCk_edge_interaction_gell-mann_prop}}{Propositions \ref{prop_Psym_is_SUd_heis_plus_indent_term} and \ref{QMCk_edge_interaction_gell-mann_prop}}}\label{appendix_heis_model_and_QMCd}

\begin{remark}
    In this section, we use commas in the bra-ket notation of multiple qudits to aid in readability. Namely, \(\ket{a,b} := \ket{a} \otimes \ket{b}\).
\end{remark}

\begin{proposition}[Restatement of \cref{QMCk_edge_interaction_gell-mann_prop}]\label{QMCk_edge_interaction_gell-mann_prop_appendix}
The \textQMCdProb edge interaction can be written in terms of the generalized Gell-Mann matrices in the following way.
\begin{equation}\label{QMCk_edge_interaction_gell-mann_eq_restated}
    \sum_{a< b \in [d]} \left(\frac{1}{\sqrt{2}}\ket{a,b} - \frac{1}{\sqrt{2}}\ket{b,a}\right)\left(\frac{1}{\sqrt{2}}\bra{a,b} - \frac{1}{\sqrt{2}}\bra{b,a}\right) = \frac{1}{2}\left(\frac{d-1}{d}\right) I - \frac{1}{4}\sum_{a=1}^{d^2-1}\Lambda^a \otimes \Lambda^a
\end{equation}
\end{proposition}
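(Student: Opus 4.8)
The plan is to reduce the antisymmetric projector to the swap operator and then expand that operator in the generalized Gell-Mann basis; once this is done the identity is a one-line scalar simplification. First I would multiply out the left-hand side of \eqref{QMCk_edge_interaction_gell-mann_eq_restated}: writing $h=\frac12\sum_{a<b}(\ket{a,b}-\ket{b,a})(\bra{a,b}-\bra{b,a})$ and expanding the product gives the ``diagonal'' terms $\ket{a,b}\bra{a,b}+\ket{b,a}\bra{b,a}$ and the ``cross'' terms $-\ket{a,b}\bra{b,a}-\ket{b,a}\bra{a,b}$. Summing the diagonal terms over $a<b$ produces $I-\sum_a\ket{a,a}\bra{a,a}$, and summing the cross terms produces $-(\mathrm{SWAP}-\sum_a\ket{a,a}\bra{a,a})$, where $\mathrm{SWAP}=\sum_{i,j}\ket{i,j}\bra{j,i}$; the two copies of $\sum_a\ket{a,a}\bra{a,a}$ cancel, leaving the standard fact $h=\tfrac12(I-\mathrm{SWAP})$.

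The second step is to expand $\mathrm{SWAP}$ in the orthogonal basis $\{\Lambda^a\otimes\Lambda^b\}_{a,b=0}^{d^2-1}$ of $\mathcal{L}(\C^d)^{\otimes 2}$, using $\Lambda^0=\sqrt{2/d}\,I$ and $\tr(\Lambda^a\Lambda^b)=2\delta_{ab}$ for all $a,b\in\{0,\dots,d^2-1\}$. Since $\tr((\Lambda^a\otimes\Lambda^b)(\Lambda^c\otimes\Lambda^d))=4\delta_{ac}\delta_{bd}$, the coefficient of $\Lambda^a\otimes\Lambda^b$ in $\mathrm{SWAP}$ is $\tfrac14\tr(\mathrm{SWAP}\,(\Lambda^a\otimes\Lambda^b))$, and the elementary identity $\tr(\mathrm{SWAP}\,(A\otimes B))=\tr(AB)$ shows this equals $\tfrac14\tr(\Lambda^a\Lambda^b)=\tfrac12\delta_{ab}$. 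Hence $\mathrm{SWAP}=\tfrac12\sum_{a=0}^{d^2-1}\Lambda^a\otimes\Lambda^a=\tfrac1d\,I+\tfrac12\sum_{a=1}^{d^2-1}\Lambda^a\otimes\Lambda^a$. Plugging this into $h=\tfrac12(I-\mathrm{SWAP})$ and collecting the scalar multiples of $I$ gives $h=\tfrac12(1-\tfrac1d)I-\tfrac14\sum_{a=1}^{d^2-1}\Lambda^a\otimes\Lambda^a$, which is exactly \eqref{QMCk_edge_interaction_gell-mann_eq_restated}; reading the same computation as $P_{\mathrm{sym}}=I-h$ simultaneously yields \cref{prop_Psym_is_SUd_heis_plus_indent_term}. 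Alternatively, if \cref{prop_Psym_is_SUd_heis_plus_indent_term} is assumed, one simply writes $h=I-P_{\mathrm{sym}}=I-\tfrac12(\tfrac{d+1}{d})I-\tfrac14\sum_a\Lambda^a\otimes\Lambda^a$ and simplifies $1-\tfrac{d+1}{2d}=\tfrac{d-1}{2d}$.

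I do not expect a genuine obstacle here. The only points needing a little care are the index bookkeeping in the first step — checking that the $a<b$ sums reassemble correctly into $I$ and $\mathrm{SWAP}$ and that the diagonal corrections cancel — and citing the two standard linear-algebra facts used in the second step, namely that $\{\Lambda^a\}_{a=0}^{d^2-1}$ is an orthogonal basis with the normalization $\tr(\Lambda^a\Lambda^b)=2\delta_{ab}$ (recorded in \cref{section_mat_basis}) and the swap/partial-trace identity $\tr(\mathrm{SWAP}\,(A\otimes B))=\tr(AB)$. Both are immediate, so the argument is essentially the short computation above.
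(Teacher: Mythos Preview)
Your argument is correct, but it takes a genuinely different route from the paper. The paper proves the identity by a direct entrywise comparison: it expands the right-hand side by splitting the Gell-Mann sum into the symmetric, antisymmetric, and diagonal families, shows that $\Lambda^+_{ab}\otimes\Lambda^+_{ab}+\Lambda^-_{ab}\otimes\Lambda^-_{ab}=2\ket{a,b}\bra{b,a}+2\ket{b,a}\bra{a,b}$ to match the off-diagonal part, and then carries out a somewhat lengthy telescoping computation with the diagonal matrices $\Lambda^d_a\otimes\Lambda^d_a$ to match the diagonal part. Your approach instead passes through the swap operator: you first reduce $h=\tfrac12(I-\mathrm{SWAP})$ and then expand $\mathrm{SWAP}$ in one shot using the orthogonality $\tr(\Lambda^a\Lambda^b)=2\delta_{ab}$ together with $\tr(\mathrm{SWAP}\,(A\otimes B))=\tr(AB)$. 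This is considerably shorter and more conceptual, avoids any casework on the three Gell-Mann types, and makes clear why the statement is basis-independent (any orthonormal traceless Hermitian basis would do); the paper's version, while longer, has the minor advantage of being completely explicit and not invoking the swap/partial-trace identity. One caveat on your closing remark: in the paper's logical order \cref{prop_Psym_is_SUd_heis_plus_indent_term} is deduced \emph{from} this proposition, so your ``alternative'' that assumes it would be circular there --- but your main argument does not rely on it, so this is harmless.
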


\begin{proof}
We refer to the lefthand side of \eqref{QMCk_edge_interaction_gell-mann_eq_restated} as $h$ and the righthand side as $g$. First, expand $h$:

\begin{equation}
    h = \sum_{a< b \in [d]} \left(\frac{1}{\sqrt{2}}\ket{a,b} - \frac{1}{\sqrt{2}}\ket{b,a}\right)\left(\frac{1}{\sqrt{2}}\bra{a,b} - \frac{1}{\sqrt{2}}\bra{b,a}\right) = \frac{1}{2}\sum_{a \neq b \in [d]} \ket{a,b}\bra{a,b} - \frac{1}{2}\sum_{a \neq b \in [d]} \ket{a,b}\bra{b,a} \label{prop2_h_expanded}
\end{equation}

\noindent The first summation says that the non-zero elements on the diagonal of $h$ are all $+1/2$ and that the non-zero elements on the off-diagonal are all $-1/2$. We will show that the coefficients of $g$ match these.

Next, re-write $g$ with respect to the different types of Gell-Mann matrices (see definition \cref{gen_gell-mann_def}):
\begin{equation}
    g =  \frac{1}{2}\left(\frac{d-1}{d}\right) I - \frac{1}{4}\sum_{a < b \in [d]}^{d^2-1} \left(\Lambda^+_{ab} \otimes \Lambda^+_{ab} +\Lambda^-_{ab} \otimes \Lambda^-_{ab}\right)- \frac{1}{4}\sum_{a = 1}^{d-1}\Lambda^d_a \otimes \Lambda^d_a
\end{equation}
Since the $\Lambda^d_a \otimes \Lambda^d_a$ terms are all diagonal, the middle summation is composed of the off-diagonal elements of $g$. For $a < b \in [d]$,

    \begin{align}
        \Lambda_{ab}^+ \otimes \Lambda_{ab}^+ + \Lambda_{ab}^- \otimes \Lambda_{ab}^- &= (\ket{a}\bra{b}+\ket{b}\bra{a}) \otimes (\ket{a}\bra{b}+\ket{b}\bra{a}) + -i(\ket{a}\bra{b}-\ket{b}\bra{a}) \otimes -i(\ket{a}\bra{b}-\ket{b}\bra{a})\nonumber 
        \\ &=2 \ket{a,b}\bra{b,a} + 2\ket{b,a}\bra{a,b} \\
        \Longrightarrow g &= \frac{1}{2}\left(\frac{d-1}{d}\right) I - \frac{1}{2}\sum_{a \neq b \in [d]}^{d^2-1} \ket{a,b}\bra{b,a} - \frac{1}{4}\sum_{a = 1}^{d-1}\Lambda^d_a \otimes \Lambda^d_a \label{prop2eq1}
    \end{align}

    \noindent \eqref{prop2eq1} shows that the off-diagonal elements of $g$ have coefficient $-1/2$, matching the terms in \eqref{prop2_h_expanded}. We now show that the non-zero diagonal elements have coefficient $+1/2$. 
    
    For simplicity, let $g' := \frac{1}{2}\left(\frac{d-1}{d}\right) I - \frac{1}{4}\sum_{a = 1}^{d-1}\Lambda^d_a \otimes \Lambda^d_a$ be the operator consisting of the diagonal elements of $g$. For $a \in [d-1]$:

    \begin{align}
        \Lambda^d_a \otimes \Lambda^d_a &= \sqrt{\frac{2}{a(a+1)}}\left( \sum_{b = 1}^a \ket{b} \bra{b} - a \ket{a+1}\bra{a+1} \right) \otimes \sqrt{\frac{2}{a(a+1)}}\left( \sum_{b' = 1}^a \ket{b'} \bra{b'} - a \ket{a+1}\bra{a+1} \right) \nonumber \\
        \begin{split}\label{prop2eq2}
            &= \frac{2}{a(a+1)}\sum_{b,b' = 1}^a \ket{b,b'}\bra{b,b'} -\frac{2}{a+1}\left(\sum_{b=1}^a\ket{a+1, b} \bra{a+1,b} + \ket{b,a+1} \bra{b,a+1} \right) \\
            &\ \ \ \ \ \ \ \ + \frac{2a}{a+1}\ket{a+1,a+1}\bra{a+1,a+1}
        \end{split}
    \end{align}

    \noindent Working with the first summation:

    \begin{align}
        \sum_{a=1}^{d-1}\frac{2}{a(a+1)}&\sum_{b,b' = 1}^a \ket{b,b'}\bra{b,b'} \nonumber\\
        = &\ \frac{2}{1 \cdot 2}\ket{11}\bra{11} + \frac{2}{2 \cdot 3} \left(\ket{11}\bra{11} + \ket{12}\bra{12} + \ket{21}\bra{21} + \ket{22}\bra{22}\right) + \cdots + \frac{2}{d(d-1)}\sum_{b,b'=1}^{d-1}\ket{b,b'}\bra{b,b'} \nonumber \\
        =&\ \left(\frac{2}{1\cdot 2} + \frac{2}{2 \cdot 3} + \cdots + \frac{2}{d(d-1)} \right)\ket{11}\bra{11} \nonumber \\
        &\ + \left(\frac{2}{2 \cdot 3} + \frac{2}{3 \cdot 4} \cdots + \frac{2}{d(d-1)} \right) (\ket{22}\bra{22}+\ket{12}\bra{12} + \ket{21}\bra{21}) \nonumber \\
        &\ + \left(\frac{2}{3 \cdot 4} + \frac{2}{4 \cdot 5} \cdots + \frac{2}{d(d-1)} \right) (\ket{33}\bra{33}+\ket{13}\bra{13} + \ket{23}\bra{23} +\ket{31}\bra{31} + \ket{32}\bra{32}) \nonumber \\
        &\quad \vdots \nonumber \\
        &\ + \frac{2}{d(d-1)}\left(\ket{d-1,d-1}\bra{d-1,d-1} + \sum_{j = 1}^{d-2} \ket{j,d-1}\bra{j,d-1} + \ket{d-1,j}\bra{d-1,j} \right) \nonumber \\
        =&\ 2\sum_{b,b' = 1}^{d-1} \left(\sum_{c=\max(b,b')}^{d-1}\frac{1}{c(c+1)} \right) \ket{b,b'}\bra{b,b'} \nonumber\\
        =&\ 2\sum_{b,b' = 1}^{d-1} \left( \frac{1}{\max(b,b')} - \frac1{d} \right) \ket{b,b'}\bra{b,b'} \label{prop2eq3}
    \end{align}

    \noindent Plugging in \eqref{prop2eq2} and \eqref{prop2eq3} into $g'$ and recalling that $I = \sum_{b,b' = 1}^d \ket{b,b'}\bra{b,b'}$:

    \begin{align}
        g' &= \frac{1}{2}\left(\frac{d-1}{d}\right)\sum_{b,b' = 1}^d \ket{b,b'}\bra{b,b'} -\frac{1}{2}\sum_{b,b' = 1}^{d} \left( \frac{1}{\max(b,b')} - \frac1{d} \right) \ket{b,b'}\bra{b,b'} \nonumber\\
        &\ \ \ \ +\sum_{a=1}^{d-1}\frac{1}{2(a+1)}\left(\sum_{b=1}^a\ket{a+1, b} \bra{a+1,b} + \ket{b,a+1} \bra{b,a+1} \right) - \sum_{a=1}^{d-1}\frac{a}{2(a+1)}\ket{a+1,a+1}\bra{a+1,a+1}\nonumber\\
        &= \sum_{b,b' = 1}^d \frac{1}{2}\left(1-\frac{1}{\max(b,b')}\right)\ket{b,b'}\bra{b,b'} +\sum_{a=2}^{d}\left[\frac{1}{2a}\left(\sum_{b=1}^{a-1}\ket{a, b} \bra{a,b} + \ket{b,a} \bra{b,a} \right) - \frac{a-1}{2a}\ket{a,a}\bra{a,a} \right]\label{prop2eq4}
    \end{align}

    \noindent where the second summation re-indexes to starting at $a=2$. We can then rearrange \eqref{prop2eq4} into its diagonal and off-diagonal terms and simplify. 
    \begin{align}
    \begin{split}
        g' =&\sum_{b = 1}^d \frac{1}{2}\left(1-\frac{1}{b}\right)\ket{b,b}\bra{b,b} - \sum_{a=2}^{d}\frac{a-1}{2a}\ket{a,a}\bra{a,a} \\
        &\ \ \ \ + \sum_{b < a \in [d]} \frac{1}{2}\left(1-\frac{1}{a}\right)\left(\ket{a,b}\bra{a,b} + \ket{b,a}\bra{b,a}\right) + \sum_{b<a\in[d]} \frac{1}{2a}\left(\ket{a, b} \bra{a,b} + \ket{b,a} \bra{b,a} \right)
    \end{split} \nonumber\\
        =&\sum_{b = 1}^d \left(\frac{1}{2}\left(1-\frac{1}{b}\right) - \frac{1}{2}\left(1-\frac{1}{b}\right)\right)\ket{b,b}\bra{b,b}  + \sum_{b < a \in [d]} \frac{1}{2}\left(1-\frac{1}{a} + \frac{1}{a}\right)\left(\ket{a,b}\bra{a,b} + \ket{b,a}\bra{b,a}\right) \nonumber \\
        =& \frac{1}{2} \sum_{a \neq b \in [d]} \ket{a,b}\bra{a,b}\label{eq60}
    \end{align}

    Plugging this back into \eqref{prop2eq1} finishes the proof.
\end{proof}

In this proof, we considered a Hamiltonian, which we denoted by $g' := \frac{1}{2}\left(\frac{d-1}{d}\right) I - \frac{1}{4}\sum_{a = 1}^{d-1}\Lambda^d_a \otimes \Lambda^d_a$. This has special significance beyond this proof. Namely, we can give the following Proposition.

\begin{proposition}[The \textMCdProb Edge Interaction]
    The \textMCdProb edge interaction can be written in terms of the diagonal generalized Gell-Mann matrices in the following way. 

    \begin{equation}
        h^{\text{MC}_d} = \sum_{a\neq b \in [d]} \ket{a,b}\bra{a,b} = \left(\frac{d-1}{d}\right) I - \frac{1}{2}\sum_{a = 1}^{d-1}\Lambda^d_a \otimes \Lambda^d_a
    \end{equation}
\end{proposition}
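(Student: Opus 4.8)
The plan is to recognize that this statement is, up to an overall factor of two, precisely the diagonal computation already carried out inside the proof of \cref{QMCk_edge_interaction_gell-mann_prop_appendix}. There we introduced the operator $g' := \frac{1}{2}\left(\frac{d-1}{d}\right) I - \frac{1}{4}\sum_{a = 1}^{d-1}\Lambda^d_a \otimes \Lambda^d_a$ and proved in \eqref{eq60} that $g' = \frac{1}{2}\sum_{a \neq b \in [d]} \ket{a,b}\bra{a,b}$. Since $\sum_{a\neq b \in [d]} \ket{a,b}\bra{a,b}$ is by definition the \textMCdProb edge interaction — it is the operator that acts as $1$ on a standard-basis pair $\ket{a,b}$ exactly when $a \neq b$, i.e.\ the indicator of $\tau(u)\neq\tau(v)$ — the claim follows immediately: multiplying both sides of \eqref{eq60} by $2$ gives
\[
h^{\text{MC}_d} = \sum_{a\neq b \in [d]} \ket{a,b}\bra{a,b} = 2 g' = \left(\frac{d-1}{d}\right) I - \frac{1}{2}\sum_{a = 1}^{d-1}\Lambda^d_a \otimes \Lambda^d_a .
\]

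If instead a self-contained derivation is wanted, I would reprove it directly by mirroring the diagonal part of the proof of \cref{QMCk_edge_interaction_gell-mann_prop_appendix} with every coefficient doubled. Concretely: expand each $\Lambda^d_a \otimes \Lambda^d_a$ as in \eqref{prop2eq2}, splitting it into the block $\sum_{b,b'\le a}\ket{b,b'}\bra{b,b'}$, the cross terms involving the index $a+1$, and the term $\ket{a+1,a+1}\bra{a+1,a+1}$; sum over $a \in [d-1]$ and collapse the first block using the telescoping identity $\sum_{c=\max(b,b')}^{d-1}\tfrac{1}{c(c+1)} = \tfrac{1}{\max(b,b')} - \tfrac1d$ as in \eqref{prop2eq3}; then combine with $\left(\tfrac{d-1}{d}\right) I = \left(\tfrac{d-1}{d}\right)\sum_{b,b'=1}^{d}\ket{b,b'}\bra{b,b'}$ and reindex exactly as in \eqref{prop2eq4}. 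The diagonal coefficients cancel (each $\ket{b,b}\bra{b,b}$ picks up coefficient $\tfrac12(1-\tfrac1b) - \tfrac12(1-\tfrac1b) = 0$ after doubling, these being twice the cancelling terms of \eqref{eq60}) and each off-diagonal $\ket{a,b}\bra{a,b}$ with $a \neq b$ collects coefficient $1$, yielding $\sum_{a\neq b\in[d]}\ket{a,b}\bra{a,b}$.

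There is no real obstacle here: the entire combinatorial content — the telescoping sum and the cancellation of the diagonal terms — was already established in the proof of \cref{QMCk_edge_interaction_gell-mann_prop_appendix}. The only things to check are the bookkeeping of the factor of two relating $h^{\text{MC}_d}$ to $g'$, and the (immediate) observation that $\sum_{a\neq b}\ket{a,b}\bra{a,b}$ is the correct operator to call the \textMCdProb edge interaction. For this reason I would present the proof in the short form above, simply citing \eqref{eq60}.
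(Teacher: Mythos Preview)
Your proposal is correct and matches the paper's own proof exactly: the paper simply states that the result follows from \eqref{eq60} in the proof of \cref{QMCk_edge_interaction_gell-mann_prop}, which is precisely your short argument of multiplying $g'$ by two.
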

\begin{proof}
    This follows from \eqref{eq60} in the proof of \cref{QMCk_edge_interaction_gell-mann_prop}.
\end{proof}

Additionally, \cref{prop_Psym_is_SUd_heis_plus_indent_term} follows as a result of \cref{QMCk_edge_interaction_gell-mann_prop}. 

\begin{proposition}[Restatement of \cref{prop_Psym_is_SUd_heis_plus_indent_term}]
    For \(P_{\text{sym}}\) being the orthogonal projector onto the symmetric subspace of \(\H_d^{\otimes 2}\), we have that \(\frac{1}{2}\left(\frac{d+1}{d}\right)I + h^{\text{Heis}_d} = P_{\text{sym}}\).
\end{proposition}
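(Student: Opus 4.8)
The plan is to deduce this proposition directly from \cref{QMCk_edge_interaction_gell-mann_prop}, which has just been established, together with the elementary fact recorded in \cref{section_prelims} that for $n=2$ the subspaces $\vee^2\H_d$ and $\wedge^2\H_d$ are orthogonal complements of one another.

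First I would observe that since $\vee^2\H_d$ and $\wedge^2\H_d$ are orthogonal complements inside $\H_d^{\otimes 2}$, their orthogonal projectors sum to the identity: $P_{\text{sym}} + P_{\text{antisym}} = I$. By \cref{QMCd_edge_interaction_def} (and the remark immediately following it, that $h = I - P_{\text{sym}}$), the \textQMCdProb edge interaction $h$ is precisely $P_{\text{antisym}}$, so $P_{\text{sym}} = I - h$. Next I would substitute the Gell-Mann expansion of $h$ supplied by \cref{QMCk_edge_interaction_gell-mann_prop}, namely $h = \frac{1}{2}\left(\frac{d-1}{d}\right)I - \frac{1}{4}\sum_{a=1}^{d^2-1}\Lambda^a\otimes\Lambda^a$, and collect the identity terms:
\[
P_{\text{sym}} = I - h = \left(1 - \frac{d-1}{2d}\right)I + \frac{1}{4}\sum_{a=1}^{d^2-1}\Lambda^a\otimes\Lambda^a = \frac{1}{2}\left(\frac{d+1}{d}\right)I + h^{\text{Heis}_d},
\]
where in the last step I used $1 - \frac{d-1}{2d} = \frac{2d-(d-1)}{2d} = \frac{d+1}{2d}$ and the definition $h^{\text{Heis}_d} = \frac{1}{4}\sum_{a=1}^{d^2-1}\Lambda^a\otimes\Lambda^a$. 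Rearranging yields $\frac{1}{2}\left(\frac{d+1}{d}\right)I + h^{\text{Heis}_d} = P_{\text{sym}}$, as claimed.

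There is essentially no obstacle to overcome here: all of the computational content lives in \cref{QMCk_edge_interaction_gell-mann_prop}, whose proof we have already carried out by explicitly expanding the symmetric, antisymmetric, and diagonal generalized Gell-Mann matrices; the present statement is a one-line rearrangement once that identity and the orthogonal-complement relation are in hand. (If one wished to avoid even invoking the complement relation, one could instead cite directly the characterization of $h$ as the projector onto the complement of $\vee^2\H_d$, i.e. $P_{\text{sym}} = I - h$, which is exactly how $h$ was described right after \cref{QMCd_edge_interaction_def}.)
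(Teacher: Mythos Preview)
Your proposal is correct and follows essentially the same approach as the paper: invoke $P_{\text{sym}} = I - h$ from the orthogonal-complement relation between $\vee^2\H_d$ and $\wedge^2\H_d$, then substitute the Gell-Mann expansion of $h$ from \cref{QMCk_edge_interaction_gell-mann_prop} and simplify the identity coefficient.
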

\begin{proof}
    Because the symmetric and the antisymmetric subspaces of \(\H_d^{\otimes 2}\) are orthogonal complements of each other, we have that \(P_{\text{sym}} = I - h\) (where \(h\) is defined in \cref{QMCd_edge_interaction_def}). Then, using \cref{QMCk_edge_interaction_gell-mann_prop}, we have that \(P_{\text{sym}} = \frac{1}{2}\left(\frac{d+1}{d}\right)I + \frac{1}{4}\sum_{a = 1}^{d-1}\Lambda^d_a \otimes \Lambda^d_a = \frac{1}{2}\left(\frac{d+1}{d}\right)I + h^{\text{Heis}_d}\).
\end{proof}

\section{Proofs for Bloch Vectors}\label{appendix_bloch_vecs}

We again consider the decomposition of density matrices into the generalized Gell-Mann (GGM) matrices given by
\begin{equation}\label{basic_bloch_vec_decomp_appendix}
    \rho = \frac{1}{d}I + \vec{b} \cdot \vec{\Lambda}
\end{equation}
We will also make use of the decomposition where pure states correspond to unit Bloch vectors. This is given by
\begin{equation}\label{rescaled1_bloch_vec_decomp_appendix}
    \rho = \frac{1}{d}I + \sqrt{\frac{d-1}{2d}} \vec{b} \cdot \vec{\Lambda}
\end{equation}
We note that both of these can be seen as shifting the origin to \(\rho^* = \frac{1}{d}I\), the maximally mixed state, and a change of basis, resulting in what we refer to as a Bloch vector. We note that the entries of \(\vec{b}\) are often called the mixture coordinates. 

We can then prove the propositions given in \cref{prod_state_section}. In doing this, it is useful to have a relation between the Hilbert-Schmidt inner product between two density matrices and the inner product between their Bloch vectors. 

\begin{proposition}\label{prop_tr_rho_rho_and_bloch_vec_inner_prod}
    For two density matrices \(\rho\) and \(\rho'\) with corresponding Bloch vectors \(\vec{b}\) and \(\vec{b}'\) as defined in \eqref{basic_bloch_vec_decomp_appendix}. We have the following relation.
    \begin{equation}
        \tr(\rho \rho') = \frac{1}{d} + 2 \braket{\vec{b}, \vec{b}'}
    \end{equation}
\end{proposition}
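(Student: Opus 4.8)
The plan is a direct expansion using only the two defining properties of the matrix basis: that each $\Lambda^a$ is traceless and that $\tr(\Lambda^a \Lambda^b) = 2\delta_{ab}$, both recorded after \cref{gen_gell-mann_def}. First I would substitute the Bloch decompositions \eqref{basic_bloch_vec_decomp_appendix} for $\rho$ and $\rho'$ and multiply out the product $\rho\rho'$ into four groups of terms: the $\tfrac{1}{d^2}I$ term, the two cross terms of the form $\tfrac{1}{d}\sum_a b_a \Lambda^a$ and $\tfrac{1}{d}\sum_a b'_a \Lambda^a$, and the bilinear term $\sum_{a,b} b_a b'_b\, \Lambda^a \Lambda^b$.

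Next I would take the trace term by term. The first contributes $\tfrac{1}{d^2}\tr(I) = \tfrac{1}{d}$. Both cross terms vanish because $\tr(\Lambda^a) = 0$ for every $a$. The bilinear term gives $\sum_{a,b} b_a b'_b \tr(\Lambda^a \Lambda^b) = 2\sum_{a,b} b_a b'_b\, \delta_{ab} = 2\sum_a b_a b'_a = 2\braket{\vec{b},\vec{b}'}$. Summing these yields $\tr(\rho\rho') = \tfrac{1}{d} + 2\braket{\vec{b},\vec{b}'}$, as claimed.

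There is no real obstacle here; the statement is a one-line computation once linearity of the trace and the orthogonality relation $\tr(\Lambda^a\Lambda^b) = 2\delta_{ab}$ are invoked. The only point worth a remark is that the argument uses \emph{only} $\tr(\Lambda^a\Lambda^b) = 2\delta_{ab}$ and tracelessness, not any unitarity or the commutator/anticommutator structure constants, so it applies verbatim to any basis $\{\Gamma^a\}$ satisfying $\tr(\Gamma^a\Gamma^b) = 2\delta_{ab}$; in particular it is the computation underlying the energy expressions derived in \cref{prop_pureprod_leq_1/2} and \cref{prop_rewite_mixed_state_and_upper_bound}, and the rescaled versions \eqref{rescaled1_bloch_vec_decomp_appendix} follow by absorbing the scalar factors.
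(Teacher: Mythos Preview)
Your proposal is correct and is essentially identical to the paper's own proof: substitute the Bloch decompositions, expand the product into the identity, cross, and bilinear pieces, and apply tracelessness together with $\tr(\Lambda^a\Lambda^b)=2\delta_{ab}$. Your additional remark that only these two basis properties are used is accurate but not stated explicitly in the paper.
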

\begin{proof}
    \begin{align*}
        \tr(\rho \rho') &= \tr\left(\left(\frac{1}{d}I + \vec{b} \cdot \vec{\Lambda}\right)\left(\frac{1}{d}I + \vec{b}' \cdot \vec{\Lambda}\right)\right) \\
        &=  \tr\left(\frac{1}{d^2}I + \frac{1}{d}\vec{b} \cdot \vec{\Lambda} + \frac{1}{d} \vec{b}' \cdot \vec{\Lambda} + \left(\vec{b} \cdot \vec{\Lambda}\right) \left(\vec{b}' \cdot \vec{\Lambda}\right)\right) \\
        &= \frac{1}{d} + \sum_{a,b = 1}^{d^2-1} b_a b'_b \tr\left(\Lambda^a \Lambda^b\right) \\
        &= \frac{1}{d} + 2 \braket{\vec{b}, \vec{b}'}
    \end{align*}
\end{proof}

Moreover, this shows how this decomposition into a Bloch vector is also an isometry from the Hilbert-Schmidt space of hermitian matrices, \(\mathcal{HM}_d\) to \(\R^{d^2-1}\), the space that contains \(\Omega_d\). Namely,
\[D^2_2(\rho, \rho') := \frac{1}{2}\tr\left((\rho - \rho')^2\right) = \braket{\vec{b}-\vec{b}',\vec{b}-\vec{b}'} = \|\vec{b}-\vec{b}'\|^2_2\]

Next, we prove the following Propositions from \cref{prod_state_section}.

\begin{proposition}[Restatement of \cref{outsphere_prop}] Given a density matrix \(\rho\) and it's corresponding Bloch vector \(\vec{b}\) as defined in \eqref{basic_bloch_vec_decomp_appendix}, we always have that \(\|\vec{b}\| \leq \sqrt{\frac{d-1}{2d}}\). Moreover, if \(\rho\) is a pure state, i.e., \(\rho^2 = \rho\), then \(\|\vec{b}\| = \sqrt{\frac{d-1}{2d}}\).
\end{proposition}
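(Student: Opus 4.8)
The plan is to express $\|\vec{b}\|^2$ in terms of the purity $\tr(\rho^2)$ and then invoke the standard bound on purity. Applying \cref{prop_tr_rho_rho_and_bloch_vec_inner_prod} with $\rho' = \rho$ immediately yields the identity $\tr(\rho^2) = \frac{1}{d} + 2\|\vec{b}\|^2$, so the norm of the Bloch vector is completely controlled by the purity.

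Next I would recall the elementary fact that $\tr(\rho^2) \le 1$ for every density matrix, with equality precisely when $\rho$ is pure. Writing $\rho$ in its spectral decomposition with eigenvalues $\lambda_1, \dots, \lambda_d \ge 0$ and $\sum_i \lambda_i = 1$, we have $\tr(\rho^2) = \sum_i \lambda_i^2 \le \left(\sum_i \lambda_i\right)^2 = 1$, and equality forces exactly one eigenvalue to be $1$ and the rest $0$, i.e. $\rho^2 = \rho$.

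Putting these together, $\frac{1}{d} + 2\|\vec{b}\|^2 = \tr(\rho^2) \le 1$, hence $\|\vec{b}\|^2 \le \frac{1}{2}\left(1 - \frac{1}{d}\right) = \frac{d-1}{2d}$, which is the claimed outsphere bound $\|\vec{b}\| \le \sqrt{\frac{d-1}{2d}}$. For the ``moreover'' clause, if $\rho$ is pure then $\tr(\rho^2) = 1$, so the inequality above is tight and $\|\vec{b}\| = \sqrt{\frac{d-1}{2d}}$. There is essentially no obstacle here: the only ingredient beyond \cref{prop_tr_rho_rho_and_bloch_vec_inner_prod} is the purity bound, and both the main inequality and the equality case follow in one line each. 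One could equally phrase the equality case as an ``if and only if'', since $\|\vec{b}\|$ attaining the bound forces $\tr(\rho^2) = 1$ and hence $\rho^2 = \rho$.
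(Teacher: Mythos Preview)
Your proposal is correct and follows essentially the same approach as the paper: both use \cref{prop_tr_rho_rho_and_bloch_vec_inner_prod} to relate $\|\vec b\|^2$ to $\tr(\rho^2)$, then bound the purity by $1$ via the eigenvalue expansion $\sum_i \lambda_i^2 \le (\sum_i \lambda_i)^2$, with equality in the pure case. The only cosmetic difference is the order in which the two ingredients are presented.
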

\begin{proof}
    First, we recall the definitions of a density matrix, \(\rho\). That being (i) \(\tr(\rho) = 1\), (ii) \(\rho^* = \rho\), and (iii) \(\rho \succcurlyeq 0\). These properties imply that \(\tr(\rho^2) \leq 1\). This follows from the fact that \(\tr(\rho^2)\) is nothing but the sum of the eigenvalues squared and that
    \[1^2 = \tr(\rho)^2 = \left(\sum_{i = 1}^d \lambda_i\right)^2 = \sum_{i = 1}^d \lambda_i^2 + 2 \sum_{i < j} \lambda_i \lambda_j \geq \sum_{i = 1}^d \lambda_i^2\]
    The last inequality follows because the eigenvalues are non-negative. Then, using \cref{prop_tr_rho_rho_and_bloch_vec_inner_prod}, we get that
    \begin{align*}
        1 &\geq \tr(\rho^2) = \frac{1}{d} + 2 \|\vec{b}\|^2\\
        \|\vec{b}\|^2 &\leq \frac{d-1}{2d}
    \end{align*}
    Finally, we note that if \(\rho\) is a pure state, then we have that \(\tr(\rho^2) = \tr(\rho) = 1\), and thus we get an equality.
\end{proof}

Before proving anything about the insphere. We prove some other proposition that will be useful in the proof.

\begin{proposition} [Restatement of \cref{bloch_vec_angle_prop}, \cite{jakobczyk_geometry_2001,byrd_characterization_2003}]\label{bloch_vec_angle_prop_appendix}
    For two pure states, \(\rho\) and \(\rho'\), with Bloch vectors \(\vec{b}\) and \(\vec{b}'\) as defined in \eqref{rescaled1_bloch_vec_decomp_appendix}, we have that \(\braket{\vec{b},\vec{b}'} \geq -\frac{1}{d-1}\). Furthermore, these pure states are orthogonal, i.e., \(\tr(\rho \rho') = 0\), if and only if \(\braket{\vec{b},\vec{b}'} = -\frac{1}{d-1}\).
\end{proposition}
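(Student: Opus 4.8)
The plan is to reduce everything to the Hilbert--Schmidt inner product \(\tr(\rho\rho')\) via the analog of \cref{prop_tr_rho_rho_and_bloch_vec_inner_prod} for the rescaled decomposition \eqref{rescaled1_bloch_vec_decomp_appendix}. First I would record that under the rescaling by \(\sqrt{\tfrac{2d}{d-1}}\) that relates \eqref{basic_bloch_vec_decomp_appendix} to \eqref{rescaled1_bloch_vec_decomp_appendix}, \cref{prop_tr_rho_rho_and_bloch_vec_inner_prod} becomes
\[
\tr(\rho\rho') = \frac{1}{d} + 2\cdot\frac{d-1}{2d}\braket{\vec b,\vec b'} = \frac{1}{d} + \frac{d-1}{d}\braket{\vec b,\vec b'},
\]
which can either be cited as a rescaled instance of that proposition or re-derived in one line using \(\tr(\Lambda^a\Lambda^b) = 2\delta_{ab}\). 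This identity holds for all density matrices, not just pure ones.

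Next I would use purity. Since \(\rho\) and \(\rho'\) are pure, write \(\rho = \ket{\psi}\bra{\psi}\) and \(\rho' = \ket{\phi}\bra{\phi}\); then \(\tr(\rho\rho') = |\braket{\psi,\phi}|^2 \ge 0\), with equality precisely when \(\ket{\psi}\) and \(\ket{\phi}\) are orthogonal, i.e.\ when \(\tr(\rho\rho') = 0\). (Alternatively, one may invoke the general fact that \(\tr(AB)\ge 0\) for PSD \(A,B\) with equality iff \(AB = 0\), but the pure-state computation is cleaner and directly gives the ``if and only if''.)

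Combining the two displays gives \(\tfrac{1}{d} + \tfrac{d-1}{d}\braket{\vec b,\vec b'} = \tr(\rho\rho') \ge 0\), hence \(\braket{\vec b,\vec b'} \ge -\tfrac{1}{d-1}\), and the inequality is tight exactly when \(\tr(\rho\rho') = 0\), i.e.\ when the states are orthogonal. There is no real obstacle here; the only point requiring a little care is bookkeeping the rescaling factor so that the purity bound \(\tr(\rho^2)=1\) (equivalently \(\|\vec b\| = 1\) in this normalization, by \cref{outsphere_prop}) is applied consistently, and noting that positivity of \(\tr(\rho\rho')\) is what supplies the lower bound rather than any property of the simplex of eigenvalues.
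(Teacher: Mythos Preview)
Your proposal is correct and is essentially the same argument as the paper's: both rescale \cref{prop_tr_rho_rho_and_bloch_vec_inner_prod} to obtain \(\tr(\rho\rho') = \tfrac{1}{d} + \tfrac{d-1}{d}\braket{\vec b,\vec b'}\) and then invoke \(\tr(\rho\rho')\ge 0\). The only cosmetic difference is that you extract nonnegativity (and the equality case) from the explicit pure-state formula \(\tr(\rho\rho')=|\braket{\psi,\phi}|^2\), whereas the paper just cites positivity of the Hilbert--Schmidt inner product on PSD matrices; your version makes the ``if and only if'' slightly more transparent.
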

\begin{proof} This follows from \cref{prop_tr_rho_rho_and_bloch_vec_inner_prod} and that \(\tr(\rho \rho') \geq 0\) for any PSD matrices \(\rho\) and \(\rho'\).
    \begin{align*}
        0 &\leq \tr(\rho \rho') = \frac{1}{d} + 2 \left<\sqrt{\frac{d-1}{2d}} \vec{b}, \sqrt{\frac{d-1}{2d}} \vec{b}'\right> \\
        -\frac{1}{d}& \leq \frac{d-1}{d} \braket{\vec{b}, \vec{b}'} \\
        \braket{\vec{b}, \vec{b}'} &\geq -\frac{1}{d-1}
    \end{align*}
    And when \(\tr(\rho \rho') = 0\), we get an equality, not an inequality.
\end{proof}

\begin{proposition}[Restatement of \cref{prop_tr_rho2_vs_bloch_vec}]\label{prop_tr_rho2_vs_bloch_vec_appendix}
    For a density matrix, \(\rho\), with corresponding Bloch vector, \(\vec{b}\), as defined in \eqref{basic_bloch_vec_decomp_appendix}, we have that \(\tr(\rho^2) = \frac{1}{d-1}\) if and only if \(\|\vec{b}\| = \frac{1}{\sqrt{2d(d-1)}}\).
\end{proposition}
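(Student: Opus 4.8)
The plan is to derive this as an immediate corollary of \cref{prop_tr_rho_rho_and_bloch_vec_inner_prod}. That proposition, applied with the second density matrix equal to the first, gives the identity $\tr(\rho^2) = \frac{1}{d} + 2\|\vec{b}\|^2$, where $\vec{b}$ is the Bloch vector in the decomposition \eqref{basic_bloch_vec_decomp_appendix}. Everything else is a one-line algebraic rearrangement.

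Concretely, I would first invoke $\tr(\rho^2) = \frac{1}{d} + 2\braket{\vec{b},\vec{b}} = \frac{1}{d} + 2\|\vec{b}\|^2$. Then I would observe that $\tr(\rho^2) = \frac{1}{d-1}$ holds if and only if $2\|\vec{b}\|^2 = \frac{1}{d-1} - \frac{1}{d}$, and simplify the right-hand side: $\frac{1}{d-1} - \frac{1}{d} = \frac{d-(d-1)}{d(d-1)} = \frac{1}{d(d-1)}$. Hence $\|\vec{b}\|^2 = \frac{1}{2d(d-1)}$, i.e. $\|\vec{b}\| = \frac{1}{\sqrt{2d(d-1)}}$, and since every step is an equivalence the biconditional follows immediately.

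There is no real obstacle here; the only thing to be careful about is using the correct (unscaled) Bloch decomposition \eqref{basic_bloch_vec_decomp_appendix} — the one for which \cref{prop_tr_rho_rho_and_bloch_vec_inner_prod} was stated — rather than one of the rescaled versions \eqref{rescaled1_bloch_vec_decomp_appendix} or \eqref{rescaled2_bloch_vec_decomp}, since the normalization constants differ. I would also note in passing that this is precisely the statement that the purity value $\tr(\rho^2)=\frac{1}{d-1}$ singles out the insphere radius from \cref{insphere_prop}, which is why \cref{MixedProd_def} is well-posed as an optimization over unit Bloch vectors after rescaling by $\sqrt{2d(d-1)}$.

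\begin{proof}
    By \cref{prop_tr_rho_rho_and_bloch_vec_inner_prod} applied with $\rho' = \rho$, we have $\tr(\rho^2) = \frac{1}{d} + 2\braket{\vec{b},\vec{b}} = \frac{1}{d} + 2\|\vec{b}\|^2$. Therefore $\tr(\rho^2) = \frac{1}{d-1}$ if and only if
    \[
        2\|\vec{b}\|^2 = \frac{1}{d-1} - \frac{1}{d} = \frac{d - (d-1)}{d(d-1)} = \frac{1}{d(d-1)},
    \]
    which holds if and only if $\|\vec{b}\|^2 = \frac{1}{2d(d-1)}$, i.e., $\|\vec{b}\| = \frac{1}{\sqrt{2d(d-1)}}$.
\end{proof}
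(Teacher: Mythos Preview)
Your proposal is correct and matches the paper's approach exactly: the paper simply states that the proposition ``follows as a direct result of \cref{prop_tr_rho_rho_and_bloch_vec_inner_prod},'' and you have spelled out that one-line computation.
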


This follows as a direct result of \cref{prop_tr_rho_rho_and_bloch_vec_inner_prod}.

\begin{proposition}[Restatement of \cref{insphere_prop}]
Within \(\Omega_d\), defined using the decomposition of \eqref{basic_bloch_vec_decomp_appendix}, there is a maximal ball of radius \(\frac{1}{\sqrt{2d(d-1)}}\) that consists entirely of Bloch vectors that correspond to valid density matrices.
\end{proposition}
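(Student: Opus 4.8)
The plan is to turn membership in $\Omega_d$ into a positive-semidefiniteness condition and then control the smallest eigenvalue of the traceless part $\vec{b}\cdot\vec{\Lambda}$ purely in terms of $\|\vec{b}\|$. For any $\vec{b} \in \R^{d^2-1}$ the matrix $\rho = \frac1d I + \vec{b}\cdot\vec{\Lambda}$ from \eqref{basic_bloch_vec_decomp_appendix} is automatically Hermitian (the $\Lambda^a$ are Hermitian, the $b_a$ real) and has trace $1$ (the $\Lambda^a$ are traceless), so $\vec{b} \in \Omega_d$ if and only if $\rho \succcurlyeq 0$, i.e. if and only if the smallest eigenvalue of $M := \vec{b}\cdot\vec{\Lambda}$ is at least $-\tfrac1d$. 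Now $M$ is Hermitian and traceless, and $\tr(\Lambda^a \Lambda^b) = 2\delta_{ab}$ gives $\tr(M^2) = 2\|\vec{b}\|^2$. Writing the eigenvalues of $M$ as $\mu_1 \le \mu_2 \le \cdots \le \mu_d$, tracelessness gives $\sum_{i\ge 2}\mu_i = -\mu_1$, and Cauchy--Schwarz applied to $\mu_2,\dots,\mu_d$ yields $\sum_{i\ge 2}\mu_i^2 \ge \tfrac{1}{d-1}\bigl(\sum_{i\ge 2}\mu_i\bigr)^2 = \tfrac{\mu_1^2}{d-1}$. Hence $2\|\vec{b}\|^2 = \mu_1^2 + \sum_{i\ge 2}\mu_i^2 \ge \tfrac{d}{d-1}\mu_1^2$, which rearranges to $\mu_1 \ge -\|\vec{b}\|\sqrt{\tfrac{2(d-1)}{d}}$.

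From this the containment follows immediately: if $\|\vec{b}\| \le \tfrac{1}{\sqrt{2d(d-1)}}$ then $\mu_1 \ge -\sqrt{\tfrac{2(d-1)}{d}}\cdot\tfrac{1}{\sqrt{2d(d-1)}} = -\tfrac1d$, so $\rho \succcurlyeq 0$ and $\vec{b} \in \Omega_d$. Thus the closed ball of radius $\tfrac{1}{\sqrt{2d(d-1)}}$ about the origin lies entirely in $\Omega_d$ and consists of valid density matrices. For maximality I would exhibit a direction in which $\partial\Omega_d$ is reached at exactly this radius: pick any pure state $\rho' = \ket{\psi}\bra{\psi}$ with Bloch vector $\vec{b}'$ as in \eqref{basic_bloch_vec_decomp_appendix}, so $\vec{b}'\cdot\vec{\Lambda} = \ket{\psi}\bra{\psi} - \tfrac1d I$ and, by \cref{outsphere_prop}, $\|\vec{b}'\| = \sqrt{\tfrac{d-1}{2d}}$. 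For $t \ge 0$ the matrix $\tfrac1d I - t\,\vec{b}'\cdot\vec{\Lambda} = \tfrac{1+t}{d}I - t\ket{\psi}\bra{\psi}$ has eigenvalues $\tfrac{1+t}{d}-t$ (once) and $\tfrac{1+t}{d}$ ($d-1$ times), hence is PSD precisely when $t \le \tfrac{1}{d-1}$; the corresponding vectors $-t\,\vec{b}'$ have norm $t\sqrt{\tfrac{d-1}{2d}}$, which equals $\tfrac{1}{\sqrt{2d(d-1)}}$ exactly at $t = \tfrac{1}{d-1}$ and exceeds it — with a genuinely negative eigenvalue — for larger $t$. So no ball about the origin of radius strictly larger than $\tfrac{1}{\sqrt{2d(d-1)}}$ fits inside $\Omega_d$.

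The only nonroutine ingredient is the eigenvalue estimate in the first paragraph — bounding the most negative eigenvalue of a traceless Hermitian matrix by its Frobenius norm — and that is a one-line Cauchy--Schwarz argument; the rest is bookkeeping with the normalization $\tr(\Lambda^a\Lambda^b) = 2\delta_{ab}$.
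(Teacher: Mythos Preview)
Your proof is correct and follows essentially the same approach as the paper: both prove containment via a Cauchy--Schwarz/second-moment bound on the eigenvalues (you work directly with the traceless part $M=\vec{b}\cdot\vec{\Lambda}$ and bound $\mu_1$, the paper argues by contrapositive on $\rho$ and minimizes $\sum\lambda_i^2$), and both establish maximality via the direction opposite a pure state --- your $-t\,\vec{b}'$ is literally the same vector as the paper's $c\sum_{i\ge 2}\vec{b}_i$, since the Bloch vectors of an orthonormal basis sum to zero. Your presentation is somewhat more direct, with the eigenvalue estimate and the tight direction both computed explicitly rather than routed through $\tr(\rho^2)$ and the inner-product lemma.
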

\begin{proof}
    Using \cref{prop_tr_rho2_vs_bloch_vec_appendix}, it is enough to show that given a unit trace, hermitian matrix \(\rho\) (i.e., \(\tr(\rho) = 1\) and \(\rho^* = \rho\)), we have that \(\rho \succcurlyeq 0\) if \(\tr(\rho^2) \leq \frac{1}{d-1}\). We prove this by contrapositive. That is, given a \(\rho \in \C^{d \times d}\) such that \(\tr(\rho) = 1\) and \(\rho^* = \rho\), we will prove that \(\lambda_\text{min}(\rho) < 0\) implies that \(\tr(\rho^2) > \frac{1}{d-1}\). Next, to show that the ball defined by \(\tr(\rho^2) \leq \frac{1}{d-1}\) is maximal, we show that there are vectors beyond its surface that do not correspond to valid density matrices.

    We start by using the fact that the trace is the sum of the eigenvalues. Furthermore, because \(\rho\) is hermitian, we know all the eigenvalues exist and are real. Namely, we define the ordering of eigenvalues \(\lambda_1 \geq \cdots \geq \lambda_r \geq 0 > \lambda_{r+1} \geq \cdots \geq \lambda_d\). That is, there are \(r < d\) non-negative eigenvalues. Then, let \(\epsilon = -\sum_{i=r+1}^d \lambda_i > 0\), which gives us that
    \[\sum_{i = 1}^{r} \lambda_i = 1 + \epsilon\]
    Next, we consider 
    \[\tr(\rho^2) = \sum_{i = 1}^r \lambda_i^2 + \sum_{i = r+1}^d \lambda_i^2 > \sum_{i = 1}^r \lambda_i^2\]
    We can then minimize this by setting \(\lambda_1 = \cdots = \lambda_r = \frac{1+\epsilon}{r}\), which gives us that
    \[\tr(\rho^2) > \sum_{i = 1}^r \lambda_i^2 \geq r \cdot \left(\frac{1+\epsilon}{r}\right)^2 = \frac{(1+\epsilon)^2}{r} \geq \frac{(1+\epsilon)^2}{d-1} > \frac{1}{d-1}\]

    Next, we show the maximality of the insphere by showing that there are vectors beyond the boundary of \(\Omega_d\) that do not correspond to valid density matrices.
    To do this, we consider an arbitrary orthonormal basis of states \(\{\ket{e_i}\}_{i \in [d]}\) and their density matrices \(\{\ket{e_i}\bra{e_i}\}_{i \in [d]}\). Let their respective Bloch vectors be given by \(\{\vec{b}_i\}_{i \in [d]}\), which are defined by the decomposition \eqref{rescaled1_bloch_vec_decomp_appendix}, and thus have that \(\|\vec{b}_i\| = 1\). Moreover, by \cref{bloch_vec_angle_prop_appendix}, we have that \(\braket{b_i, b_j} = \frac{-1}{d-1}\) for \(i \neq j\).
    
    Next, consider the sum all of these Bloch vectors but \(\vec{b}_1\), which gives the new Bloch vector, \(\vec{b}' = c\sum_{i = 2}^{k} \vec{b}_i\), with some constant \(c \geq 0\) that makes \(\rho' = \frac{1}{d}I + \vec{b}' \cdot \vec{\Lambda}\) a valid density matrix. Conveniently, we have that (for \(c=1\))
    \[\|\vec{b'}\|^2 = \sum_{i = 2}^{d} \|\vec{b}_i\|^2 + 2 \sum_{2 \leq i<j \leq d} \braket{b_i,b_j} = (d-1) - 2 \binom{d-1}{2} \frac{1}{d-1} = 1\]
    and so in general, \(\|\vec{b}'\| = c\). We finish the proof using the fact that \(\rho'\) must be PSD and thus \(\tr(\rho' A*A) \geq 0\) for all \(A \in \mathcal{L}(\H_d)\), and in particular \(\tr(\rho' \ket{e_1}\bra{e_1}) \geq 0\). Using \cref{prop_tr_rho_rho_and_bloch_vec_inner_prod}, we get the following.
    \[0 \leq \tr(\rho' \ket{e_1}\bra{e_1}) = \frac{1}{d} + 2 \braket{\vec{b}',\vec{b}_1} = \frac{1}{d} + c \sqrt{\frac{2(d-1)}{d}} \sum_{i = 2}^{d} \braket{\vec{b}_1, \vec{b}_i} = \frac{1}{d} - c \sqrt{\frac{2(d-1)}{d}}\]
    \[c \leq \frac{1}{\sqrt{2d(d-1)}}\]
\end{proof}

\section{Odd-Degree Terms and Proof of \texorpdfstring{\cref{degree_one_terms_lemma}}{Lemma \ref{degree_one_terms_lemma}}}\label{appendix_odd_deg_terms}

In this section, we discuss odd-degree terms in general and then prove \cref{degree_one_terms_lemma}. 

For \textQMCProb, it was proven that, without loss of generality, odd-degree terms could be ignored when optimizing over the density matrices \cite{hwang_unique_2022}. Namely, for every density matrix \(\rho\), there is another density matrix that has the same energy on the problem Hamiltonian with no odd-degree terms. However, this seems only to be the case for strictly even degree local Hamiltonian\footnote{By strictly even degree, we mean that the local Hamiltonians can be written using only even degree terms.} over qubits.
We prove by contradiction that odd-degree terms, in general, can not be ignored for strictly even degree local Hamiltonian over qudits. In particular, we show that the components of degree-three terms for \textQMCdProb (with \(d \geq 3\)), a \(2\)-LHP, can not all be zero even for a simple graph like the triangle.

\begin{lemma}[Odd-Degree Terms Can't, In General, Be Ignored]
    There does not exist an optimal solution, \(\rho \in \mathcal{D}(\H_d^{\otimes n})\), for the \textQMCdProb problem (\(d \geq 3\)), with interaction graph being the triangle, such that all of its components onto degree-three terms are zero.
\end{lemma}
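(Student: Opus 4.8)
The plan is to pin down the set of optimizers, then exhibit one linear functional—coming from the totally antisymmetric projector—that is forced to take a fixed nonzero value on \emph{every} optimal solution, but that would have to vanish if all degree-three components were zero.

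\textbf{Step 1: the optimizers.} For the triangle $K_3$ with $d\geq 3$, I would first show $\QMCdProb(K_3)=1$ and that the optimizers are exactly the density matrices supported on the totally antisymmetric subspace $\wedge^3\H_d$ (nonzero since $d\geq 3$). The bound $\geq 1$ comes from the Slater determinant $\ket{1}\wedge\ket{2}\wedge\ket{3}$, whose two-qudit marginals lie in $\wedge^2\H_d=\mathrm{range}(h)$, so each of the three edge terms contributes $1$; the bound $\leq 1$ is the fact (used in \cref{prop_sdp_trivial_up_bound}) that each $h_{ij}\preccurlyeq I$. Since each $h_{ij}$ is a projector, $\tfrac13\sum_{i<j}\tr(\rho h_{ij})=1$ forces $\tr(\rho h_{ij})=1$ for all three edges, i.e.\ $\mathrm{supp}(\rho)\subseteq\mathrm{range}(h_{ij})=(\wedge^2\H_d)_{ij}\otimes\H_d$. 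Intersecting over the three transpositions, which generate $S_3$, gives $\mathrm{supp}(\rho)\subseteq\wedge^3\H_d$; conversely any such $\rho$ is optimal.

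\textbf{Step 2: rigidity of the two-body moments.} Let $P_\wedge$ be the orthogonal projector onto $\wedge^3\H_d$. For any optimal $\rho$ we have $P_\wedge\rho P_\wedge=\rho$, hence $\tr(\rho P_\wedge)=\tr(\rho)=1$, and since $S_3$ acts on $\wedge^3\H_d$ by the sign character, $P(\tau)P_\wedge=-P_\wedge$ for each transposition $\tau$, so $\tr(\rho\,\mathrm{SWAP}_{ij})=\tr(\rho P_\wedge\,\mathrm{SWAP}_{ij}\,P_\wedge)=-\tr(\rho P_\wedge)=-1$. Using $\mathrm{SWAP}=\tfrac1d I+\tfrac12\sum_a\Lambda^a\otimes\Lambda^a$ this yields $\sum_a\tr(\rho\,\Lambda^a_i\Lambda^a_j)=-2-\tfrac2d$ for each edge, hence the quantity $S:=\sum_{i<j}\sum_a\tr(\rho\,\Lambda^a_i\Lambda^a_j)=-\tfrac{6(d+1)}{d}$ is the \emph{same} for every optimizer.

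\textbf{Step 3: the antisymmetric projector has a nonzero degree-three part.} Expanding $P_\wedge=\tfrac16\sum_{\sigma\in S_3}\sgn(\sigma)P(\sigma)$ in the basis $\mathcal P_d^3$, only the two $3$-cycles contribute degree-three terms: writing a $3$-cycle as $\mathrm{SWAP}_{12}\mathrm{SWAP}_{23}$ and applying the product property \eqref{gell-mann_prod_property}, its degree-three part is $\tfrac14\sum_{a,b,c}(d_{abc}+if_{abc})\Lambda^a_1\Lambda^c_2\Lambda^b_3$, and adding the conjugate $3$-cycle leaves $B:=\tfrac1{12}\sum_{a,b,c}d_{abc}\Lambda^a_1\Lambda^c_2\Lambda^b_3$; the degree-$\leq 2$ part of $P_\wedge$, by $SU(d)$- and permutation-invariance, is $\tfrac{(d-1)(d-2)}{6d^2}I+\tfrac{2-d}{12d}\sum_{i<j}\sum_a\Lambda^a_i\Lambda^a_j$ (the constants fixed by $\tr(P_\wedge)=\binom d3$ and $\tr(P_\wedge\,\mathrm{SWAP}_{12})=-\binom d3$). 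The point is that $B\neq 0$ precisely because the symmetric structure constants $d_{abc}$ are not all zero for $d\geq 3$ (equivalently $\sum_{abc}d_{abc}^2=\tfrac{(d^2-1)(d^2-4)}{d}>0$), which is exactly where the hypothesis $d\geq 3$ is used.

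\textbf{Step 4: contradiction.} Suppose $\rho$ is optimal and all of its degree-three components vanish; then $\tr(\rho A)=0$ for every degree-three basis element $A\in\mathcal P_d^3$, so $\tr(\rho B)=0$. But substituting the expansion of Step 3 and the value of $S$ from Step 2 into $\tr(\rho P_\wedge)=1$ gives
\[
1=\tr(\rho P_\wedge)=\frac{(d-1)(d-2)}{6d^2}+\frac{2-d}{12d}\Bigl(-\frac{6(d+1)}{d}\Bigr)+\tr(\rho B)=\frac{(d-2)(2d+1)}{3d^2}+\tr(\rho B),
\]
whence $\tr(\rho B)=\tfrac{(d+1)(d+2)}{3d^2}>0$, contradicting $\tr(\rho B)=0$. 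Hence no optimal $\rho$ can be free of degree-three terms. I expect the main obstacle to be Step 3: carefully extracting the degree-three part of $P_\wedge$ from \eqref{gell-mann_prod_property} and verifying that it survives cancellation, and pinning down the degree-$\leq 2$ coefficients accurately enough that Step 4 produces a genuinely nonzero constant; the support characterization in Step 1 and the $S_3$-rigidity in Step 2 are routine.
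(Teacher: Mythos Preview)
Your proof is correct and takes a genuinely different route from the paper's. The paper proceeds via a sum-of-squares style argument: it computes $H_\triangle^2$ in terms of $H_\triangle$, the identity, and a degree-three remainder $\mathcal{O}$, then uses $\tr\!\big(\rho(I-H_\triangle)^2\big)\geq 0$ together with the hypothesis $\tr(\rho\,\mathcal{O})=0$ to derive the bound $\tr(\rho H_\triangle)\leq\frac{5d-2}{6d}<1$, contradicting optimality. It never needs to identify the optimizers or exhibit a specific degree-three observable.

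Your argument is more structural: you first pin down the optimizers as exactly the states supported on $\wedge^3\H_d$, then extract the degree-three part $B$ of the antisymmetric projector $P_\wedge$ and evaluate $\tr(\rho B)=\frac{(d+1)(d+2)}{3d^2}$ directly. This buys you an explicit witness---the totally symmetric combination $\sum_{a,b,c}d_{abc}\Lambda^a_1\Lambda^c_2\Lambda^b_3$---whose expectation is forced to be nonzero, and it makes transparent \emph{why} $d\geq 3$ is needed (the $d_{abc}$ vanish identically for $d=2$). The paper's approach is computationally lighter, since it avoids the full expansion of $P_\wedge$ and the characterization of optimizers, but yours is more informative about which degree-three moment obstructs the reduction. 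Your coefficients in Steps~3 and~4 check out.
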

\begin{proof}
    Consider a solution \(\rho\) such that \(\tr(\rho A) = 0\) for all \(A \in \mathcal{P}_d^n(3) \setminus \mathcal{P}_d^n(2)\). Then, for the problem Hamiltonian, \(H_\triangle := \frac{1}{3}\left(h_{12} + h_{23} + h_{13}\right)\), its square is given by \(H_\triangle^2 = \frac{1}{3}H_\triangle +\frac{1}{9}\left(\left\{h_{12}, h_{13}\right\} + \left\{h_{12}, h_{23}\right\} + \left\{h_{13}, h_{23}\right\}\right)\). Note, we use \(\{A, B\} := AB + BA\) to denote the anti-commutator. Next, we look at the term \(\{h_{uv}, h_{vw}\}\), to get the following.

    \begin{align*}
        \left\{h_{uv}, h_{vw}\right\} &= \left\{\frac{1}{2}\left(\frac{d-1}{d}\right) I - \frac{1}{4}\sum_{a = 1}^{d^2 - 1} \Lambda^a_u \Lambda^a_v, \frac{1}{2}\left(\frac{d-1}{d}\right) I - \frac{1}{4}\sum_{a = 1}^{d^2 - 1} \Lambda^a_v \Lambda^a_w\right\} \\
        &= \left(\frac{d-1}{d}\right)\left( \frac{1}{2}\left(\frac{d-1}{d}\right) I - \frac{1}{4}\sum_{a = 1}^{d^2 - 1} \Lambda^a_u \Lambda^a_v - \frac{1}{4}\sum_{a = 1}^{d^2 - 1} \Lambda^a_v \Lambda^a_w\right) + \left\{\frac{1}{4}\sum_{a = 1}^{d^2 - 1} \Lambda^a_u \Lambda^a_v, \frac{1}{4}\sum_{a = 1}^{d^2 - 1} \Lambda^a_v \Lambda^a_w\right\} \\
        &= \left(\frac{d-1}{d}\right)\left( h_{uv} + h_{vw} - \frac{1}{2}\left(\frac{d-1}{d}\right) I\right) + \frac{1}{4^2} \sum_{a,b = 1}^{d^2 - 1} \left\{\Lambda^a_u \Lambda^a_v, \Lambda^b_v \Lambda^b_w\right\} \\
        \phantom{\left\{h_{uv}, h_{vw}\right\}} &= \left(\frac{d-1}{d}\right)\left( h_{uv} + h_{vw} - \frac{1}{2}\left(\frac{d-1}{d}\right) I\right) + \frac{1}{4^2} \sum_{a,b = 1}^{d^2 - 1} \Lambda^a_u \left(\frac{4}{d}\delta_{ab}I + 2 \sum_{c = 1}^{d^2-1} d_{abc} \Lambda^c_v\right) \Lambda^b_w \\
        &= \left(\frac{d-1}{d}\right)\left( h_{uv} + h_{vw} - \frac{1}{2}\left(\frac{d-1}{d}\right) I\right) + \frac{1}{4d} \sum_{a = 1}^{d^2 - 1} \Lambda^a_u \Lambda^a_w + \mathcal{O}\\
        &= \left(\frac{d-1}{d}\right)\left( h_{uv} + h_{vw} - \frac{1}{2}\left(\frac{d-1}{d}\right) I\right) - \frac{1}{d}\left(h_{uw} - \frac{1}{2}\left(\frac{d-1}{d}\right) I\right) + \mathcal{O} \\
        &= \left(\frac{d-1}{d}\right)\left( h_{uv} + h_{vw}\right) - \frac{1}{d}h_{uw} - \frac{1}{2}\left(\frac{(d-1)(d-2)}{d^2}\right) I + \mathcal{O}
    \end{align*}

    Where \(\mathcal{O}\) is a linear combination of strictly degree-three terms.
    Plugging that into the equation for the square of \(H_\triangle\), we get the following.

    \begin{align*}
        H_\triangle^2 &= \frac{1}{3}H_\triangle +\frac{1}{9}\left(\left\{h_{12}, h_{13}\right\} + \left\{h_{12}, h_{23}\right\} + \left\{h_{13}, h_{23}\right\}\right) \\
        &= \frac{1}{3}H_\triangle + \frac{1}{3}\left(2 \left(\frac{d-1}{d}\right) H_\triangle - \frac{1}{d} H_\triangle - \frac{1}{2}\left(\frac{(d-1)(d-2)}{d^2}\right)I + \mathcal{O}\right) \\
        &= \left(\frac{d-1}{d}\right) H_\triangle - \frac{1}{6}\left(\frac{(d-1)(d-2)}{d^2}\right)I + \mathcal{O}
    \end{align*}
    Then, we consider the matrix \(M = I - H_\triangle\) and the energy of its square, for which we have that \(\tr(\rho M^2) \geq 0\) because \(\rho \succcurlyeq 0\). Additionally, we have that \(\tr(\rho \mathcal{O}) = 0\) by assumption.
    \begin{align*}
        0 \leq \tr(\rho M^2) &= \left(1 - \frac{1}{6}\left(\frac{(d-1)(d-2)}{d^2}\right)\right) - \left(2 - \left(\frac{d-1}{d}\right)\right) \tr(\Tilde{\rho}H_\triangle) + \tr(\rho \mathcal{O})\\
        &= \left(\frac{(d+1) (5 d-2)}{6 d^2}\right) - \left(\frac{d+1}{d}\right) \tr(\rho H_\triangle)
    \end{align*}
    \[\Rightarrow \tr(\rho H_\triangle) \leq \frac{5 d-2}{6 d}\]
    
    Which, for \(d \geq 3\), is strictly less than one. Thus, by \cref{lemma_d_clique_energy}, we know that \(\rho\) can not be the optimal solution. Moreover, \(\mathcal{O} \neq 0\) (for \(d \geq 3\), which is not the case for \(d=2\)).

\end{proof}

While odd-degree terms cannot be, in general, ignored, we observe that degree-one terms can. However, before we prove \cref{degree_one_terms_lemma}, we define the clock and shift matrices (also known as the Weyl operator basis \cite{bertlmann_bloch_2008}).

\begin{definition}[The Clock and Shift Matrices/Weyl Operator Basis \cite{bertlmann_bloch_2008,narnhofer_entanglement_2006}]
    The Weyl matrices are a matrix basis consisting of \(d^2\) traceless (with exception to the identity \(I = U_{00}\)), non-hermitian, unitary, pair-wise orthogonal matrices (namely, \(\tr(U^*_{ab} U_{cd}) = d \delta_{ac} \delta_{bd}\)). We can define them using the following two matrices. Note, it is more convenient to define them using a standard basis starting with \(\ket{0}\), i.e., \(\{\ket{i}\ |\ i \in [0,d-1]\}\). We also implicitly use modulo with the standard basis, i.e., \(\ket{k} := \ket{k\bmod d}\).
    \begin{itemize}
        \item The Clock Matrix
        \begin{equation}
            U_{10} := \sum_{k=0}^{d-1} \omega_d^{k} \ket{k} \bra{k}
        \end{equation}
        \item The Shift Matrix\footnote{Often, the shift matrix is defined as the transpose of this matrix.}
        \begin{equation}
            U_{01} := \sum_{k=0}^{d-1} \ket{k} \bra{k+1}
        \end{equation}
    \end{itemize}
    Where \(\omega_d:= e^{2\pi i/d}\) is the first primal \(d\)th root of unity. Then \(U_{nm} := U_{10}^n U_{01}^m\) for \(n,m \in [0,d-1]\). We can combine these two into a singular definition.
    \begin{equation}
        U_{nm} = \sum_{k=0}^{d-1} \omega^{kn} \ket{k} \bra{k+m},\ \ \ \ n,m \in [0,d-1]
    \end{equation}
\end{definition}
Of particular interest is their algebraic structure (which determines the Weyl operators up to unitary equivalence).
\begin{equation}
    U_{nm} U_{xy} = \omega_d^{nx}U_{n+x,m+y}
\end{equation}
\begin{equation}
    U^*_{nm} = \omega_d^{nm}U_{-n,-m}
\end{equation}

\begin{proposition}[Alternate definition for the \textQMCdProb edge interaction]
    The \textQMCdProb edge interaction (defined in \cref{QMCd_edge_interaction_def}) can be written in terms of the Weyl operator basis. This is left unproven.
    \begin{align}
        h &= \frac{1}{2} I - \frac{1}{2d} \sum_{n,m = 0}^{d-1} U_{nm}^* \otimes U_{nm}\\
        h &= \frac{1}{2}\left(\frac{d-1}{d}\right) I - \frac{1}{2d} \!\!\!\! \sum_{\substack{n,m = 0\\(n,m) \neq (0,0)}}^{d-1} \!\!\!\! U_{nm}^* \otimes U_{nm}
    \end{align}
\end{proposition}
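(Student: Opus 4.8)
The plan is to reduce the whole statement to a single expansion of the swap operator. Recall that $h$ is, by definition, the orthogonal projector onto the antisymmetric subspace $\wedge^2\H_d$, and that $\wedge^2\H_d$ is exactly the $(-1)$-eigenspace of the swap operator $\mathrm{SWAP} := \sum_{i,j\in[d]}\ket{i}\bra{j}\otimes\ket{j}\bra{i}$, which is a Hermitian involution. Hence $h=\tfrac12\bigl(I-\mathrm{SWAP}\bigr)$, and it suffices to establish the operator identity $\mathrm{SWAP}=\tfrac1d\sum_{n,m=0}^{d-1}U_{nm}^*\otimes U_{nm}$; both displayed equations then drop out by elementary bookkeeping.

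For that identity I would invoke the standard fact that the swap operator expands in \emph{any} Hilbert--Schmidt orthonormal operator basis: if $\{B_\alpha\}$ is a basis of $\mathcal{L}(\C^d)$ with $\tr(B_\alpha^*B_\beta)=\delta_{\alpha\beta}$, then $\mathrm{SWAP}=\sum_\alpha B_\alpha\otimes B_\alpha^*$. This is checked against the computational-basis choice $B_{ij}=\ket{i}\bra{j}$ (there $B_{ij}^*=\ket{j}\bra{i}$ and $\sum_{i,j}B_{ij}\otimes B_{ij}^*=\mathrm{SWAP}$) and then extended to an arbitrary orthonormal operator basis by unitary change of basis, which leaves $\sum_\alpha B_\alpha\otimes B_\alpha^*$ invariant. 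Since the Weyl matrices satisfy $\tr(U_{nm}^*U_{xy})=d\,\delta_{nx}\delta_{my}$, the rescaled family $\{U_{nm}/\sqrt d\}_{n,m\in[0,d-1]}$ is such a basis; so is the family of adjoints $\{U_{nm}^*/\sqrt d\}_{n,m}$, since adjoints of an orthonormal operator basis again form one. Applying the expansion to the latter family gives $\mathrm{SWAP}=\tfrac1d\sum_{n,m}U_{nm}^*\otimes(U_{nm}^*)^*=\tfrac1d\sum_{n,m}U_{nm}^*\otimes U_{nm}$, which combined with $h=\tfrac12(I-\mathrm{SWAP})$ is precisely the first displayed equation. (As a consistency check, the already-proven Gell-Mann form of $h$ is the same computation for the orthonormal basis $\{\Lambda^a/\sqrt2\}_{a=0}^{d^2-1}$.)

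For the second displayed equation I would simply separate the $(n,m)=(0,0)$ term. Since $U_{00}=I$, that term is $U_{00}^*\otimes U_{00}=I$ on $\H_d^{\otimes 2}$, so
\[
h=\tfrac12 I-\tfrac1{2d}I-\tfrac1{2d}\sum_{(n,m)\neq(0,0)}U_{nm}^*\otimes U_{nm}=\tfrac12\Bigl(\tfrac{d-1}{d}\Bigr)I-\tfrac1{2d}\sum_{(n,m)\neq(0,0)}U_{nm}^*\otimes U_{nm},
\]
using $\tfrac12-\tfrac1{2d}=\tfrac12\cdot\tfrac{d-1}{d}$. I do not foresee a genuine obstacle; the one place to be careful is the placement of the adjoint in the swap expansion (it must be $B_\alpha\otimes B_\alpha^*$, not $B_\alpha\otimes B_\alpha$ or $B_\alpha\otimes\overline{B_\alpha}$), and the fact that the phases in $U_{nm}^*=\omega_d^{nm}U_{-n,-m}$ are harmless because they occur on both tensor factors -- equivalently, one may pass between $\sum_{n,m}U_{nm}\otimes U_{nm}^*$ and $\sum_{n,m}U_{nm}^*\otimes U_{nm}$ via the evident identity $\mathrm{SWAP}(A\otimes B)\mathrm{SWAP}=B\otimes A$.
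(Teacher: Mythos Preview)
Your proof is correct. The paper explicitly leaves this proposition unproven (the statement itself says ``This is left unproven''), so there is no proof to compare against. Your route---writing $h=\tfrac12(I-\mathrm{SWAP})$ and then expanding the swap operator in an arbitrary Hilbert--Schmidt orthonormal operator basis as $\mathrm{SWAP}=\sum_\alpha B_\alpha\otimes B_\alpha^*$---is the standard and clean way to obtain such identities; indeed it simultaneously recovers the Gell-Mann form in \cref{QMCk_edge_interaction_gell-mann_prop} with much less work than the paper's direct matrix-entry verification in \cref{appendix_heis_model_and_QMCd}. The only delicate point, which you handled correctly, is that the Weyl basis is not Hermitian, so one must keep track of which factor carries the adjoint; your choice to apply the expansion to the orthonormal family $\{U_{nm}^*/\sqrt d\}$ (rather than $\{U_{nm}/\sqrt d\}$) lands directly on the stated form $U_{nm}^*\otimes U_{nm}$.
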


We can now prove \cref{degree_one_terms_lemma}. Note, we continue to use as the standard basis, \(\{\ket{i}\ |\ i \in [0,d-1]\}\). We also use the general Gell-Mann matrices, which we write in this new standard basis (originally defined in \cref{gen_gell-mann_def}), which is still the same with the exception of the diagonal matrices, which are now defined as
\begin{equation}
    \Lambda^d_{a} :=  \sqrt{\frac{2}{(a+1)(a+2)}}   \left(\sum_{b=0}^{a} \ket{b}\bra{b}-(a+1)\ket{a + 1}\bra{a + 1}\right),\ \ \ \ 0 \leq a \leq d - 2  
\end{equation}
For brevity, let \(c_a := \sqrt{\frac{2}{(a+1)(a+2)}}\).

\begin{lemma}[Restatement of \cref{degree_one_terms_lemma}]
    Let \(\{h_\alpha\}_\alpha\) be a set of local Hamiltonians invariant under conjugation of local unitaries. For any (pseudo)-density matrix \(\rho = \frac{1}{2^n}\sum_{A \in \mathcal{P}_d^n} y(A) A\), there is another density matrix \(\rho' = \frac{1}{2^n}\sum_{A \in \mathcal{P}_d^n} y'(A) A\) such that that \(y'(A) = 0\) for all \(A \in \mathcal{P}_d^n(1) \setminus \{I\}\) being a degree one basis vector, which achieves the same expected energy, \(\E_\alpha \left[\tr(\rho h_\alpha)\right] = \E_\alpha \left[\tr(\rho' h_\alpha)\right]\).
\end{lemma}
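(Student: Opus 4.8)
The plan is a standard twirling (group-averaging) argument exploiting the hypothesis that every $h_\alpha$ is invariant under conjugation by local unitaries. Let $\mu$ be the Haar probability measure on $U(d)$ and set
\[
\rho' := \int_{U(d)} (U^{\otimes n})\,\rho\,(U^{\otimes n})^{*}\, d\mu(U),
\]
the average of $\rho$ over the \emph{global} adjoint action of $U(d)$ (the same single-qudit unitary applied to every qudit). I would then verify, in order, that $\rho'$ is again an admissible (pseudo)-density matrix at the same level, that it has no degree-one component, and that it achieves the same expected energy as $\rho$; these three facts together are exactly the statement of the lemma.

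For admissibility the key observation is that conjugation by a product of local unitaries preserves operator degree: if a tensor factor of $A$ is a scalar multiple of the identity it stays so after that factor is conjugated by a single-qudit unitary, so $A\mapsto (U^{\otimes n})^{*}A\,U^{\otimes n}$ sends $\mathcal{L}^{(t)}(\H_d^{\otimes n})$ into itself. Consequently, for each fixed $U$, the operator $(U^{\otimes n})\rho(U^{\otimes n})^{*}$ is Hermitian, has unit trace, and satisfies $\tr\!\big((U^{\otimes n})\rho(U^{\otimes n})^{*}A^{*}A\big)=\tr(\rho\,B^{*}B)\ge 0$ for every $A\in\mathcal{L}^{(t)}(\H_d^{\otimes n})$, where $B:=(U^{\otimes n})^{*}A\,U^{\otimes n}\in\mathcal{L}^{(t)}(\H_d^{\otimes n})$; hence it lies in $\Tilde{\mathcal{D}}^{(2t)}(\H_d^{\otimes n})$ whenever $\rho$ does (and is a genuine density matrix when $\rho$ is). Averaging preserves Hermiticity, the trace, and all of these inequalities, so $\rho'=\frac{1}{2^n}\sum_{A}y'(A)A$ is a valid (pseudo)-density matrix with $y'(I)=1$.

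For the degree-one terms, fix a vertex $u$ and $a\in[d^2-1]$; since the conjugation acts only on qudit $u$ one gets
\[
\tr(\rho'\,\Lambda^a_u)=\tr\!\Big(\rho\,\big({\textstyle\int_{U(d)}U^{*}\Lambda^a U\,d\mu(U)}\big)_u\Big)=0,
\]
because the defining representation of $U(d)$ on $\C^d$ is irreducible, so Schur's lemma gives $\int_{U(d)}U^{*}M U\,d\mu(U)=\frac{\tr M}{d}I$ for every $M\in\mathcal{L}(\C^d)$, which vanishes for the traceless $M=\Lambda^a$. As the trace pairing against $\mathcal{P}_d^n$ is nondegenerate, this says precisely $y'(A)=0$ for all $A\in\mathcal{P}_d^n(1)\setminus\{I\}$. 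Finally, the energy is unchanged because each $h_\alpha$, being invariant under conjugation by local unitaries, satisfies $(U^{\otimes n})^{*}h_\alpha\,U^{\otimes n}=h_\alpha$ (the conjugation acts as $V\otimes V$ on the two qudits in the support of $h_\alpha$ and trivially elsewhere), whence $\tr\!\big((U^{\otimes n})\rho(U^{\otimes n})^{*}h_\alpha\big)=\tr(\rho h_\alpha)$ for every $U$; integrating over $U$ and then averaging over $\alpha$ yields $\E_\alpha[\tr(\rho' h_\alpha)]=\E_\alpha[\tr(\rho h_\alpha)]$.

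The only genuinely delicate point is the admissibility step: one has to make sure that twirling keeps us inside the pseudo-density-matrix relaxation, which reduces to the elementary but essential fact that local-unitary conjugation does not raise operator degree, so the constraints $\tr(\Tilde\rho\,A^{*}A)\ge 0$ over $A\in\mathcal{L}^{(t)}$ survive the averaging; the other two steps are then immediate from Schur's lemma and the invariance hypothesis respectively. If one prefers to avoid Haar measure altogether, the same proof runs verbatim with the finite average of $\rho$ over the $d^2$ conjugations by the clock-and-shift (Heisenberg--Weyl) operators, using that twirling a single qudit over that group realizes the depolarizing map $M\mapsto\frac{\tr M}{d}I$ in place of Schur's lemma.
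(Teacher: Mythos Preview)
Your proof is correct and follows essentially the same twirling/group-averaging strategy as the paper: both conjugate by $U^{\otimes n}$, invoke the degree-preservation of local-unitary conjugation to retain pseudo-density admissibility, and then observe that the single-qudit average kills traceless matrices while the energy is preserved by the invariance hypothesis. The only cosmetic difference is that the paper carries out the average over the finite clock-and-shift group with an explicit computation in place of your Haar integral plus Schur's lemma---exactly the alternative you already flag in your final remark.
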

\begin{proof}
    We say a Hamiltonian is invariant under conjugation by local unitaries if \(U^{\otimes n} H (U^*)^{\otimes n} = H\). Next, consider an arbitrary degree \(2t\) pseudo-density matrix \(\rho = \frac{1}{2^n}\sum_{A \in \mathcal{P}_d^n(2t)} y(A) A\) over \(n\) qudits and let \(\rho' := U^{\otimes n} \rho (U^*)^{\otimes n}\). We note that proving this lemma for pseudo-density matrices is a stronger claim than for true density matrices, as a true density matrix is the same as a degree-\(2n\) pseudo-density matrix (as per \eqref{pdensity_hierarchy}).
    
    We want to show that \(\rho'\) is also a valid degree \(2t\) pseudo-density matrix with the same objective. The latter follows easily as the \(h_\alpha\)'s are assumed to be invariant under conjugation by local unitaries.
    \[\tr(\rho' h_\alpha) = \tr(\rho (U^*)^{\otimes n} h_\alpha U^{\otimes n}) = \tr(\rho h_\alpha)\]
    And thus, \(\E_\alpha \left[\tr(\rho h_\alpha)\right] = \E_\alpha \left[\tr(\rho' h_\alpha)\right]\).
    Similarly, since the identity is such a matrix, we also have that
    \[\tr(\rho' I) = \tr(\rho (U^*)^{\otimes n} I U^{\otimes n}) = \tr(\rho I) = 1\]
    
    Lastly, to be a valid degree \(2t\) pseudo-density matrix, we have to show that \(\tr(\rho' A^* A) \geq 0\) for any matrix \(A\) of degree at most \(t\). To show this, first consider the following
    \begin{align*}
        \tr(\rho' A^* A) &= \tr(\rho (U^*)^{\otimes n} A^* A U^{\otimes n}) \\
        &= \tr(\rho (U^*)^{\otimes n} A^* U^{\otimes n} (U^*)^{\otimes n} A U^{\otimes n}) \\
        &= \tr(\rho \left((U^*)^{\otimes n} A U^{\otimes n}\right)^* (U^*)^{\otimes n} A U^{\otimes n})
    \end{align*}

    Then, we can observe that \(\omega((U^*)^{\otimes n} A U^{\otimes n}) \leq \omega(A)\). To show this, consider the decomposition of \(A\) in the basis of \(\mathcal{P}_d^n(t)\), \(A = \sum_{\tau \in \mathcal{P}_d^n(t)} \hat{A}(\tau) \tau\). Then, using the linearity of conjugation, we have that

    \[(U^*)^{\otimes n} A U^{\otimes n} = \sum_{\tau \in \mathcal{P}_d^n(t)} \hat{A}(\tau) (U^*)^{\otimes n} \tau U^{\otimes n}\]
    
    for which \(\omega((U^*)^{\otimes n} \tau U^{\otimes n}) \leq \omega(\tau)\) as all qudits that \(\tau\) acts as the indent, \((U^*)^{\otimes n} \tau U^{\otimes n})\) acts as the identity. Hence, \(\rho'\) is a valid degree \(2t\) pseudo-density matrix with the same objective value as \(\rho\).

    Finally, we get rid of degree-one terms. To do this, we can look at the convex combination of the conjugation over the clock and shift matrices. We note that any convex combination of degree \(2t\) pseudo-density matrices, which all give the same objective value, gives a new degree \(2t\) pseudo-density matrices with the same objective. So we consider 
    \[\rho'' = \E_{(n,m) \sim [0,d-1]^2}[U_{nm}^{\otimes n} \rho (U_{nm}^{\otimes n})^*]\]
    for which we isolate the degree 1 terms in \(\rho\), \(\tau \in \mathcal{P}_d^n(1)\setminus{I}\). We then prove this in two parts, for \(\tau = \Lambda^\pm_{ab}\) and \(\tau = \Lambda^d_a\). Specifically, we show that \(\E_{(n,m) \sim [0,d-1]^2}[U_{nm}^{\otimes n} \tau (U_{nm}^{\otimes n})^*] = 0\), which implies that \(\rho''\) has no degree 1 terms, proving the theorem.

    \begin{align*}
        U_{nm} \Lambda^+_{ab} U_{nm}^* &= \left(\sum_{k=0}^{d-1} \omega^{kn} \ket{k} \bra{k+m}\right) \Lambda^+_{ab} \left(\sum_{k=0}^{d-1} \omega^{-(k-m)n} \ket{k} \bra{k-m}\right) \\
        &= \left(\sum_{k=0}^{d-1} \omega^{(k-m)n} \ket{k-m} \bra{k}\right) \left(\ket{a}\bra{b} + \ket{b}\bra{a}\right) \left(\sum_{k=0}^{d-1} \omega^{-(k-m)n} \ket{k} \bra{k-m}\right) \\
        &= \omega^{(a-m)n-(b-m)n} \ket{a-m} \bra{b-m} + \omega^{(b-m)n-(a-m)n} \ket{b-m} \bra{a-m} \\
        &= \omega^{(a-b)n} \ket{a-m} \bra{b-m} + \omega^{-(a-b)n} \ket{b-m} \bra{a-m}
    \end{align*}
    Similarly,
    \[U_{nm} \Lambda^-_{ab} U_{nm}^* = -i \omega^{(a-b)n} \ket{a-m} \bra{b-m} + i\omega^{-(a-b)n} \ket{b-m} \bra{a-m}\]
    Then, we can observe that
    \begin{align*}
        \E_{(n,m) \sim [0,d-1]^2}[U_{nm}^{\otimes n} \Lambda^+_{ab} (U_{nm}^{\otimes n})^*] &= \frac{1}{d^2}\sum_{m=0}^{d-1}\sum_{n=0}^{d-1} \left(\omega^{(a-b)n} \ket{a-m} \bra{b-m} + \omega^{-(a-b)n} \ket{b-m} \bra{a-m}\right) \\
        &= \frac{1}{d^2}\sum_{m=0}^{d-1} \left(\ket{a-m} \bra{b-m} \sum_{n=0}^{d-1} \omega^{(a-b)n} + \ket{b-m} \bra{a-m} \sum_{n=0}^{d-1} \omega^{-(a-b)n}\right) \\
        &= \frac{1}{d^2}\sum_{m=0}^{d-1} 0\cdot \left(\ket{a-m} \bra{b-m} + \ket{b-m} \bra{a-m}\right) = 0
    \end{align*}
    And for the same reason,
    \[\E_{(n,m) \sim [0,d-1]^2}[U_{nm}^{\otimes n} \Lambda^-_{ab} (U_{nm}^{\otimes n})^*] = \frac{1}{d^2}\sum_{m=0}^{d-1} 0\cdot \left(-i\ket{a-m} \bra{b-m} + i\ket{b-m} \bra{a-m}\right) = 0\]

    Next, we consider \(\Lambda^d_a\).
    \begin{align*}
        U_{nm} \Lambda^d_{a} U_{nm}^* &= c_a \left(\sum_{k=0}^{d-1} \omega^{(k-m)n} \ket{k-m} \bra{k}\right) \left(\sum_{b=0}^{a} \ket{b}\bra{b}-(a+1)\ket{a + 1}\bra{a + 1}\right) \left(\sum_{k=0}^{d-1} \omega^{-(k-m)n} \ket{k} \bra{k-m}\right) \\
        &= c_a \left(\sum_{b=0}^{a} \omega^{(b-m)n - (b-m)n} \ket{b-m} \bra{b-m} - (a+1) \omega^{(a-m)n - (a-m)n} \ket{a+1-m} \bra{a+1-m}\right) \\
        &= c_a \left(\sum_{b=0}^{a} \ket{b-m} \bra{b-m} - (a+1) \ket{a+1-m} \bra{a+1-m}\right)
    \end{align*}
    Then,
    \begin{align*}
        \E_{(n,m) \sim [0,d-1]^2}[U_{nm}^{\otimes n} \Lambda^d_{a} (U_{nm}^{\otimes n})^*] &= \frac{c_a}{d^2}\sum_{n=0}^{d-1}\sum_{m=0}^{d-1} \left(\sum_{b=0}^{a} \ket{b-m} \bra{b-m} - (a+1) \ket{a+1-m} \bra{a+1-m}\right) \\
        &= \frac{c_a}{d^2}\sum_{n=0}^{d-1} \left(\sum_{b=0}^{a} \sum_{m=0}^{d-1} \ket{b-m} \bra{b-m} - (a+1) \sum_{m=0}^{d-1} \ket{a+1-m} \bra{a+1-m}\right) \\
        &= \frac{c_a}{d^2}\sum_{n=0}^{d-1} \left(\sum_{b=0}^{a} I - (a+1) I\right) \\
        &= \frac{c_a}{d^2}\sum_{n=0}^{d-1} \left((a+1) I - (a+1) I\right) = 0
    \end{align*}
\end{proof}

\section{Proof of \texorpdfstring{\cref{clique_val_prop}}{Proposition \ref{clique_val_prop}}}\label{appendix_proof_clique_val_prop}

\begin{proposition}[Restatement of \cref{clique_val_prop}]
    For \textQMCdProb with interaction graph, \(K_n\), being the unweighted complete graph on \(n\) vertices, the level-two ncSoS gets a value of \(\frac{(d-1) (d+n)}{2 d (n-1)}\).
\end{proposition}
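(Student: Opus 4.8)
The plan is to evaluate the level-two ncSoS vector program directly on the unweighted complete graph $K_n$. Recall that, after the simplifications described in the text (eliminating degree-one moments via \cref{degree_one_terms_lemma}, and using \eqref{gell-mann_prod_property} to pin $M(\Lambda^a_u,\Lambda^b_u)=\frac{2}{d}\delta_{ab}$), this program chooses vectors $\ket{\Lambda^a_u}$ for $u\in[n]$, $a\in[d^2-1]$ subject to $\braket{\Lambda^a_u|\Lambda^b_u}=\frac{2}{d}\delta_{ab}$, maximizing $\E_{(u,v)\sim E}\bigl[\tfrac12(\tfrac{d-1}{d})-\tfrac14\sum_a\braket{\Lambda^a_u|\Lambda^a_v}\bigr]$. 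On $K_n$ this equals $\tfrac12(\tfrac{d-1}{d})-\tfrac{1}{4\binom n2}\,\Sigma$ where $\Sigma:=\sum_{u<v}\sum_{a=1}^{d^2-1}\braket{\Lambda^a_u|\Lambda^a_v}$, so the whole task reduces to minimizing $\Sigma$.

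For the upper bound on the SDP value (equivalently, a lower bound on $\Sigma$), I would fix a coordinate $a$ and expand $0\le\bigl\|\sum_{u\in[n]}\ket{\Lambda^a_u}\bigr\|^2=\sum_u\tfrac2d+2\sum_{u<v}\braket{\Lambda^a_u|\Lambda^a_v}$, which gives $\sum_{u<v}\braket{\Lambda^a_u|\Lambda^a_v}\ge -n/d$; summing over the $d^2-1$ coordinates yields $\Sigma\ge -\tfrac{n(d^2-1)}{d}$. Substituting and using $\binom n2=\tfrac{n(n-1)}2$ gives value $\le \tfrac12(\tfrac{d-1}{d})+\tfrac{d^2-1}{2d(n-1)}=\tfrac{(d-1)(d+n)}{2d(n-1)}$ after combining fractions. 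For the matching lower bound I would exhibit an explicit feasible point by prescribing the Gram matrix directly: $\braket{\Lambda^a_u|\Lambda^b_v}=\tfrac{2}{d}\,\delta_{ab}\bigl(\delta_{uv}-\tfrac{1-\delta_{uv}}{n-1}\bigr)$. Its diagonal blocks are $\tfrac2d I$ as required, and it equals $\tfrac2d\,(J\otimes I_{d^2-1})$ with $J=\tfrac{n}{n-1}I_n-\tfrac{1}{n-1}\mathbf 1\mathbf 1^{\top}\succeq 0$ of rank $n-1$, so it is a genuine Gram matrix of vectors in $\R^{(d^2-1)n}$ and hence SDP-feasible. At this point $\Sigma=\binom n2(d^2-1)\bigl(-\tfrac{2}{d(n-1)}\bigr)=-\tfrac{n(d^2-1)}{d}$, so the bound is attained and the value is exactly $\tfrac{(d-1)(d+n)}{2d(n-1)}$.

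The arithmetic here is routine; the one point deserving care is the reduction at the start, namely that the object called ``the level-two ncSoS'' really is this simplified vector program with no $\rho_{uv}$ variables and no further PSD obstructions beyond the prescribed diagonal blocks — this is exactly what \cref{degree_one_terms_lemma} together with \eqref{gell-mann_prod_property} buy us. I would also recheck the feasibility of the explicit Gram matrix (positive semidefiniteness and correct diagonal), since an off-by-one in the $-\tfrac{1}{n-1}$ normalization would surface there. As a sanity check, the formula returns exactly $1$ at $n=d$, consistent with \cref{lemma_d_clique_energy}, and exceeds $1$ precisely when $n<d$, matching the motivating remark in the text.
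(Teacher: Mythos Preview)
Your proposal is correct and follows essentially the same approach as the paper: the upper bound via $\bigl\|\sum_{u}\ket{\Lambda^a_u}\bigr\|^2\ge 0$ is precisely the vector-program rendering of the paper's sum-of-squares inequality $\tr(\tilde\rho\,A^2)\ge 0$ with $A=\sum_u\Lambda^a_u$, and your explicit Gram matrix $\tfrac{2}{d}\,J\otimes I_{d^2-1}$ is the same simplex-based construction the paper gives (written block-diagonally there). The only cosmetic difference is that you argue PSD-ness via the tensor factorization while the paper permutes into $d^2-1$ identical simplex blocks.
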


Before we give the proof, we can observe that this is tight for \(n \geq d\) and \(d \mid n\). That is, this SDP gets the true optimal energy. The SDP overshooting for when \(d \nmid n\) is also true for the Frieze-Jerrum SDP.

\begin{proof}
    Let \(\{\ket{A}\ |\ A \in \mathcal{P}_d^n(1) \setminus \{I\}\}\) be a solution for the SDP, and let \(\Tilde{\rho}\) be the corresponding degree-two pseudo-density matrix as given by the correspondence, \(\braket{A|B} = \tr(\Tilde{\rho} AB)\). Also, let the labels of the vertices be given by \(V = [n]\). Recall that the level-2 ncSoS SDP is equivalent to \eqref{basicvecprog} without \(\rho_{uv}\) and constraints  \eqref{basicvecprog_loc_moment_and_M} through \eqref{basicvecprog_loc_moment_degree_1_terms2}.
    
    We first give an upper bound using the sum of squares proof technique. Namely, for \(r \geq 2\) non-identity basis matrices, \(\left(\Lambda^{a_i}_{u_i}\right)_{i \in [r]} \in \left(\mathcal{P}_d^n(1) \setminus \{I\}\right)^{\times r}\), that all pair-wise commute, we consider the degree-one matrix \(A = \sum_{i = 1}^r \Lambda^{a_i}_{u_i}\), and it's square.
    \[A^2 = \left(\sum_{i = 1}^r \Lambda^{a_i}_{u_i}\right)^2 = \sum_{i = 1}^r \left(\Lambda^{a_i}_{u_i}\right)^2 + 2 \sum_{i < j \in [r]}  \Lambda^{a_i}_{u_i} \Lambda^{a_j}_{u_j} = \frac{2r}{d} I + \mathcal{O} + 2 \sum_{i < j \in [r]}  \Lambda^{a_i}_{u_i} \Lambda^{a_j}_{u_j}\]
    Where \(\mathcal{O}\) is some linear combination of degree-one terms, which has \(\tr(\Tilde{\rho \mathcal{O}}) = 0\) by \cref{degree_one_terms_lemma}. Then we consider the following
    \begin{align*}
        0 &\leq \tr(\Tilde{\rho} A^2) = \frac{2r}{d} + 2 \tr\left(\Tilde{\rho} \sum_{i < j \in [r]}  \Lambda^{a_i}_{u_i} \Lambda^{a_j}_{u_j}\right) \\
        \Longrightarrow -\frac{r}{d} &\leq \tr\left(\Tilde{\rho} \sum_{i < j \in [r]}  \Lambda^{a_i}_{u_i} \Lambda^{a_j}_{u_j}\right) = \sum_{i < j \in [r]} \braket{\Lambda^{a_i}_{u_i}| \Lambda^{a_j}_{u_j}}
    \end{align*}
    Putting this into the expression for the value of the SDP, with \(r = n\), gives us that
    \begin{align*}
        \E_{(u,v) \sim E} \left[\frac{1}{2}\left(\frac{d-1}{d}\right) - \frac{1}{4} \sum_{a = 1}^{d^2-1}\braket{\Lambda^a_u | \Lambda^a_v}\right] &= \frac{1}{2}\left(\frac{d-1}{d}\right) - \frac{1}{4} \sum_{a = 1}^{d^2-1} \E_{(u,v) \sim E} \left[\braket{\Lambda^a_u | \Lambda^a_v}\right]\\
        &\leq \frac{1}{2}\left(\frac{d-1}{d}\right) - \frac{1}{4} \sum_{a = 1}^{d^2-1} \left(-\frac{2}{d(n-1)}\right) \\
        &= \frac{(d-1) (d+n)}{2 d (n-1)}
    \end{align*}

    We can then show that this is an equality by considering the following possible SDP matrix, \(M\), and in particular, its Cholesky decomposition, which gives SDP vectors that achieve the same value as above.

    \[M(\Lambda^a_u, \Lambda^b_v) = \begin{cases}
        \frac{2}{d} & \text{if } a = b \text { and } u = v \\
        -\frac{2}{d(n-1)} & \text{if } a = b \text { and } u \neq v \\
        0 & \text{otherwise}
    \end{cases}
    \]

    Finally, we can show that this matrix is PSD by observing that, up to some permutation of the rows and columns, it can be written as a block diagonal matrix, consisting of \(d^2-1\), blocks of size \(n \times n\), given by
    \begin{equation}
        \frac{2}{d} \begin{bmatrix}
            1 & -\frac{1}{n-1} & -\frac{1}{n-1} & \cdots & -\frac{1}{n-1} \\
            -\frac{1}{n-1} & 1 & -\frac{1}{n-1} & \cdots & -\frac{1}{n-1} \\
            -\frac{1}{n-1} & -\frac{1}{n-1} & 1 & \cdots & -\frac{1}{n-1} \\
            \vdots & \vdots & \vdots & \ddots & \vdots \\            
            -\frac{1}{n-1} & -\frac{1}{n-1} & -\frac{1}{n-1} & \cdots & 1 \\
        \end{bmatrix}
    \end{equation}
    which is nothing but the Gram matrix of the vectors in an \((n-1)\)-simplex, with the scalar \(\frac{2}{d}\). Meaning it and the matrix \(M\) are PSD and thus have a Cholesky decomposition. 
\end{proof}

\section{Proof of \texorpdfstring{\cref{lemma_d_clique_energy}}{Lemma \ref{lemma_d_clique_energy}}}\label{appendix_proof_lemma_d_clique_energy}

\begin{lemma}[Restatement of \cref{lemma_d_clique_energy}]
    The energy of \textQMCdProb with the complete graph on \(d\) vertices, \(K_d\), as it's interaction graph is 1.
\end{lemma}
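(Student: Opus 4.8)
The plan is to prove $\QMCdProb(K_d) \le 1$ and $\QMCdProb(K_d) \ge 1$ separately. The upper bound needs nothing new: each edge interaction $h_{uv}$ is an orthogonal projector, hence $h_{uv} \preccurlyeq I$ on $\H_d^{\otimes V}$, so $H_{K_d} = \E_{(u,v)\sim E} h_{uv} \preccurlyeq I$ and thus $\lambda_{\text{max}}(H_{K_d}) \le 1$ (this is also the trivial bound already recorded for arbitrary interaction graphs).

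For the matching lower bound I would exhibit an explicit unit vector of energy exactly $1$. Recall from \cref{QMCd_edge_interaction_def} that $h_{uv}$ is the projector onto the antisymmetric subspace of qudits $u$ and $v$, i.e., onto the $(-1)$-eigenspace of the swap $P((u\,v))$ acting on those two qudits. Hence any state $\ket{\psi} \in \H_d^{\otimes V}$ with $P((u\,v))\ket{\psi} = -\ket{\psi}$ automatically satisfies $h_{uv}\ket{\psi} = \ket{\psi}$. Identifying $V = [d]$, consider the totally antisymmetric subspace $\wedge^d \H_d$, which by the definition in \cref{section_prelims} consists of the vectors $\ket{\psi}$ with $P(\sigma)\ket{\psi} = \sgn(\sigma)\ket{\psi}$ for all $\sigma \in S_d$. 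Since $\dim \H_d = d$, this subspace is one-dimensional, spanned by the Slater-determinant state
\[
\ket{D} := \frac{1}{\sqrt{d!}} \sum_{\sigma \in S_d} \sgn(\sigma)\, \ket{\sigma(1)} \otimes \ket{\sigma(2)} \otimes \cdots \otimes \ket{\sigma(d)} .
\]
Because every transposition $(u\,v)$ has sign $-1$, we get $P((u\,v))\ket{D} = -\ket{D}$, and therefore $h_{uv}\ket{D} = \ket{D}$ for every edge of $K_d$. Consequently $\bra{D} H_{K_d} \ket{D} = \E_{(u,v)\sim E} \bra{D} h_{uv}\ket{D} = 1$, and combined with the upper bound this yields $\QMCdProb(K_d) = 1$.

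There is essentially no obstacle here; the argument only uses two standard facts: that $\wedge^d \H_d$ is one-dimensional (equivalently, that $\ket{D}$ as written is, up to normalization, the unique totally antisymmetric state), and that a state fixed by the two-qudit antisymmetric projector on a pair $u,v$ is precisely one that picks up a $-1$ under swapping $u$ and $v$. If one prefers to avoid invoking the one-dimensionality of $\wedge^d \H_d$, it suffices to check directly that the explicit $\ket{D}$ above is a unit vector satisfying $P((u\,v))\ket{D} = -\ket{D}$ for each transposition — a one-line re-indexing of the summation variable $\sigma$, since left multiplication by a fixed permutation is a bijection of $S_d$ and changes $\sgn(\sigma)$ by a fixed sign.
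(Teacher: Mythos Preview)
Your proof is correct and uses the same witness state as the paper --- the totally antisymmetric state on $d$ qudits --- but your verification that each $h_{uv}$ fixes this state is cleaner than the paper's. The paper explicitly rewrites $\ket{\Psi}$ as a sum over the alternating group $A_d$ of terms of the form $\ket{\psi_{\sigma(u),\sigma(v)}}_{u,v} \otimes \bigotimes_{k\neq u,v}\ket{\sigma(k)}_k$, where $\ket{\psi_{ij}}$ are the two-qudit singlet states, and then computes $\bra{\Psi} h_{uv}\ket{\Psi}$ term by term. You instead invoke the characterization of $h_{uv}$ as the projector onto the $(-1)$-eigenspace of the swap $P((u\,v))$, from which $h_{uv}\ket{D} = \ket{D}$ follows immediately from the defining antisymmetry $P(\sigma)\ket{D} = \sgn(\sigma)\ket{D}$. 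This buys you a one-line argument in place of a page of bookkeeping, and also makes transparent that the upper bound is saturated simultaneously on every edge; the paper's decomposition, on the other hand, is a bit more self-contained in that it does not appeal to the swap-operator description of the antisymmetric projector.
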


\begin{proof}
    For this, we consider the completely antisymmetric state on \(d\), \(d\)-dimensional qudits, defined as follows.
    \begin{equation}\label{compl_antisym_state}
        \ket{\Psi} = \frac{1}{\sqrt{d!}}\sum_{\sigma \in S_d} \sgn(\sigma) \ket{\sigma(1)} \otimes \ket{\sigma(2)} \otimes \cdots \otimes \ket{\sigma(d)}
    \end{equation}

    We note that this is the unique antisymmetric state (up to a phase) in \(\H_d^{\otimes d}\). We can then write this state using the \(\binom{d}{2}\) dimensional antisymmetric subspace of \(\H_d^{\otimes 2}\) (which we note, the projector onto which is precisely the \textQMCdProb edge interaction from \cref{QMCd_edge_interaction_def}). That is, for \(\{i,j\} \subseteq [d]\), we define \(\ket{\psi_{ij}} = \frac{1}{\sqrt{2}}\left(\ket{i} \otimes \ket{j} - \ket{j} \otimes \ket{i}\right)\) to be the unique antisymmetric state (up to a phase) in \(\text{span}\{\ket{i}, \ket{j}\}^{\otimes 2}\). Then, the completely antisymmetric state can be written as follows.

    \begin{equation}\label{compl_antisym_state2}
        \ket{\Psi} = \frac{\sqrt{2}}{\sqrt{d!}}\sum_{\sigma \in A_d} \ket{\psi_{\sigma(1),\sigma(2)}} \otimes \ket{\sigma(3)} \otimes \cdots \otimes \ket{\sigma(d)}
    \end{equation}

    We prove this by considering the bijective function \(f: S_d \ra S_d, \sigma \mapsto (\sigma(1)\ \sigma(2)) \circ \sigma\) (\(f\) is a bijection because it is its own inverse: \(f(f(\sigma)) = (\sigma(2)\ \sigma(1)) \circ (\sigma(1)\ \sigma(2)) \circ \sigma = \sigma\) for all \(\sigma \in S_d\)). Next, we observe that when restricted to \(A_d\), the alternating group, we get \(\im(f |_{A_d}) = (1\ 2)A_d = \{\sigma \in S_d\ |\ \sgn(\sigma) = -1\}\), the co-set of odd permutations. Finally, we can do the following.
    
    \begin{align*}
        \ket{\Psi} &= \frac{1}{\sqrt{d!}}\sum_{\sigma \in S_d} \sgn(\sigma) \ket{\sigma(1)} \otimes \ket{\sigma(2)} \otimes \cdots \otimes \ket{\sigma(d)} \\
        &= \frac{1}{\sqrt{d!}}\left(\sum_{\sigma \in A_d} \ket{\sigma(1)} \otimes \ket{\sigma(2)} \otimes \cdots \otimes \ket{\sigma(d)} - \sum_{\sigma \in (1\ 2) A_d} \ket{\sigma(1)} \otimes \ket{\sigma(2)} \otimes \cdots \otimes \ket{\sigma(d)}\right) \\
        &= \frac{1}{\sqrt{d!}}\sum_{\sigma \in A_d} \bigg(\ket{\sigma(1)} \otimes \ket{\sigma(2)} \otimes \cdots \otimes \ket{\sigma(d)} - \ket{f(\sigma)(1)} \otimes \ket{f(\sigma)(2)} \otimes \cdots \otimes \ket{f(\sigma)(d)}\bigg) \\
        &= \frac{1}{\sqrt{d!}}\sum_{\sigma \in A_d} \bigg(\ket{\sigma(1)} \otimes \ket{\sigma(2)} \otimes \cdots \otimes \ket{\sigma(d)} - \ket{\sigma(2)} \otimes \ket{\sigma(1)} \otimes \cdots \otimes \ket{\sigma(d)}\bigg) \\
        &= \frac{1}{\sqrt{d!}}\sum_{\sigma \in A_d}  \bigg(\ket{\sigma(1)} \otimes \ket{\sigma(2)} - \ket{\sigma(2)} \otimes \ket{\sigma(1)}\bigg) \otimes \ket{\sigma(3)} \otimes \cdots \otimes \ket{\sigma(d)} \\
        &= \frac{\sqrt{2}}{\sqrt{d!}}\sum_{\sigma \in A_d}  \ket{\psi_{\sigma(1),\sigma(2)}} \otimes \ket{\sigma(3)} \otimes \cdots \otimes \ket{\sigma(d)} \\
    \end{align*}

    If we fix a \(\{u,v\} \subseteq [n]\), then we can do the same process to get the following by using the function \(f: S_d \ra S_d, \sigma \mapsto (\sigma(u)\ \sigma(v)) \circ \sigma\).
    \begin{equation}\label{compl_antisym_state3}
        \ket{\Psi} = \frac{\sqrt{2}}{\sqrt{d!}}\sum_{\sigma \in A_d} \ket{\psi_{\sigma(u),\sigma(v)}}_{u,v} \otimes \bigotimes_{k \in [d] \setminus \{u,v\}} \ket{\sigma(k)}_k
    \end{equation}
    
Finally, we can consider the energy of \(\ket{\Psi}\) on the \textQMCdProb Hamiltonian \(H_{K_d} = \frac{2}{d(d-1)} \sum_{u \neq v \in V} h_{uv}\). 

\begin{align*}
    \bra{\Psi} h_{uv} \ket{\Psi} &= \frac{2}{d!}\left(\sum_{\sigma \in A_d} \bra{\psi_{\sigma(u),\sigma(v)}}_{u,v} \otimes \bigotimes_{k \in [d] \setminus \{u,v\}} \bra{\sigma(k)}_k\right) h_{uv} \left(\sum_{\sigma \in A_d} \ket{\psi_{\sigma(u),\sigma(v)}}_{u,v} \otimes \bigotimes_{k \in [d] \setminus \{u,v\}} \ket{\sigma(k)}_k\right) \\
    &= \frac{2}{d!}\sum_{\sigma \in A_d} \bra{\psi_{\sigma(u),\sigma(v)}}_{u,v}\, h_{uv} \ket{\psi_{\sigma(u),\sigma(v)}}_{u,v}\\
    &=  \frac{2}{d!}\sum_{\sigma \in A_d} 1 = 1\\
\end{align*}
And thus, we have that 
\(\bra{\Psi}H_{K_d}\ket{\Psi} = \E_{(u,v) \sim E} \bra{\Psi} h_{uv} \ket{\Psi} = 1\)
\end{proof}

\end{document}